\documentclass[conference]{IEEEtran}
\IEEEoverridecommandlockouts
% The preceding line is only needed to identify funding in the first footnote. If that is unneeded, please comment it out.
\hyphenation{op-tical net-works semi-conduc-tor}
\usepackage[utf8]{inputenc}
\usepackage[utf8]{inputenc} 
\usepackage[T1]{fontenc}
\usepackage{url}
\usepackage{ifthen}
\usepackage{cite}
\usepackage[cmex10]{amsmath} % Use the [cmex10] option to ensure complicance
                             % with IEEE Xplore (see bare_conf.tex)

%% Please note that the amsthm package must not be loaded with
%% IEEEtran.cls because IEEEtran provides its own versions of
%% theorems. Also note that IEEEXplore does not accepts submissions
%% with hyperlinks, i.e., hyperref cannot be used.

\interdisplaylinepenalty=2500 % As explained in bare_conf.tex

\usepackage{cite}
\usepackage{amsmath,amssymb,amsfonts}
\usepackage{mathtools}
\usepackage{amsthm}
\usepackage{algorithmic}
\usepackage{graphicx}
\usepackage{textcomp}
\usepackage{tikz}
\usepackage{caption}
\usepackage{cuted}
\usepackage{romannum}
\usepackage[utf8]{inputenc}
\usepackage{pgfplots} 
\usepackage{pgfgantt}
\usepackage{pdflscape}
\usepackage{amssymb}
\usepackage{comment}
\usepackage{pst-plot}
\usetikzlibrary{spy}
\usetikzlibrary{positioning,calc}
\usetikzlibrary{decorations.pathmorphing,calc,shapes,shapes.geometric,patterns}
\usetikzlibrary{shapes.multipart}
\usepackage{xfrac}
\usepackage{colortbl}
\usepackage{cancel}
\usetikzlibrary{arrows,positioning,calc,intersections}
\usetikzlibrary{datavisualization.formats.functions}
\def\BibTeX{{\rm B\kern-.05em{\sc i\kern-.025em b}\kern-.08em
    T\kern-.1667em\lower.7ex\hbox{E}\kern-.125emX}}
    
\usepackage{pgfplots}
\usepgfplotslibrary{fillbetween}
\usetikzlibrary{arrows, decorations.markings}
\newtheorem{theorem}{Theorem}
\newtheorem{lemma}{Lemma}

\newtheorem{definition}{Definition}

\newcommand{\setd}{\ensuremath{\mathcal{D}}}

\newcommand{\bs}[1]{\boldsymbol{#1}}

% Macros:
            % \eg{eq:golomb}  --> Equation (2.15)
                   % \eg{eq:golomb}  --> (2.15)
    %\newcommand{\fig}[1]{Figure \ref{#1}}           % \fig{fig:golomb}--> Figure 2.15
                % \tab{tab:lala}  --> Table 2.15
% Abbreviations

\newcommand{\mc}{\mathcal}

\definecolor{calpolypomonagreen}{rgb}{0.12, 0.3, 0.17}
%\newcounter{prop}
%\newenvironment{proposition}{\refstepcounter{prop}\begin{trivlist}\item \large{\textbf{Proposition}}}{\end{trivlist}}
\newcounter{remarkcount}
\newenvironment{remark}{\refstepcounter{remarkcount}\begin{trivlist}\item \textbf{Remark \theremarkcount.}}{\end{trivlist}}
\newcommand{\circlearrow}{}% just in case
\DeclareRobustCommand{\circlearrow}{%
  \mathrel{\vphantom{\rightarrow}\mathpalette\circle@arrow\relax}%
}
\newcommand{\circle@arrow}[2]{%
  \m@th
  \ooalign{%
    \hidewidth$#1\circ\mkern1mu$\hidewidth\cr
    $#1-$\cr}%
}
\makeatother
\let\emptyset\varnothing
\usepackage{amsmath, amssymb, amsfonts, amsthm}
\usepackage{bm}
\usepackage{xcolor}
\usepackage{framed}

\usepackage{pgf}
\usepackage{tikz}
\usetikzlibrary{arrows,automata}
\usetikzlibrary{positioning}
\usetikzlibrary{decorations.pathmorphing}
\usetikzlibrary{shapes.geometric}
\usetikzlibrary{fit}
\usetikzlibrary{backgrounds}

\setlength{\marginparwidth}{2cm}
\newcommand{\mbf}{\mathbf}
\newcommand{\mbb}{\mathbb}

\newcommand{\vect}{\mathrm{vec}}

\theoremstyle{definition}

\theoremstyle{remark}

 % Use theorem counter as `parent`
\setlength{\columnsep}{0.2 in}
\def\BibTeX{{\rm B\kern-.05em{\sc i\kern-.025em b}\kern-.08em
    T\kern-.1667em\lower.7ex\hbox{E}\kern-.125emX}}
\begin{document}
\onecolumn
\title{ A Rigorous Proof  of the Capacity of MIMO Gauss-Markov Rayleigh Fading Channels\\
\thanks{H.\ Boche and M. Wiese were supported by the Deutsche Forschungsgemeinschaft (DFG, German
Research Foundation) within the Gottfried Wilhelm Leibniz Prize under Grant BO 1734/20-1, and within Germany’s Excellence Strategy EXC-2111—390814868 and EXC-2092 CASA-390781972. C.\ Deppe was supported in part by the German Federal Ministry of Education and Research (BMBF) under Grant 16KIS1005. 
H. Boche and R. Ezzine were supported by the German Federal Ministry of Education and Research (BMBF) under Grant 16KIS1003K.}
}

\author{
\IEEEauthorblockN{Rami Ezzine\IEEEauthorrefmark{1}, Moritz Wiese\IEEEauthorrefmark{1}\IEEEauthorrefmark{3}, Christian Deppe\IEEEauthorrefmark{2} and Holger Boche\IEEEauthorrefmark{1}\IEEEauthorrefmark{3}\IEEEauthorrefmark{4}}
\IEEEauthorblockA{\IEEEauthorrefmark{1}Technical University of Munich, Chair of Theoretical Information Technology, Munich, Germany\\
\IEEEauthorrefmark{2}Technical University of Munich, Institute for Communications Engineering,  Munich, Germany\\
\IEEEauthorrefmark{3}CASA -- Cyber Security in the Age of Large-Scale Adversaries–
Exzellenzcluster, Ruhr-Universit\"at Bochum, Germany\\
\IEEEauthorrefmark{4}Munich Center for Quantum Science and Technology (MCQST), Schellingstr. 4, 80799 Munich, Germany\\
Email: \{rami.ezzine, wiese, christian.deppe, boche\}@tum.de}
}
\maketitle
\thispagestyle{plain}
\pagenumbering{arabic}
\pagestyle{plain}
\begin{abstract}
We investigate the problem of message transmission over time-varying  single-user multiple-input multiple-output (MIMO)  Rayleigh fading channels with average power constraint and with complete channel state information available at the receiver side (CSIR). To describe the channel variations over the time, we consider a first-order Gauss-Markov model. We completely solve the problem by giving a single-letter characterization of the channel capacity in closed form and by providing a rigorous proof of it.

\end{abstract}

\begin{IEEEkeywords}
Gauss-Markov Rayleigh fading channels, channel capacity, multiple-antenna channels
\end{IEEEkeywords}
\section{Introduction}
In many new applications in modern wireless communications such as several machine-to-machine
and human-to-machine systems, the tactile internet \cite{tactiles} and industry 4.0 \cite{industrie40}, robust and ultra-reliable low latency information exchange is required.
These applications impose challenges on the robustness requirement because of the time-varying nature of the channel conditions caused by the mobility and the changing wireless medium.

Several accurate tractable channel models are employed to model the channel variations appearing in wireless communications including the Markov model, often employed in flat fading and inter-symbol interference \cite{Gallager}. The Markov model is widely used for modeling wireless flat-fading channels 
due to its low memory and its consolidated theory.

The availability and quality of channel state information
(CSI) has a high influence on the capacity of the Markov channels.
Over the past decades, many researchers have addressed the problem of communication over finite-state Markov channels (FSMCs) \cite{survey} and
extensive studies have been performed to analyze the capacity of FSMCs in closed form under the assumption of the availability of partial/complete channel state information at the sender and/or the receiver side\cite{Goldsmith,capacityanalysis,communicationsreceiverbased,firstordermarkov,finitestatewithfeedback,finitestaterayleighfading,markovdelayedfeedback}. 

In our work, the focus is on continuously time-varying Markov channels, which  are  of high relevance  for practical systems. 
In particular, we are concerned with the time-varying single-user multiple-input multiple-output (MIMO) Rayleigh fading channels, where we assume that the statistics of the gain sequence are known to both the sender and the receiver and that the actual realization of the channel state sequence is completely known to the receiver only (CSIR). Therefore, the state sequence is viewed as a second output sequence of the channel. 
We further assume that the channel fades are modeled as a first-order Gauss-Markov process, which is widely used to describe the time-varying aspect of the channel\cite{mimorayleighfading,gaussmarkov1,gaussmarkov2,gaussmarkov3}. The focus is on the  multiple-antenna setting which has drawn considerable attention
in the area of wireless communications because MIMO systems offer higher rates and more
reliability and resistance to interference, compared to single-input single-output (SISO) systems\cite{mimobenefits}.

To the best of our knowledge, no rigorous proof of the capacity of MIMO Gauss-Markov fading channels with CSIR is provided in the literature. A single-letter expression for the capacity is provided in \cite{telatar} in the case when the channel fades are independent and identically distributed (i.i.d.). 
Other than that, only the proof of a general formula based on the inf-information
rate for the capacity which can be generalized for arbitrary channels with abstract alphabets is provided in \cite{verduhan}.

The main contribution of our work is to give a single-letter expression of the capacity of MIMO Gauss-Markov fading channels with average power constraint and to provide a rigorous proof of it.

\quad \textit{ Paper Outline:} The rest of the paper is organized as follows. In Section \ref{sec2}, we present the channel model, provide the key definitions and the main and auxiliary results. In Section \ref{proofthm}, we provide a rigorous proof of the  capacity of  time-varying multi-antenna Rayleigh fading channels with CSIR. Section \ref{prooflemma} is devoted to deriving an upper-bound on the variance of the normalized information density between the inputs and the outputs of the time-varying MIMO Rayleigh fading channel. This auxiliary result is used in the proof of the capacity formula.
Section \ref{conclusion} contains concluding remarks and proposes
potential future research in this field. Several auxiliary lemmas are collected in the
Appendix.

\quad \textit{Notation:}  $\mathbb{C}$ denotes the set of complex numbers and $\mbb R$ denotes the set of real numbers; $H(\cdot)$ and $h(\cdot)$  correspond to the entropy and the differential entropy function, respectively; $I(\cdot;\cdot)$ denotes the mutual information between two random variables. All information
quantities are taken to base 2. Throughout the paper,  $\log$ is taken to  base 2.  The natural exponential and the natural logarithm are denoted by $\exp$ and $\ln$, respectively. For any random variables $X$ and $Y$ whose joint probability law has a density $p_{X,Y}(x,y),$ we denote their marginal probability density function by $p_{X}(x)$ and $p_{Y}(y),$ respectively,  and their conditional probability density functions by $p_{X|Y}(x|y)$ and $p_{Y|X}(y|x).$ 
For any random variables $X$, $Y$ and $Z$, we use the notation $\color{black}X \circlearrow{Y} \circlearrow{Z}\color{black}$ to indicate a Markov chain. 
$|\mathcal{K}|$ stands for the cardinality of the set $\mathcal{K}.$ $\mathrm{tr}$ refers to the trace operator. For any matrix $\mbf A,$ $\lVert \mbf A\rVert$ stands for the operator norm of $\mbf A$ with respect to the Euclidean norm,  $\mbf A^{H}$ stands for the standard Hermitian transpose of $\mbf A,$  $\vect\left(\mbf A\right)$ refers to the vectorization of $\mbf A,$ $\lambda_{\max}(\mbf A)$ refers to the maximum eigenvalue
of $ \mbf A$ and $\lambda_{\min}(\mbf A)$ refers to the its minimum eigenvalue. For any matrix $\mbf A$ and $\mbf B,$ we use the notation $\mbf A \preceq \mbf B$ to indicate that $\mbf B-\mbf A$ is positive semi definite.
For any vector $\bs{X},$ $\bs{X}^{T}$ refers to its transpose.
For any random matrix $\mbf A \in \mathbb{C}^{m \times n}$ with entries $\mbf A_{i,j}$ $i=1,\hdots,m,j=1,\hdots, n,$ we define
\[
\mbb E \left[\mbf A\right] = \begin{bmatrix} 
    \mbb E\left[\mbf A_{11}\right] &  \mbb E\left[\mbf A_{12}\right] & \dots \\
    \vdots & \ddots & \\
    \mbb E\left[\mbf A_{m1}\right] &        &  \mbb E\left[\mbf A_{mn}\right]
    \end{bmatrix}.
\]

For any integer $m,$ $\mathcal{Q}_{(P,m)}$ is defined to be the set of positive semi-definite Hermitian matrices which are elements of
$\mbb C^{m\times m}$ and whose trace is smaller than or equal to $P.$ 

\section{Channel Model, Definitions and  Results}
\label{sec2}
\subsection{Channel Model}
For any block-length $n$, we consider the following channel model for the time-variant fading channel $W_{\mbf G^n}$
\begin{align}
\bs{z}_{i}=\mbf G_i\bs{t}_{i}+\bs{\xi}_{i} \quad i=1\hdots n,  \label{channelmodel.}
\end{align}
where $\bs{t}^n=(\bs{t}_1,\hdots,\bs{t}_n)\in\mbb C^{N_{T} \times n}$ and $ \bs{z}^n=(\bs{z}_1,\hdots,\bs{z}_n)\in \mbb C^{N_{R} \times n}$ are channel input and output blocks, respectively, and where $N_T$ and $N_R$ refer to the number of transmit and receive antennas, respectively.

Here, $\mbf G^n=\mbf G_1\hdots \mbf G_n$, where $\mbf G_i$ models the gain for the $i^{th}$ channel use.
We consider the following model for the gain.
For $0\leq \alpha <1:$
\color{black}
\begin{align} 
 \mbf G_i=\sqrt{\alpha}\mbf G_{i-1} +\sqrt{1-\alpha}\mbf W_i, \quad i=2\hdots n.
    \label{gainmodel}
\end{align}
\color{black}
We assume that $\mbf G_1$ and $\mbf W_i, i=2\hdots n,$ are i.i.d., where $\mbf G_1$ and $\mbf W_i, i=2\hdots n,$  have i.i.d. entries and where $\vect{(\mbf G_1)},\vect{(\mbf W_i)}, i=2\hdots n$ are drawn from $\mc N_{\mbb C}\left(\bs{0}_{N_RN_T},\mbf I_{N_RN_T}\right).$
Therefore, the sequence of $\mbf G_i, i=1\hdots n,$ forms a Markov chain.
$\bs{\xi}^n=(\bs{\xi}_1,\hdots,\bs{\xi}_n)\in \mathbb{C}^{N_{R} \times n}$ models the noise sequence.
We further assume that the $\bs{\xi}_{i}s$ are i.i.d., where $\bs{\xi}_{i} \sim \mathcal{N}_{\mathbb{C}}\left(\mbf 0_{N_{R}},\sigma^2\mbf I_{N_{R}}\right),\ i=1\hdots n,$ that $\mbf G^{n}$ and $\bs{\xi}^n$ are mutually independent and that $(\mbf G^{n},\bs{\xi}^n)$ is independent of the random input sequence $\bs{T}^n=(\bs{T}_1,\hdots,\bs{T}_n).$
It is also assumed that both the sender and the receiver know the statistics of the random gain sequence $\mbf G^n$ and that only the receiver knows its actual realization (CSIR). Therefore, $\mbf G^n$ is viewed as a second output sequence of the fading channel. 
\begin{remark}
It follows from \eqref{gainmodel} that all fades are i.i.d. for $\alpha=0.$ This scenario has been already treated in \cite{telatar}.
\end{remark}
\color{black}
\subsection{Properties of the random gain sequence}
 In the following lemmas, we present some properties of the random gain in  \eqref{gainmodel}.
\begin{lemma}
\label{inductionformula}
For $0<\alpha<1$ and $i\in \{1\hdots n\},$
    \begin{align}
     \mbf G_i=\sqrt{\alpha}^{i-1}\mbf G_1+\sqrt{1-\alpha}\sum_{j=2}^{i}\sqrt{\alpha}^{i-j}\mbf W_j.
    \nonumber \end{align}
\end{lemma}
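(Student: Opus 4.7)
The plan is to prove this by straightforward induction on $i$, directly unwinding the one-step recursion in \eqref{gainmodel}. The base case $i=1$ is trivial: the sum on the right is empty, the factor $\sqrt{\alpha}^{0}$ equals $1$, and the identity reduces to $\mbf G_1=\mbf G_1$. (One could equally take $i=2$ as the base case, which recovers \eqref{gainmodel} with $i=2$ directly.)

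For the inductive step, assume the formula holds at index $i-1$, i.e.
\[
\mbf G_{i-1}=\sqrt{\alpha}^{\,i-2}\mbf G_1+\sqrt{1-\alpha}\sum_{j=2}^{i-1}\sqrt{\alpha}^{\,i-1-j}\mbf W_j.
\]
Substituting into the defining recursion $\mbf G_i=\sqrt{\alpha}\,\mbf G_{i-1}+\sqrt{1-\alpha}\,\mbf W_i$ from \eqref{gainmodel}, distributing the factor $\sqrt{\alpha}$ through the induction hypothesis, and then absorbing the extra term $\sqrt{1-\alpha}\,\mbf W_i$ as the $j=i$ summand (for which $\sqrt{\alpha}^{\,i-j}=1$), I obtain
\[
\mbf G_i=\sqrt{\alpha}^{\,i-1}\mbf G_1+\sqrt{1-\alpha}\sum_{j=2}^{i}\sqrt{\alpha}^{\,i-j}\mbf W_j,
\]
which is exactly the claim at index $i$.

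There is essentially no obstacle to overcome here; the lemma is a bookkeeping identity, and the proof is a two-line induction. The only thing to be mildly careful about is the edge case $i=1$, where the summation has no terms and must be read as $0$, so that the identity correctly collapses to $\mbf G_1$. The hypothesis $0<\alpha<1$ is not actually needed for the algebra — it is only assumed so that the formula is nondegenerate and consistent with the standing model (the case $\alpha=0$ having been handled separately in the preceding remark and in \cite{telatar}).
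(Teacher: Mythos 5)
Your proof is correct and matches the paper's own argument: both proceed by induction on $i$, substitute the one-step recursion \eqref{gainmodel}, distribute the factor $\sqrt{\alpha}$, and absorb $\sqrt{1-\alpha}\,\mbf W_i$ as the $j=i$ summand. The only cosmetic difference is that the paper steps from $k$ to $k+1$ while you step from $i-1$ to $i$.
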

\color{black}
\begin{proof}
We will proceed by induction.
\underline{Base Case:} Clearly, the statement of the Lemma holds for $i=1$\\~\\
\underline{Inductive step:} Show that for any $k\geq 2$, if the statement of the lemma holds for $i=k$ then it holds for $i=k+1.$\\~\\
Assume that the statement of the lemma holds for $i=k,$ then we have
\begin{align}
    \mbf G_{k}=\sqrt{\alpha}^{k-1}\mbf G_1+\sqrt{1-\alpha}\sum_{j=2}^{k}\sqrt{\alpha}^{k-j}\mbf W_j. \nonumber
 \end{align}
It follows that
\begin{align}
    &\mbf G_{k+1} \nonumber \\
    &\overset{(a)}{=}\sqrt{\alpha}\mbf G_{k} +\sqrt{1-\alpha}\mbf W_{k+1} \nonumber\\
    &\overset{(b)}{=}\sqrt{\alpha}\left[ \sqrt{\alpha}^{k-1}\mbf G_1+\sqrt{1-\alpha}\sum_{j=2}^{k}\sqrt{\alpha}^{k-j}\mbf W_j \right] +\sqrt{1-\alpha}\mbf W_{k+1} \nonumber \\
    &=\sqrt{\alpha}^{k}\mbf G_1+\sqrt{1-\alpha}\sum_{j=2}^{k}\sqrt{\alpha}^{k+1-j}\mbf W_j +\sqrt{1-\alpha}\mbf W_{k+1} \nonumber\\
    &=\sqrt{\alpha}^{k}\mbf G_1+\sqrt{1-\alpha}\sum_{j=2}^{k}\sqrt{\alpha}^{k+1-j}\mbf W_j+\sqrt{1-\alpha}\sqrt{\alpha}^{k+1-(k+1)}\mbf W_{k+1} \nonumber\\
    &=\sqrt{\alpha}^{k}\mbf G_1+\sqrt{1-\alpha}\sum_{j=2}^{k+1}\sqrt{\alpha}^{k+1-j}\mbf W_j, \nonumber
\end{align}
where $(a)$ follows from \eqref{gainmodel} and $(b)$ follows from the induction assumption.
Thus, the statement of the lemma holds for $i=k+1.$\\~\\
\underline{Conclusion}: Since both the base case and the inductive step have been proved as true, by mathematical induction the statement of the lemma holds for every $i=1\hdots n$. 
\end{proof}
\color{black}
\begin{lemma}
\label{distributiongain}
$\forall i\in\{1,\hdots,n\},$ it holds that
\begin{align}
    \vect\left(\mbf G_i\right) \sim \mc N_{\mbb C}\left(\mbf 0_{N_{R}N_{T}},\mbf I_{N_{R}N_{T}}   \right),
\nonumber \end{align}
where $\mbf G_i, i=1\hdots n$ is defined in \eqref{gainmodel} with $0\leq \alpha<1.$
\end{lemma}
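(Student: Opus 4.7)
My plan is to prove the distributional claim by a straightforward induction on $i$, directly using the recursion \eqref{gainmodel} rather than the closed-form expansion of Lemma \ref{inductionformula}. The key facts I will rely on are (i) that a linear combination of independent circularly symmetric complex Gaussian vectors is again circularly symmetric complex Gaussian, and (ii) that the recursion weights $\sqrt{\alpha}$ and $\sqrt{1-\alpha}$ satisfy $\alpha+(1-\alpha)=1$, which is exactly what is needed to propagate the identity covariance from one step to the next.

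For the base case $i=1$, the statement is true by the assumption that $\vect(\mbf G_1)\sim\mc N_{\mbb C}(\mbf 0_{N_RN_T},\mbf I_{N_RN_T})$. For the inductive step, I assume the claim for index $i$, use the linearity of $\vect(\cdot)$, and apply \eqref{gainmodel} to write
\[
\vect(\mbf G_{i+1}) = \sqrt{\alpha}\,\vect(\mbf G_i) + \sqrt{1-\alpha}\,\vect(\mbf W_{i+1}).
\]
Because $\mbf G_i$ is a deterministic (Borel) function of $\mbf G_1,\mbf W_2,\ldots,\mbf W_i$, while $\mbf W_{i+1}$ is independent of all of these by the model assumption, the two terms on the right are independent. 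Each is a zero-mean circularly symmetric complex Gaussian vector in $\mbb C^{N_RN_T}$, hence so is their sum, and the covariance matrix evaluates to $\alpha\,\mbf I_{N_RN_T}+(1-\alpha)\,\mbf I_{N_RN_T}=\mbf I_{N_RN_T}$. This closes the induction. The boundary value $\alpha=0$ is covered by the same reasoning, since the recursion then collapses to $\mbf G_i=\mbf W_i$ for $i\geq 2$.

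I do not expect any real obstacle here. As a sanity check, one could instead invoke Lemma \ref{inductionformula} and verify the covariance in one shot, which reduces to the telescoping geometric identity $\alpha^{i-1}+(1-\alpha)\sum_{j=2}^{i}\alpha^{i-j}=\alpha^{i-1}+(1-\alpha)\cdot\frac{1-\alpha^{i-1}}{1-\alpha}=1$; both routes give the same conclusion. The only subtlety worth noting is that this lemma asserts only the marginal law of each $\vect(\mbf G_i)$ and deliberately says nothing about the joint distribution of $(\mbf G_1,\ldots,\mbf G_n)$—which is appropriate, since the joint law is obviously not i.i.d.\ for $\alpha>0$, and the marginal statement is all that is needed later in the capacity argument.
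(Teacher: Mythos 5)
Your proof is correct, and it takes a genuinely different route from the paper. The paper first establishes the closed-form expansion $\mbf G_i=\sqrt{\alpha}^{\,i-1}\mbf G_1+\sqrt{1-\alpha}\sum_{j=2}^{i}\sqrt{\alpha}^{\,i-j}\mbf W_j$ (Lemma \ref{inductionformula}), then represents $\mbf G_i$ as a sum of $i$ mutually independent scaled Gaussians $\mbf G_1',\mbf W_2',\ldots,\mbf W_i'$ and verifies in one shot, via the geometric-series identity $\alpha^{i-1}+(1-\alpha)\sum_{j=2}^{i}\alpha^{i-j}=1$, that the covariance collapses to the identity. You instead induct directly on the one-step recursion \eqref{gainmodel}: $\vect(\mbf G_{i+1})=\sqrt{\alpha}\,\vect(\mbf G_i)+\sqrt{1-\alpha}\,\vect(\mbf W_{i+1})$, noting that $\mbf G_i$ is $\sigma(\mbf G_1,\mbf W_2,\ldots,\mbf W_i)$-measurable and hence independent of $\mbf W_{i+1}$, so the covariance propagates as $\alpha\mbf I+(1-\alpha)\mbf I=\mbf I$. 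Your argument is more self-contained and slightly more elementary, since it never invokes Lemma \ref{inductionformula} or the telescoping sum; the paper's argument front-loads the closed form (which it needs anyway for Lemma \ref{twoindexgainformula} and the decay estimates in the variance bound), so the two lemmas share the same machinery. The independence justification in your inductive step is exactly the point that must be made explicit, and you make it; you are also right that the lemma asserts only the marginal law, which is what is used downstream.
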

\color{black}
\begin{proof}
Clearly, the statement of the lemma holds for $\alpha=0.$ Now, let $0<\alpha<1.$ 
The statement of the lemma is valid for $i=1$.
 Let $i\in \{2,\hdots,n\}$ be fixed arbitrarily.
Let $\mbf G'_1=\sqrt{\alpha}^{i-1}\mbf G_1$ and $\mbf W'_j=\sqrt{1-\alpha}\sqrt{\alpha}^{i-j}\mbf W_j$ for every $j\in\{2,\hdots,i\}.$
Since $\mbf G_1$ and $\mbf W_j, j=2,\hdots,n$ are  independent, it follows that $\mbf G'_1$ and $\bs W'_j, j=2,\hdots,n$ are also independent.
Since $\vect\left( \mbf G_1\right) \sim \mc N_{\mbb C}\left(\bs{0}_{N_{R}N_{T}},\mbf I_{N_{R} N_{T}}\right)$ and $\vect\left(\mbf W_j\right) \sim \mc N_{\mbb C}\left(\mbf 0_{N_{R}N_{T}},\mbf I_{N_{R}N_{T}}\right)$ for every $j\in\{2,\hdots i\},$ it follows that
\begin{align}
    \vect\left(\mbf G'_1\right) \sim \mc N_{\mbb C}\left(\mbf 0_{N_{R}N_{T}}, \alpha^{i-1}\mbf I_{N_{R} N_{T}}   \right)
\nonumber \end{align}
and that for every $j \in \{2,\hdots,i\}$
\begin{align}
    \vect\left(\bs W'_j\right) \sim \mc N_{\mbb C}\left(\mbf 0_{N_{R} N_{T}},\left(1-\alpha\right)\alpha^{i-j} \mbf I_{N_{R}N_{T}}   \right).
\nonumber \end{align}
Now, from Lemma \ref{inductionformula}, it follows that 
\begin{align}
    \mbf G_i =\mbf G'_1 +\sum_{j=2}^{i} \mbf W'_j.
\nonumber \end{align}
As a result,
\begin{align}
   \vect\left( \mbf G_i \right) \sim \mc N_{\mbb C}\left(\mbf 0_{N_{R} N_{T}},\left[\alpha^{i-1}+\left(1-\alpha\right)\sum_{j=2}^{i} \alpha^{i-j}  \right]\mbf I_{N_{R}N_{T}}  \right).  
\nonumber \end{align}
For $0<\alpha<1,$ we have
\begin{align}
    \sum_{j=2}^{i} \alpha^{i-j} 
    &=\alpha^{i}\sum_{j=2}^{i}(\frac{1}{\alpha})^{j} \nonumber \\
    &=\alpha^{i}(\frac{1}{\alpha})^{2}\frac{1-(\frac{1}{\alpha})^{i-1}}{1-\frac{1}{\alpha}} \nonumber \\
    &=\frac{\alpha^{i}-\alpha}{\alpha^2-\alpha} \nonumber \\
    &=\frac{1-\alpha^{i-1}}{1-\alpha}.\nonumber
\nonumber \end{align}
It follows that
\begin{align}
    \alpha^{i-1}+\left(1-\alpha\right)\sum_{j=2}^{i}\alpha^{i-j}=1.
\nonumber \end{align} 
This yields
    \begin{align}
   \vect\left( \mbf G_i \right) \sim \mc N_{\mbb C}\left(\mbf 0_{N_{R} N_{T}},\mbf I_{N_{R} N_{T}} \right)   \quad \forall i \in \{1,\hdots n\}.
\nonumber \end{align}
\end{proof}
\color{black}
\begin{lemma}
\label{twoindexgainformula}
Let $i_1,i_2\in \{1,\hdots, n\}.$ Assume without loss of generality that $i_1<i_2.$ We consider the gain model presented in \eqref{gainmodel}. Then, for $0<\alpha<1,$  it holds that
\begin{align}
    \mbf G_{i_{2}}=\sqrt{\alpha}^{i_{2}-i_{1}}\mbf G_{i_{1}}+\sqrt{1-\alpha}\sum_{j=i_{1}+1}^{i_{2}} \sqrt{\alpha}^{i_{2}-j} \mbf W_j.
\nonumber \end{align}
\end{lemma}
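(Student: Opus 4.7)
The plan is to prove this by induction on $i_2$, with $i_1$ fixed arbitrarily in $\{1,\hdots,n-1\}$. This is a direct generalization of Lemma \ref{inductionformula} (which is the special case $i_1=1$), so the same style of argument using the recursion \eqref{gainmodel} as the "one-step" update should work. I first fix $i_1\in\{1,\hdots,n-1\}$ and then run induction on $i_2\in\{i_1+1,\hdots,n\}$.

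For the base case $i_2=i_1+1$, the claim reduces directly to \eqref{gainmodel}: the prefactor $\sqrt{\alpha}^{i_2-i_1}$ becomes $\sqrt{\alpha}$, and the sum collapses to the single term $\sqrt{1-\alpha}\sqrt{\alpha}^{\,0}\mbf W_{i_1+1} = \sqrt{1-\alpha}\,\mbf W_{i_1+1}$, matching $\mbf G_{i_1+1}=\sqrt{\alpha}\mbf G_{i_1}+\sqrt{1-\alpha}\mbf W_{i_1+1}$.

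For the inductive step, assume the formula holds for some $i_2=k\geq i_1+1$. Apply \eqref{gainmodel} to write $\mbf G_{k+1}=\sqrt{\alpha}\mbf G_{k}+\sqrt{1-\alpha}\mbf W_{k+1}$, substitute the induction hypothesis for $\mbf G_k$, distribute the leading $\sqrt{\alpha}$ (turning $\sqrt{\alpha}^{k-i_1}$ into $\sqrt{\alpha}^{k+1-i_1}$ and $\sqrt{\alpha}^{k-j}$ into $\sqrt{\alpha}^{k+1-j}$), and finally absorb the new noise term $\sqrt{1-\alpha}\mbf W_{k+1}=\sqrt{1-\alpha}\sqrt{\alpha}^{(k+1)-(k+1)}\mbf W_{k+1}$ as the $j=k+1$ summand. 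This yields the claimed identity with $i_2=k+1$, completing the induction.

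There is no genuine obstacle here; the argument is a routine telescoping induction mirroring the proof of Lemma \ref{inductionformula}. The only detail worth stating carefully is the rewriting of the extra $\mbf W_{k+1}$ term with the exponent $\sqrt{\alpha}^{0}$ so that it fits into the running sum with upper index $k+1$. As an alternative (which I would not pursue here since it requires more bookkeeping), one could derive the identity non-inductively by applying Lemma \ref{inductionformula} to both $\mbf G_{i_2}$ and $\mbf G_{i_1}$ and eliminating $\mbf G_1$; the induction route is cleaner and consistent with the style already established in the paper.
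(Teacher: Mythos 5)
Your proof is correct, but it takes a different route from the paper. The paper proves this by applying Lemma \ref{inductionformula} to both $\mbf G_{i_2}$ and $\mbf G_{i_1}$, then computing $\mbf G_{i_2}-\sqrt{\alpha}^{\,i_2-i_1}\mbf G_{i_1}$ directly: the $\mbf G_1$ contributions cancel because $\sqrt{\alpha}^{\,i_2-1}=\sqrt{\alpha}^{\,i_2-i_1}\cdot\sqrt{\alpha}^{\,i_1-1}$, and the $\mbf W_j$ terms with $j\le i_1$ cancel for the same reason, leaving exactly $\sqrt{1-\alpha}\sum_{j=i_1+1}^{i_2}\sqrt{\alpha}^{\,i_2-j}\mbf W_j$. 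This is precisely the alternative you dismissed as ``more bookkeeping'' --- in fact it is quite short, reuses the already-established lemma, and avoids setting up a second induction. Your approach is a fresh induction on $i_2$ with $i_1$ fixed, which is self-contained, mirrors the proof style of Lemma \ref{inductionformula}, and does not depend on that lemma at all; the trade-off is that you are essentially re-proving a shifted version of a result the paper already has. Both proofs are sound; the paper's is slightly more economical given Lemma \ref{inductionformula} is already in hand, while yours is arguably more transparent if read in isolation.
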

\color{black}
\begin{proof}
By Lemma \ref{inductionformula}, it holds that
  \begin{align}
        \mbf G_{i_{2}}=\sqrt{\alpha}^{i_{2}-1}\mbf G_1+\sqrt{1-\alpha}\sum_{j=2}^{i_{2}}\sqrt{\alpha}^{i_{2}-j}\mbf W_j.
    \nonumber  \end{align}
and that
 \begin{align}
        \mbf G_{i_{1}}=\sqrt{\alpha}^{i_{1}-1}\mbf G_1+\sqrt{1-\alpha}\sum_{j=2}^{i_{1}}\sqrt{\alpha}^{i_{1}-j}\mbf W_j.
    \nonumber \end{align}
    Thus
\begin{align}
&\mbf G_{i_{2}}-\sqrt{\alpha}^{i_{2}-i_{1}}\mbf G_{i_{1}} \nonumber \\
&=\sqrt{\alpha}^{i_{2}-1}\mbf G_1+\sqrt{1-\alpha}\sum_{j=2}^{i_{2}}\sqrt{\alpha}^{i_{2}-j}\mbf W_j-\sqrt{\alpha}^{i_{2}-i_{1}}\left[\sqrt{\alpha}^{i_{1}-1}\mbf G_1+\sqrt{1-\alpha}\sum_{j=2}^{i_{1}}\sqrt{\alpha}^{i_{1}-j}\mbf W_j\right] \nonumber \\
&=\sqrt{\alpha}^{i_{2}-1}\mbf G_1+\sqrt{1-\alpha}\sum_{j=2}^{i_{2}}\sqrt{\alpha}^{i_{2}-j}\mbf W_j-\sqrt{\alpha}^{i_{2}-1}\mbf G_1-\sqrt{1-\alpha}\sum_{j=2}^{i_{1}}\sqrt{\alpha}^{i_{2}-j}\mbf W_j \nonumber \\
&=\sqrt{1-\alpha}\sum_{j=i_{1}+1}^{i_{2}} \sqrt{\alpha}^{i_{2}-j} \mbf W_j.
\nonumber \end{align}
\end{proof}
\color{black}
\subsection{Achievable Rate and Capacity}
Next, we define an achievable rate for the channel $W_{\mbf G^n}$ and the corresponding  capacity.
For this purpose, we begin by providing the definition of a  transmission-code for $W_{\mbf G^n}.$
\begin{definition}
\label{defcode}
A transmission-code $\Gamma$ of length $n$ and size\footnote{\text{This is the same notation used in} \cite{codingtheorems}.} $\lVert \Gamma \rVert $ with average power constraint $P$ for the channel $W_{\mbf G^n}$ is a family of pairs $\left\{(\mbf t_\ell,\setd_\ell^{(\mbf g^n)}):\vect\left(\mbf g\right)^n \in \mbb C^{N_RN_T\times n}, \quad \ell=1,\ldots,\lVert \Gamma \rVert \right\}$ such that for all $\ell,j \in \{1,\ldots,\lVert \Gamma \rVert\}$ and all $\mbf g^n$ for which $\vect\left(\mbf g\right)^n \in \mbb C^{N_RN_T\times n},$ we have:
\begin{align}
& \mbf t_\ell \in \mbb C^{N_{T}\times n},\quad \setd_\ell^{(\mbf g^n)} \subset \mbb C^{N_{R} \times n}, \nonumber \\
&\frac{1}{n}\sum_{i=1}^{n}\bs{t}_{\ell,i}^{H} \bs{t}_{\ell,i} \leq P \quad  \mbf t_\ell=(\bs{t}_{\ell,1},\hdots,\bs{t}_{\ell,n}), \label{powerconstraint} \\
&\setd_\ell^{(\mbf g^n)} \cap \setd_j^{(\mbf g^n)}= \emptyset,\quad \ell \neq j. \nonumber
\end{align}
Here, $\bs{t}_\ell, \ \ell=1,\ldots,\lVert \Gamma \rVert$  and $\setd^{(\mbf g^n)}_\ell, \  \ell=1,\ldots,\lVert \Gamma \rVert,$ are the codewords and the decoding regions, respectively.

\end{definition}
\color{black}
\begin{definition}
\label{achievrate}
 A real number $R$ is called an \textit{achievable} \textit{rate} of the channel $W_{\mbf G^n}$ if for every $\theta,\delta>0$ there exists a code sequence $(\Gamma_n)_{n=1}^\infty$, where each code $\Gamma_n$ of length $n$ is defined according to Definition \ref{defcode}, such that
    \[
        \frac{\log\lVert \Gamma_n \rVert}{n}\geq R-\delta
    \]
    and
 \[
e_{\max}(\Gamma_n)=\underset{\ell\in\{1\hdots\lVert\Gamma_n \rVert \} }{\max} \mbb E \left[W_{\mbf G^n}({\setd_\ell^{(\mbf G^n)_{c}}}|\bs{t}_\ell)\right]\leq\theta
    \]
    for sufficiently large $n$.
\end{definition}
\color{black}
\begin{definition}
The supremum of all achievable rates defined according to Definition \ref{achievrate} is called the capacity of the fading channel $W_{\mbf G^n}$ and is denoted by $C(P,N_R\times N_T)$.
\end{definition}
\subsection{Main Result}
In this section, we present the main result of our work, which is a single-letter characterization of the time-varying MIMO Gauss-Markov Rayleigh fading channel. This is illustrated in the following theorem.
\color{black}
\begin{theorem}
\label{single-letter characterization}
Let $\mbf G$ be any random matrix with i.i.d. entries such that $\vect(\mbf G)\sim \mc N_{\mbb C}(\bs{0}_{N_RN_T},\mbf I_{N_RN_T}).$
A single-letter characterization of the capacity of the channel in \eqref{channelmodel.} with gain model in \eqref{gainmodel} with $0\leq\alpha<1$ is
  $$C(P,N_R\times N_T)=\underset{\mbf Q \in \mc Q_ {(P,N_T)}}{\mathrm{\max}}\mbb E \left[\log\det\left(\mathbf{I}_{N_{R}}+\frac{1}{\sigma^2}\mathbf{G}\mbf Q\mathbf{G}^{H}\right)\right].$$
\end{theorem}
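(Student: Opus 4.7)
Write $C^{*}:=\max_{\mbf Q\in\mc Q_{(P,N_T)}}\mbb E\left[\log\det\left(\mbf I_{N_R}+\sigma^{-2}\mbf G\mbf Q\mbf G^{H}\right)\right]$ and fix a maximiser $\mbf Q^{*}$; its existence follows from compactness of $\mc Q_{(P,N_T)}$ and continuity of the objective. The plan is to prove the two inequalities $C(P,N_R\times N_T)\le C^{*}$ (converse) and $C(P,N_R\times N_T)\ge C^{*}$ (direct) separately. Both bounds rely on the fact, already established as Lemma \ref{distributiongain}, that every gain matrix $\mbf G_i$ has the \emph{same} marginal distribution as $\mbf G$; it is only the joint law of $\mbf G^{n}$ that depends on $\alpha$, and controlling this joint law is where the Markov structure really enters.

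\textbf{Converse.} For any code sequence with $e_{\max}(\Gamma_n)\to 0$ and rate $R$, Fano's inequality yields $nR\le I(\bs T^{n};\bs Z^{n},\mbf G^{n})+n\epsilon_n$. Since $\mbf G^{n}$ is independent of $\bs T^{n}$, this reduces to $I(\bs T^{n};\bs Z^{n}\mid\mbf G^{n})+n\epsilon_n$. Expanding by the chain rule and using that $\bs Z_i$ depends on $(\bs T^{n},\mbf G^{n})$ only through $(\bs T_i,\mbf G_i)$ together with the memoryless noise $\bs\xi_i$, I would bound this above by $\sum_{i=1}^{n}I(\bs T_i;\bs Z_i\mid\mbf G_i)$. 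Conditioning each term on $\mbf G_i=\mbf g$ yields a Gaussian MIMO channel whose mutual information is maximised by a Gaussian input of the same covariance $\bs\Sigma_i:=\mbb E[\bs T_i\bs T_i^{H}]$, giving $I(\bs T_i;\bs Z_i\mid\mbf G_i)\le \mbb E\left[\log\det\left(\mbf I_{N_R}+\sigma^{-2}\mbf G\bs\Sigma_i\mbf G^{H}\right)\right]$ once Lemma \ref{distributiongain} is invoked to replace $\mbf G_i$ by $\mbf G$. Concavity of $\mbf Q\mapsto \mbb E[\log\det(\mbf I_{N_R}+\sigma^{-2}\mbf G\mbf Q\mbf G^{H})]$ combined with the average power constraint (which places $n^{-1}\sum_i\bs\Sigma_i$ inside $\mc Q_{(P,N_T)}$) then gives $R\le C^{*}$.

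\textbf{Direct.} For achievability I would work in the information-spectrum framework of Verdú and Han, in which the liminf-in-probability of the normalised information density lower-bounds the operational capacity. Take $\bs T_i$ i.i.d.\ $\sim\mc N_{\mbb C}(\bs 0,\mbf Q^{*})$ and define
\[
i_n:=\frac{1}{n}\sum_{i=1}^{n}\log\frac{p_{\bs Z_i\mid\bs T_i,\mbf G_i}(\bs Z_i\mid\bs T_i,\mbf G_i)}{p_{\bs Z_i\mid\mbf G_i}(\bs Z_i\mid\mbf G_i)}.
\]
By Lemma \ref{distributiongain} each summand has the same expectation, equal to $C^{*}$. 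The aim is $i_n\to C^{*}$ in probability, which I would obtain via Chebyshev's inequality once $\mathrm{Var}(i_n)=O(1/n)$ is established. To this end, Lemma \ref{twoindexgainformula} decomposes $\mbf G_{i_2}$ for $i_2>i_1$ as $\sqrt{\alpha}^{\,i_2-i_1}\mbf G_{i_1}$ plus a Gaussian term independent of $\mbf G_{i_1}$; this decomposition should force the covariance between the $i_1$-th and $i_2$-th summands to decay geometrically in $|i_2-i_1|$, so that all cross terms sum to $O(n)$ and the variance bound follows. A Feinstein-type threshold-decoding argument then produces, for every $\delta,\theta>0$, codes of rate at least $C^{*}-\delta$ with $e_{\max}(\Gamma_n)\le \theta$ for all sufficiently large $n$, completing achievability.

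\textbf{Main obstacle.} The converse is essentially textbook once Lemma \ref{distributiongain} is in hand. The genuine work lies in the variance bound on $i_n$: the summands are unbounded $\log\det$ functions of Gaussian-entry random matrices, they are neither independent nor stationary-enough for an off-the-shelf ergodic theorem, and their only source of decorrelation is the geometric mixing of the Gauss--Markov recursion. Making this quantitative, uniformly enough in the relevant parameters that Chebyshev's inequality actually closes the argument, is the technical crux of the paper and is exactly what Section \ref{prooflemma} is devoted to.
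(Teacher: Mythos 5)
Your proposal follows the paper's proof essentially step for step: Fano's inequality plus the chain rule, single-letterization via the Markov property, and concavity of $\mbf Q\mapsto\mbb E[\log\det(\mbf I_{N_R}+\sigma^{-2}\mbf G\mbf Q\mbf G^{H})]$ for the converse; Feinstein's lemma combined with a Chebyshev-type concentration of the normalized information density for the direct part, with the variance bound (Lemma \ref{boundvariance}) driven exactly as you say by the geometric decorrelation furnished by Lemma \ref{twoindexgainformula}. You have correctly identified the variance estimate as the technical crux.

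There is, however, one step in your achievability argument that would fail as written. You take $\bs T_i$ i.i.d.\ $\sim\mc N_{\mbb C}(\bs 0,\mbf Q^{*})$ with $\mbf Q^{*}$ an exact maximiser. Since the objective is strictly increasing in the positive semidefinite order, the maximiser satisfies $\mathrm{tr}(\mbf Q^{*})=P$, so $\mbb E[\lVert\bs T_i\rVert^{2}]=P$ and, by the central limit theorem, the term $\mbb P[\bs T^{n}\notin E_n]=\mbb P\bigl[\tfrac1n\sum_i\lVert\bs T_i\rVert^{2}>P\bigr]$ in Feinstein's bound converges to $1/2$ rather than to $0$. The paper avoids this by first proving (via dominated convergence) that $\phi(\mbf Q)=\mbb E[\log\det(\mbf I_{N_R}+\sigma^{-2}\mbf G\mbf Q\mbf G^{H})]$ is continuous on $\mc Q_{(P,N_T)}$, then choosing $\tilde{\mbf Q}$ with $\mathrm{tr}(\tilde{\mbf Q})=P-\beta$ strictly inside the constraint set so that $\phi(\tilde{\mbf Q})\ge R_{\max}-\epsilon$, and finally invoking Lemma \ref{probnotinconstraint} to get the exponential bound $\mbb P[\bs T^{n}\notin E_n]\le 2^{-n\hat{\beta}}$. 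Your plan needs this back-off from the boundary of the power constraint, together with the continuity argument that shows the resulting rate loss is negligible, in order to close.
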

\color{black}
The proof of Theorem \ref{single-letter characterization} is provided in Section \ref{proofthm}.
\subsection{Auxiliary Result}
For the proof of Theorem \ref{single-letter characterization}, we require the following auxiliary result on the normalized information density of $W_{\mbf G^n}.$
\color{black}
\begin{lemma}
\label{boundvariance}
Let $\bs{T}^{n}=\left(\bs{T}_1,\hdots,\bs{T}_n\right)$ be an $n$-length input sequence of the channel $W_{\mbf G^n}$ in \eqref{channelmodel.} with gain model in \eqref{gainmodel} such that $0<\alpha<1$ and such that the $\bs{T}_{i}s$ are i.i.d., where $\bs{T}_i \sim \mc N\left(\bs{0}_{N_{T}},\tilde{\mbf Q}\right),\ i=1\hdots n,$ and $\tilde{\mbf Q} \in \mc Q_{(P,N_T)}.$ Let $\bs{Z}^{n}=\left(\bs{Z}_1,\hdots,\bs{Z}_n\right)$  be the corresponding  output sequence. Then, it holds that
\begin{align}
\mathrm{var}\left(\frac{i\left(\bs{T}^n;\bs{Z}^n,\mbf G^n\right)}{n} \right)\leq \kappa(n),
\nonumber \end{align}
where $\kappa(n)=\frac{2c'}{n(1-\sqrt{\alpha})}+ \frac{c''}{n}$ for some $c',c''>0$ and where  $\underset{n\rightarrow \infty}{\lim}\kappa(n)=0.$
\end{lemma}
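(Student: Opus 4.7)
The plan is to single-letterize the information density, observe that its summands are conditionally independent given the gain trajectory $\mbf G^n$, and then convert the Gauss--Markov correlation decay into covariance decay.

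Since $\bs T^n$ and $\bs\xi^n$ are i.i.d.\ and mutually independent of $\mbf G^n$, the conditional densities factorize as $p(\bs Z^n\mid \bs T^n,\mbf G^n)=\prod_i p(\bs Z_i\mid \bs T_i,\mbf G_i)$ and $p(\bs Z^n\mid \mbf G^n)=\prod_i p(\bs Z_i\mid \mbf G_i)$, so substituting into the definition of $i(\bs T^n;\bs Z^n,\mbf G^n)$ and using $p(\bs T^n)=\prod_i p(\bs T_i)$ yields
\[
i(\bs T^n;\bs Z^n,\mbf G^n)=\sum_{i=1}^n X_i,\qquad X_i:=\log\frac{p(\bs Z_i\mid \bs T_i,\mbf G_i)}{p(\bs Z_i\mid \mbf G_i)}.
\]
Each $X_i$ is a measurable function only of $(\bs T_i,\bs\xi_i,\mbf G_i)$, so conditional on $\mbf G^n$ the $\{X_i\}$ are independent. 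The law of total covariance then gives, for $i\neq j$,
\[
\mathrm{Cov}(X_i,X_j)=\mathrm{Cov}\bigl(f(\mbf G_i),f(\mbf G_j)\bigr),
\]
where $f(\mbf G):=\log\det\bigl(\mbf I_{N_R}+\sigma^{-2}\mbf G\tilde{\mbf Q}\mbf G^H\bigr)$ is the standard Gaussian MIMO mutual-information formula.

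The core estimate is then a geometric bound on these off-diagonal covariances. By Lemma~\ref{twoindexgainformula} I can write $\mbf G_j=\sqrt{\alpha}^{\,|j-i|}\mbf G_i+\sqrt{1-\alpha^{|j-i|}}\,\tilde{\mbf H}_{ij}$ with $\tilde{\mbf H}_{ij}$ a standard complex Gaussian independent of $\mbf G_i$ (the variance $1-\alpha^{|j-i|}$ is the same geometric sum as in the proof of Lemma~\ref{distributiongain}). Thus $(\mbf G_i,\mbf G_j)$ is an Ornstein--Uhlenbeck pair with correlation $\rho_{ij}:=\sqrt{\alpha}^{\,|j-i|}$, and the $L^2$-contraction of the associated semigroup (equivalently, a multivariate Mehler/Hermite expansion of $f$) delivers
\[
|\mathrm{Cov}(f(\mbf G_i),f(\mbf G_j))|\leq \rho_{ij}\,\mathrm{var}(f(\mbf G_1))=\sqrt{\alpha}^{\,|i-j|}\,\mathrm{var}(f(\mbf G_1)).
\]
To make the resulting bound uniform in $\tilde{\mbf Q}$, I bound $\mathrm{var}(f(\mbf G_1))$ using $0\leq\log\det(\mbf I+A)\leq\mathrm{tr}(A)$, which yields $0\leq f(\mbf G_1)\leq \sigma^{-2}P\,\mathrm{tr}(\mbf G_1\mbf G_1^H)$; the right-hand side is a quadratic form in a standard complex Gaussian whose fourth moment is finite. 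For $\mathrm{var}(X_1)$ I use the law of total variance: $\mathrm{var}(X_1)=\mathrm{var}(f(\mbf G_1))+\mbb E[\mathrm{var}(X_1\mid \mbf G_1)]$, with the conditional variance being that of an explicit quadratic form in the Gaussian pair $(\bs T_1,\bs\xi_1)$ given $\mbf G_1$, hence again a polynomial in the entries of $\mbf G_1$ with finite expectation.

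Assembling,
\[
\mathrm{var}\!\left(\frac{1}{n}\sum_{i=1}^n X_i\right)=\frac{\mathrm{var}(X_1)}{n}+\frac{1}{n^2}\sum_{i\neq j}\mathrm{Cov}(X_i,X_j),
\]
and using $\sum_{i\neq j}\sqrt{\alpha}^{\,|i-j|}\leq 2n\sqrt{\alpha}/(1-\sqrt{\alpha})$, one recovers exactly the claimed form $\kappa(n)=\frac{2c'}{n(1-\sqrt{\alpha})}+\frac{c''}{n}\to 0$. The main obstacle is the Ornstein--Uhlenbeck/Mehler contraction, which converts the simple scalar correlation $\sqrt{\alpha}^{|i-j|}$ into a uniform covariance bound for the nontrivial log-determinant functional $f$; a fully self-contained substitute would be to bound $f(\mbf G_j)-\mbb E[f(\mbf G_j)\mid \mbf G_i]$ via the Gaussian Poincar\'e inequality applied to a gradient estimate of $f$, at the price of heavier explicit computation with the spectrum of $\mbf G\tilde{\mbf Q}\mbf G^H$.
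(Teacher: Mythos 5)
Your proof is correct but takes a genuinely different route to the core estimate. Both you and the paper reduce to the same skeleton: by Lemma~\ref{infdensity} the information density is $\sum_i X_i$ with $X_i = i(\bs T_i;\bs Z_i,\mbf G_i)$, and after exploiting conditional independence of the $X_i$ given $\mbf G^n$ one is left with the off-diagonal term $\sum_{i\ne j}\mathrm{Cov}\bigl(f(\mbf G_i),f(\mbf G_j)\bigr)$ where $f(\mbf G)=\log\det(\mbf I_{N_R}+\sigma^{-2}\mbf G\tilde{\mbf Q}\mbf G^H)$, plus a diagonal term bounded by moment estimates. The difference is in how the covariance decay is established. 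The paper's Lemma~\ref{meanproducti1i2} is a hands-on matrix argument: starting from the coupling $\mbf G_{i_2}=\sqrt{\alpha}^{i_2-i_1}\mbf G_{i_1}+\mbf S$, it introduces a fresh $\tilde{\mbf W}=\mbf S+\sqrt{\alpha}^{i_2-i_1}\tilde{\mbf G}$ that is distributed like $\mbf G_{i_2}$ but independent of $\mbf G_{i_1}$, then pushes $f(\mbf G_{i_2})\le f(\tilde{\mbf W})+O(\sqrt{\alpha}^{i_2-i_1})\cdot(\text{polynomial in norms})$ through a chain of positive-semidefiniteness inequalities (Lemmas~\ref{normposdef} and~\ref{bounddeterminantsum}) and then controls the expectation of the error polynomial (Lemma~\ref{meanbounded}). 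You instead observe that $(\vect(\mbf G_i),\vect(\mbf G_j))$ is, componentwise, an Ornstein--Uhlenbeck pair with scalar correlation $\rho_{ij}=\sqrt{\alpha}^{|i-j|}$, and invoke the Gebelein/Lancaster maximal-correlation inequality (equivalently, the $L^2$ spectral gap of the OU semigroup acting through $\mbb E[f(\mbf G_j)\mid\mbf G_i]$) to get $|\mathrm{Cov}(f(\mbf G_i),f(\mbf G_j))|\le\rho_{ij}\,\mathrm{var}(f)$ at once. Your route is substantially shorter and yields the cleaner constant $c'=\mathrm{var}(f(\mbf G_1))$, bounded via $0\le f\le\frac{P}{\sigma^2\ln 2}\mathrm{tr}(\mbf G\mbf G^H)$; the paper's route is entirely elementary (no Hermite/Mehler machinery) but at the cost of roughly two pages of explicit matrix bookkeeping. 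The only thing you leave as a black box is the OU contraction itself, which you flag honestly; since $\mc N_{\mbb C}(\mbf 0,\mbf I)$ splits into $2N_RN_T$ independent real Gaussians with the same scalar correlation $\rho_{ij}$, the standard real multivariate Gebelein statement applies verbatim, so this is a legitimate citation rather than a gap. Your accounting of the diagonal term $\mathrm{var}(X_1)/n$ via the law of total variance matches the paper's bound on $\frac{1}{n^2}\sum_i\mbb E[X_i^2]$ in substance.
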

\color{black}
The proof of Lemma \ref{boundvariance} is provided in Section \ref{prooflemma}.
\section{Proof of Theorem \ref{single-letter characterization}}
The result of Theorem \ref{single-letter characterization} is well-known for $\alpha=0$\cite{telatar}. The proof is then restricted for $0<\alpha<1$.
\label{proofthm}
\subsection{Direct Proof}
\label{direct} Let
\color{black}
  $$R_{\max}=\underset{\mbf Q \in \mathcal{Q}_{(P,N_T)}}{\max}\mbb E \left[\log\det\left(\mathbf{I}_{N_{R}}+\frac{1}{\sigma^2}\mathbf{G}\mbf Q\mathbf{G}^{H}\right)\right],$$
  \color{black}
  where $\mbf G \in \mbb C^{N_{R}\times N_{T}}$ is any random matrix with i.i.d. entries such that  $\vect\left(\mbf G\right)\sim \mc N_{\mbb C}\left(\bs{0}_{N_{R}N_{T}},\mbf I_{N_{R}N_{T}}   \right).$
  We are going to show that
  \begin{align}
      C(P,N_R\times N_T)\geq  R_{\max}-\epsilon,
  \nonumber \end{align}
 
 with $\epsilon$ being an arbitrarily small positive constant.
  Let $\theta,\delta>0$ and 
  $$E_n=\{ \bs{t}^n=(\bs{t}_1,\hdots,\bs{t}_n)\in \mbb C^{N_{T}\times n}: \frac{1}{n}\sum_{i=1}^{n}\lVert \bs{t}_i \rVert^2\leq P\}.$$ 
  We define for any $\mbf Q \in \mc Q_{(P,N_T)},$
  \begin{align}
      \phi(\mbf Q)=\mbb E \left[\log\det\left(\mathbf{I}_{N_{R}}+\frac{1}{\sigma^2}\mathbf{G}\mbf Q\mathbf{G}^{H}\right)\right]. \nonumber
  \end{align}
  Now notice that any  $\mbf Q \in \mc  \mathcal{Q}_{(P,N_T)},$ we have
  \begin{align}
   &\log\det\left(\mathbf{I}_{N_{R}}+\frac{1}{\sigma^2}\mathbf{G}\mbf Q\mathbf{G}^{H}\right) \nonumber \\
      &\overset{(a)}{\leq} \log\det\left(\mathbf{I}_{N_{R}}+\frac{1}{\sigma^2}\lVert\mathbf{G}\mbf Q\mathbf{G}^{H}\rVert \mbf I_{N_{R}}\right)\nonumber \\
      &=\log\det\left(\left[1+\frac{1}{\sigma^2}\lVert\mathbf{G}\mbf Q\mathbf{G}^{H}\rVert\right] \mbf I_{N_{R}}\right) \nonumber \\
      &=N_R\log(1+\frac{1}{\sigma^2}\lVert\mathbf{G}\mbf Q\mathbf{G}^{H}\rVert)\nonumber \\
      &\leq \frac{N_{R}}{\ln(2)\sigma^2}\lVert\mathbf{G}\mbf Q\mathbf{G}^{H}\rVert \nonumber \\
      &\leq \frac{N_{R}}{\ln(2)\sigma^2} \lVert \mbf Q \rVert\lVert\mathbf{G}\rVert^2 \nonumber \\
      &\overset{(b)}{\leq} \frac{PN_{R}}{\ln(2)\sigma^2} \lVert\mathbf{G}\rVert^2, \nonumber
  \end{align}
  where $(a)$ follows because $\mbf A \preceq \lVert \mbf A \rVert\mbf I_{n}$ for any Hermitian $\mbf A \in \mbb C^{n\times n}$ (by Lemma \ref{normposdef} in the Appendix) and $(b)$ follows because $\lVert \mbf Q \rVert=\lambda_{\max}(\mbf Q)\leq \text{tr}(\mbf Q)\leq P.$
  Now, it holds that $\mbb E\left[\frac{PN_{R}}{\ln(2)\sigma^2} \lVert\mathbf{G}\rVert^2\right]<\infty$ since  $\mbb E\left[ \lVert \mbf G\rVert^2\right]<\infty$ (from Lemma \ref{boundedmatrixnorm} in the Appendix). 
  Therefore, it follows from the dominated convergence theorem that $\phi$ is continuous on the compact set $\mc Q_{(P,N_T)}.$
  Therefore, \color{black} one can find a  $\tilde{\mbf Q} \in\mc Q_{(P,N_T)}$ such that $\mathrm{tr}(\tilde{\mbf Q})=P-\beta$ for some $\beta>0$ and such that 
  \begin{align}
  \phi(\tilde{\mbf Q})\geq R_{\max}-\epsilon.
  \label{choiceoftildeQ}
\end{align}
\color{black}
We define $$\hat{P}=P-\beta$$ and
\begin{align}\hat{\beta}=\frac{\beta}{\ln(2)\hat{P}}-\log(1+\frac{\beta}{\hat{P}})>0. \label{hatbeta}
\end{align} 
\begin{comment}
We further define $\bs{T}^n=(\bs{T}_1,\hdots,\bs{T}_n) \in \mbb C^{N_{T} \times n}$ to be the random input sequence of the channel $W_{\mbf G^n},$ where all the $\bs{T}_i's$ are i.i.d. such that   $\bs{T}_i \sim \mc N_{\mbb C}\left( \bs{0}_{N_{T}},\tilde{\mbf Q}\right), i=1\hdots n.$ Let $(\bs{Z}^n,\mbf G^n)$ be the corresponding outputs of the channel $W_{\mbf G^n}.$ 
$W_{\mbf G^n}$ is a two-output channel since we assume CSIR. 
\end{comment}
Let us now introduce the following well-known lemma: 
\color{black}
\begin{lemma}{(Feinstein's Lemma with input constraints)\cite{feinstein}}
\label{feinsteinlemma}
Let $n>0$ be fixed arbitrarily. Consider any channel with random input sequence $T^n$, with corresponding random channel output sequence $Z^n$ and with information density $i(T^n;Z^n).$
Then, for any integer $\tau >0$, real number $\gamma>0$, and measurable set $E_n$, there exists a code with cardinality $\tau$, maximum error probability $\epsilon_n$ and block-length $n$, whose codewords are contained in the set $E_n,$ where $\epsilon_n$ satisfies
$$\epsilon_n\leq \mbb P \left[\frac{1}{n}i(T^n;Z^n)\leq \frac{\log\tau}{n}+\gamma  \right]+\mbb P\left[T^n\notin E_n  \right] +2^{-n\gamma}.       $$
\end{lemma}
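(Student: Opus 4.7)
The plan is to use the classical Feinstein-style greedy ``maximal-code'' construction based on a threshold event for the information density. First, I would introduce the likelihood-ratio set
\[
S = \left\{ (t^n, z^n) : i(t^n; z^n) > \log\tau + n\gamma \right\},
\]
which has the crucial property that on $S$,
$p_{Z^n|T^n}(z^n \mid t^n) > \tau\, 2^{n\gamma}\, p_{Z^n}(z^n),$
and for each fixed $t^n$ the section $S_{t^n} = \{z^n : (t^n, z^n) \in S\}$ serves as a natural candidate decoding region of small $p_{Z^n}$-measure.

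Second, I would build the code greedily. Suppose after step $k-1$ I have picked codewords $t_1, \dots, t_{k-1} \in E_n$ and pairwise disjoint regions $\mc D_1, \dots, \mc D_{k-1}$ with $\mc D_j \subseteq S_{t_j}$ and $W(\mc D_j \mid t_j) \geq 1 - \epsilon_n$, where $\epsilon_n$ denotes the right-hand side of the lemma. Writing $F_{k-1} = \bigcup_{j<k} \mc D_j$, at step $k$ I look for some $t_k \in E_n$ satisfying $W(S_{t_k} \setminus F_{k-1} \mid t_k) \geq 1 - \epsilon_n$ and, if found, set $\mc D_k = S_{t_k} \setminus F_{k-1}$. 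The construction stops when no such $t_k$ exists, and my goal is to prove that this cannot happen before $k - 1 = \tau$.

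Third, for the contradiction, I assume termination with $k - 1 < \tau$, so for every $t^n \in E_n$ we have $W(S_{t^n} \setminus F_{k-1} \mid t^n) < 1 - \epsilon_n$. Averaging against $p_{T^n}$, restricting to $E_n$, and combining with the identity $\mbb P[(T^n, Z^n) \in S] = 1 - \mbb P[\tfrac{1}{n}i(T^n;Z^n) \leq \tfrac{\log\tau}{n} + \gamma]$ yields
\[
\mbb P\left[(T^n, Z^n) \in S,\, Z^n \in F_{k-1}\right] > \epsilon_n - \mbb P\left[\tfrac{1}{n}i(T^n;Z^n) \leq \tfrac{\log\tau}{n} + \gamma\right] - \mbb P[T^n \notin E_n].
\]
Separately, the defining inequality of $S$ forces the packing bound $p_{Z^n}(\mc D_j) < (\tau\, 2^{n\gamma})^{-1}$ for each $j$, since $\mc D_j \subseteq S_{t_j}$ implies $p_{Z^n}(z^n) < (\tau\, 2^{n\gamma})^{-1} p_{Z^n|T^n}(z^n \mid t_j)$ on $\mc D_j$; summing over $j \leq k - 1 < \tau$ gives $p_{Z^n}(F_{k-1}) < 2^{-n\gamma}$. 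Combining the two inequalities produces $\epsilon_n < \epsilon_n$, the desired contradiction.

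The main obstacle is the bookkeeping in the averaging step: the three quantities in the stated bound must appear naturally, with the correct signs, from a single chain of inequalities. The information-density tail comes from $\mbb P[S^c]$, the input-constraint term $\mbb P[T^n \notin E_n]$ arises from restricting to codewords in $E_n$ before the averaging, and the geometric term $2^{-n\gamma}$ comes from the disjointness of the $\mc D_j$ combined with the likelihood-ratio inequality on $S$. Once these three contributions are aligned, the greedy process cannot stop before $\tau$ codewords have been produced, and each one automatically satisfies the input constraint and the required error bound.
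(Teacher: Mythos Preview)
The paper does not prove this lemma at all: it is introduced as ``the following well-known lemma'' and attributed to the cited reference, so there is no in-paper proof to compare against. Your proposal is a correct outline of the classical Feinstein maximal-code argument (threshold set, greedy selection of disjoint decoding regions inside $E_n$, packing bound $p_{Z^n}(F_{k-1})<2^{-n\gamma}$ from the likelihood-ratio inequality, and a contradiction by averaging), which is exactly the standard route to this result; the bookkeeping you flag in the averaging step works out as you describe.
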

\color{black}
Let $\bs{T}^n=(\bs{T}_1,\hdots,\bs{T}_n) \in \mbb C^{N_{T} \times n}$ to be the random input sequence of the channel $W_{\mbf G^n},$ where the $\bs{T}_is$ are i.i.d. such that   $\bs{T}_i \sim \mc N_{\mbb C}\left( \bs{0}_{N_{T}},\tilde{\mbf Q}\right), i=1\hdots n.$ We denote its corresponding random output sequence by $\bs{Z}^n=(\bs{Z}_1,\hdots,\bs{Z}_n).$ 
Now, we apply  Lemma \ref{feinsteinlemma} for $E_n=\{ \bs{t}^n=(\bs{t}_1,\hdots,\bs{t}_n)\in \mbb C^{N_{T}\times n}: \frac{1}{n}\sum_{i=1}^{n}\lVert \bs{t}_i \rVert^2\leq P\}$ and for $\gamma=\frac{\delta}{4}.$ 
It follows that there exists a code sequence $(\Gamma_n)_{n=1}^\infty,$ where each code $\Gamma_n$  is defined according to Definition \ref{defcode} such that
\color{black}
\begin{align}
    e_{\max}(\Gamma_n)&\leq \mbb P \left[\frac{1}{n}i(\bs{T}^n;\bs{Z}^n,\mbf G^n)  \leq \frac{1}{n}\log\lVert \Gamma_n\rVert+\frac{\delta}{4}\right] + \mbb P \left[\bs{T}^n \notin E_{n}  \right] +2^{-n\frac{\delta}{4}}, \label{eq22}
 \end{align}
 where here $\mbf G^n$ is viewed as a second output sequence of $W_{\mbf G^n}$ because we assume CSIR and
\color{black}
\color{black}
where
\begin{align}
    e_{\max}(\Gamma_n)&=\underset{\ell\in\{1\hdots\lVert\Gamma_n \rVert \}}{\max} \mbb E \left[W_{\mbf G^n}({\setd_\ell^{(\mbf G^n)_c}}|\bs{t}_\ell)\right] \nonumber \\
    &=\underset{\ell\in\{1\hdots\lvert \mc M \rvert \}}{\max} \mbb E\left[\mbb P \left[\hat{M}\neq \ell|M=\ell,\mbf G^n\right]\right] \nonumber \\
    &=\underset{\ell\in\{1\hdots\lvert \mathcal{M} \rvert \}}{\max} \mbb P \left[\hat{M}\neq \ell|M=\ell\right], \nonumber
\end{align}
with $M,\hat{M}$ being the random message and the random decoded message and with $\mc M$ being the set of messages.

Choose $\lVert \Gamma_n \rVert$ such that  for sufficiently large $n$
 $$ R_{\max}-\epsilon-\delta\leq \frac{\log\lVert\Gamma_n\rVert}{n}\leq R_{\max}-\epsilon-\frac{\delta}{2}.$$
 It follows that
 \begin{align}
     \color{black} e_{\max}(\Gamma_n) \color{black}&\leq  \mbb P \left[\frac{1}{n}i(\bs{T}^n;\bs{Z}^n,\mbf G^n)  \leq R_{\max}-\epsilon-\frac{\delta}{2}\right] + \mbb P \left[\bs{T}^n \notin E_{n}  \right] +2^{-n\frac{\delta}{4}} \nonumber \\
      &\leq \color{black}\mbb P \left[\frac{1}{n}i(\bs{T}^n;\bs{Z}^n,\mbf G^n)  \leq  \phi(\tilde{\mbf Q})-\frac{\delta}{2}\right]+ \mbb P \left[\bs{T}^n \notin E_{n}  \right] +2^{-n\frac{\delta}{4}}\color{black}, \label{upperprob1}
 \end{align}
 where we used \eqref{choiceoftildeQ} in the last step.
It remains to find upper-bounds for $ \mbb P \left[\frac{1}{n}i(\bs{T}^n;\bs{Z}^n,\mbf G^n)  \leq  \phi(\tilde{\mbf Q})-\frac{\delta}{2}\right]$ and for $\mbb P \left[\bs{T}^n \notin E_{n}  \right]$ that vanish as $n$ goes to infinity.
\subsubsection{Upper-bound for $\mbb P\left[ \bs{T}^n \notin E_n\right]$}
We will prove  that $$  \color{black} \mbb P\left[ \bs{T}^n \notin E_n\right]\leq 2^{-n\hat{\beta}} \color{black},$$
where $\hat{\beta}$ is defined in \eqref{hatbeta}.
For this purpose, we will introduce and prove the following lemma:
\color{black}
\begin{lemma}
\label{probnotinconstraint}
Let $\bs{X}_i,$ $i=1, \hdots,n$ be i.i.d.$N$-dimensional complex Gaussian random vectors with mean $\bs{0}_N$ and covariance matrix $\mbf O$ whose trace is smaller than or equal to $\rho$.
Then, for any $\delta>0$
\begin{align}
    \mbb P\left[\sum_{i=1}^{n} \lVert \bs{X}_i\rVert^2\geq n(\rho+\delta)\right]\leq \left[(1+\frac{\delta}{\rho})2^{-\frac{\delta}{\ln(2)\rho}}\right]^{n}, \nonumber
\end{align}
where 
\begin{align}
    \lVert \bs{X}_i \rVert^2=\sum_{j=1}^{N} |\bs{X}_i^j|^2 \nonumber
\end{align}
and
\begin{align}
    \bs{X}_i=(\bs{X}_i^1,\hdots,\bs{X}_i^N)^{T}. \nonumber
\end{align}
\end{lemma}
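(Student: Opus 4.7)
The plan is to apply the standard Chernoff/Cramér bound to the sum $\sum_{i=1}^n \|\bs{X}_i\|^2$. For any $s>0$ in the region where the moment generating function exists, Markov's inequality applied to $\exp(s\sum_{i=1}^n \|\bs{X}_i\|^2)$, together with the independence of the $\bs{X}_i$'s, yields
$$
\mbb P\!\left[\sum_{i=1}^n \|\bs{X}_i\|^2 \geq n(\rho+\delta)\right] \leq e^{-sn(\rho+\delta)} \left(\mbb E\!\left[e^{s\|\bs{X}_1\|^2}\right]\right)^n.
$$

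Next I would compute the one-dimensional moment generating function of $\|\bs{X}_1\|^2$. Diagonalize $\mbf O = \mbf U\,\mathrm{diag}(\lambda_1,\ldots,\lambda_N)\,\mbf U^H$ with eigenvalues $\lambda_j \geq 0$. By the unitary invariance of the circularly symmetric complex Gaussian distribution, $\|\bs{X}_1\|^2$ has the same law as $\sum_{j=1}^N \lambda_j |Y_j|^2$, where $|Y_1|^2,\ldots,|Y_N|^2$ are i.i.d.\ unit-mean exponential random variables (since a scalar $\mc N_{\mbb C}(0,1)$ has squared modulus $\mathrm{Exp}(1)$). This immediately gives
$$
\mbb E\!\left[e^{s\|\bs{X}_1\|^2}\right] = \prod_{j=1}^N \frac{1}{1-s\lambda_j}, \qquad s\max_j \lambda_j < 1.
$$
The critical algebraic step is then the inequality $\prod_{j=1}^N(1-s\lambda_j) \geq 1 - s\sum_{j=1}^N \lambda_j$, valid whenever every factor is positive. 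I would prove it by a short induction on $N$: the inductive step multiplies the hypothesis by $1-s\lambda_{N+1} \geq 0$ and drops a non-negative quadratic term $s^2\lambda_{N+1}\sum_{j\le N}\lambda_j$. Combined with $\sum_j \lambda_j = \mathrm{tr}(\mbf O) \leq \rho$, this yields the clean bound $\mbb E[e^{s\|\bs{X}_1\|^2}] \leq (1-s\rho)^{-1}$.

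Finally, I would optimize the free parameter by choosing $s^\star = \delta/[\rho(\rho+\delta)]$. This satisfies $s^\star \lambda_j \leq s^\star \rho = \delta/(\rho+\delta) < 1$, so the MGF computation is valid. A direct calculation gives $s^\star(\rho+\delta) = \delta/\rho$ and $(1-s^\star \rho)^{-1} = 1 + \delta/\rho$, so the Chernoff bound collapses to
$$
\left[(1+\delta/\rho)\,e^{-\delta/\rho}\right]^{n} = \left[(1+\delta/\rho)\,2^{-\delta/(\ln(2)\rho)}\right]^{n},
$$
which matches the claimed expression exactly. The only genuine obstacle is the product inequality for $\prod_j(1-s\lambda_j)$, which allows one bound uniform in the eigenvalue distribution; everything else is routine Chernoff bookkeeping. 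Note that the bound is tight in the rank-one case $\mbf O = \rho\,\bs{e}\bs{e}^H$, since then the product has a single factor and all the intermediate inequalities become equalities.
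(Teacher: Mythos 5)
Your proof is correct and follows essentially the same route as the paper: Chernoff/Markov bound, reduce the MGF to $\prod_j (1-s\lambda_j)^{-1}$ via unitary invariance (the paper uses a rank-$r$ factorization $\mbf O=\mbf V\mbf V^H$ and a chi-square MGF, you use exponential squared moduli — same computation), bound the product from below by $1-s\,\mathrm{tr}(\mbf O)$, and pick $s^\star=\delta/[\rho(\rho+\delta)]$. The only cosmetic difference is that you spell out the induction for the product inequality, which the paper asserts without proof.
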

\color{black}
\begin{proof}
Let $\bs{X}$ be a random vector with the same distribution as each of the $\bs{X}_i$. Then
\begin{align}
&\mbb P\left[ \sum_{i=1}^{n} \lVert \bs{X}_i\rVert^2 \geq n(\rho+\delta)      \right] \nonumber \\
&=\mbb P\left[ \sum_{i=1}^{n}\lVert \bs{X}_i\rVert^2-n(\rho+\delta)\geq 0       \right] \nonumber \\
    &\leq \mbb E \left[\exp\left(\beta\left(\sum_{i=1}^{n}\lVert \bs{X}_{i}\rVert^2-n(\rho+\delta\right)    \right)            \right]\nonumber \\
    &=\left[\exp(-[\rho+\delta]\beta)\mbb E\left[ \exp(\beta \lVert \bs{X} \rVert^2              \right]\right]^n,
    \label{eqprob}
\end{align}
where we used the $\bs{X}_is$ are i.i.d..
By a standard calculation which follows below, one can show that
\begin{align}
    \mbb E\left[ \exp(\beta \lVert \bs{X}\rVert^2)    \right]&=\mbb E \left[\exp(\beta\bs{X}^H\bs{X})           \right] \nonumber \\
    &=\prod_{j=1}^{N}(1-\beta\mu_j)^{-1} \quad \beta<\beta_0, \nonumber
\end{align}
where $\mu_1,\hdots,\mu_{N}$ are the eigenvalues of $\mbf O$, and for $\beta_0=\frac{1}{\rho}\leq \frac{1}{\mu_1+\hdots+\mu_N}\leq \underset{j\in \{1,\hdots,N\}}{\min}\frac{1}{\mu_j}$  so that all the factors are positive, whether $\mbf O$ is non-singular or singular.
To prove this, we let $r$ be the rank of $\mbf O.$ It holds that $r\leq N$. We make use of the spectral decomposition theorem to express $\mbf O$ as $\mbf S_{\mbf O}^{\star} \Lambda^{\star}{S_{\mbf O}^{\star}}^{H} $, where $\Lambda^{\star}$ is a diagonal matrix whose first $r$ diagonal elements are positive and where the remaining diagonal elements are equal to zero.
Next, we let $\mbf V^{\star}=\mbf S_{\mbf O}^{\star} {\Lambda^{\star}}^{\frac{1}{2}}$ and remove the $N-r$ last columns of $\mbf V^{\star}$, which are null vectors to obtain the matrix $\mbf V.$ Then, it can be verified that $\mbf O=\mbf V \mbf V^{H}.$
We can write $\bs{X}=\mbf  V \bs{U}^\star$
where $\bs{U}^\star \sim\mathcal{N}_{\mbb C}(\bs{0},\mbf I_{r}).$
As a result:
\begin{align}
    \bs{X}^H\bs{X}={(\bs{U}^\star)}^{H}\mbf V^{H}\mbf V \bs{U}^\star. \nonumber
\end{align}
Let $\mbf S$ be a unitary matrix which diagonalizes $\mbf V^{H} \mbf V$ such that $\mbf S^{H} \mbf V^{H} \mbf V \mbf S= \text{Diag}(\mu_1,\hdots,\mu_r)$ with $\mu_1,\hdots,\mu_r$ being the positive eigenvalues of $\mbf O=\mbf V \mbf V^{H}$ in decreasing order.
One defines $\bs{U}=\mbf S^{H} \bs{U}^{\star}.$ We have
\begin{align}
    \text{cov}(\bs{U})&=\mbf S^{H} \text{cov}(\bs{U}^{\star}) \mbf S \nonumber \\
    &=\mbf S^{H} \mbf S \nonumber \\
    &=\mbf I_{r}. \nonumber
\end{align}
Therefore, it holds that $\bs{U} \sim \mathcal{N}_{\mbb C}(\bs{0},\mbf I_r).$
Since $\mbf S$ is unitary, we have
\begin{align}
    \bs{X}^{H}\bs{X}&=\left((\mbf S^{H})^{-1}\bs{U}\right)^{H} \mbf V^{H} \mbf V (\mbf S^{H})^{-1} \bs{U} \nonumber \\
    &=\bs{U}^{H}\mbf S^{H} \mbf V^{H} \mbf V \mbf S \bs{U} \nonumber \\
    &=\bs{U}^{H}\text{Diag}(\mu_1,\hdots,\mu_r)\bs{U} \nonumber \\
    &=\sum_{j=1}^{r} \mu_j |\bs{U}_{j}|^{2}.\nonumber
\end{align}
Then, we have
\begin{align}
    \mbb E\left[ \exp(\beta \lVert \bs{X}\rVert^2)    \right] &= \mbb E \left[\prod_{j=1}^{r}\exp(\frac{1}{2}\beta\mu_j 2|\bs{U}_j|^2  )             \right] \nonumber \\
    \nonumber \\
    &=\prod_{j=1}^{r} \mbb E \left[ \exp(\frac{1}{2}\beta\mu_j 2|\bs{U}_j|^2  )       \right] \nonumber \\
    &=\prod_{j=1}^{N} (1-\beta\mu_j)^{-1},             \nonumber 
\end{align}
where we used that all the $\bs{U}_j$'s are independent, that $ \forall j \in \{1,\hdots,r\},  2|\bs{U}_{j}|^2$ is chi-square distributed with $k=2$ degrees of freedom and with moment generating function equal to $\mbb E\left[\exp(2t|\bs{U}_j|^2)\right]=(1-2t)^{-k/2}$ for $t<\frac{1}{2}$ and that $\forall j \in \{1,\hdots,r\}$ and for $\beta<\beta_0,$ $\frac{1}{2}\beta\mu_j<\frac{1}{2}$. This completes the standard calculation.

Now, it holds that
$$ \prod_{i=1}^{N}(1-\beta\mu_i)\geq 1-\beta(\mu_1+\hdots+\mu_{N})\geq 1-\beta \rho.         $$ This yields
\begin{align}
    &\exp(-(\rho+\delta)\beta)\mbb E\left[ \exp(\beta \lVert \bs{X}\rVert^2    \right] \leq \frac{\exp(-(\rho+\delta)\beta)}{1-\beta \rho}, \nonumber
\end{align}
where $0<\beta<\frac{1}{\rho}=\beta_0.$
Putting $\beta=\frac{\delta}{\rho(\delta+\rho)}<\frac{1}{\rho}$ yields
\begin{align}
    \exp(-(\rho+\delta)\beta)\mbb E\left[ \exp(\beta \lVert \bs{X}\rVert^2)    \right]&\leq (1+\frac{\delta}{\rho})\exp(-\frac{\delta}{\rho})\nonumber \\
    &=(1+\frac{\delta}{\rho})2^{(-\frac{\delta}{\ln(2)\rho})},
\nonumber \end{align}
which combined with \eqref{eqprob} proves the lemma.
\end{proof}
\color{black}
By Lemma \ref{probnotinconstraint}, it holds that
\begin{align}
    \mbb P \left[\sum_{i=1}^{n}\lVert \bs{T}_i\rVert^2 \geq n(\hat{P}+\beta)  \right] &\leq \left[(1+\frac{\beta}{\hat{P}})2^{(-\frac{\beta}{\ln(2)\hat{P}})}\right]^{n} \nonumber \\
    &=2^{\left(-n\frac{\beta}{\ln(2)\hat{P}}+n\log(1+\frac{\beta}{\hat{P}})\right)} \nonumber \\
    &= 2^{-n\hat{\beta}}.\nonumber
\end{align}

As a result, we have
\begin{align}
    \mbb P\left[ \bs{T}^n \notin E_n\right]&= \mbb P \left[\sum_{i=1}^{n}\lVert \bs{T}_i\rVert^2 > nP \right]\nonumber \\
    &\leq  \mbb P \left[\sum_{i=1}^{n}\lVert \bs{T}_i\rVert^2 \geq n(\hat{P}+\beta)  \right] \nonumber \\
    &\leq 2^{-n\hat{\beta}}.\label{upperprob2}
\end{align}
\color{black}
\subsubsection{Upper-bound for  $ \mbb P \left[\frac{1}{n}i(\bs{T}^n;\bs{Z}^n,\mbf G^n)  \leq  \phi(\tilde{\mbf Q})-\frac{\delta}{2}\right]
$}
Let us introduce the following lemma:
\begin{lemma}
\label{infdensity}
 \begin{align}
 i(\bs{T}^n;\bs{Z}^n,\mbf G^n)=\sum_{i=1}^{n} i(\bs{T}_i;\bs{Z}_i,\mbf G_i).
 \nonumber \end{align}
\end{lemma}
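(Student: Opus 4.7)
The plan is to expand the information density as the logarithm of a Radon--Nikodym-type ratio of joint densities and then factor both numerator and denominator into products over the time index $i$, so that the joint density of $\mbf G^n$ cancels between the two.

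First, I would write
\begin{align}
 i(\bs{T}^n;\bs{Z}^n,\mbf G^n) = \log\frac{p_{\bs{Z}^n,\mbf G^n\mid \bs{T}^n}(\bs{z}^n,\mbf g^n\mid \bs{t}^n)}{p_{\bs{Z}^n,\mbf G^n}(\bs{z}^n,\mbf g^n)}. \nonumber
\end{align}
Next, I would use the channel model assumptions spelled out after \eqref{channelmodel.}: $\mbf G^n$ is independent of $\bs{T}^n$, the noise $\bs{\xi}^n$ is i.i.d.\ and independent of $(\mbf G^n,\bs{T}^n)$, and $\bs{Z}_i$ is a deterministic function of $(\mbf G_i,\bs{T}_i,\bs{\xi}_i)$. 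From this, two factorizations follow. The numerator factors as
\begin{align}
 p_{\bs{Z}^n,\mbf G^n\mid \bs{T}^n}(\bs{z}^n,\mbf g^n\mid \bs{t}^n)
 = p_{\mbf G^n}(\mbf g^n)\prod_{i=1}^{n} p_{\bs{Z}_i\mid \bs{T}_i,\mbf G_i}(\bs{z}_i\mid \bs{t}_i,\mbf g_i), \nonumber
\end{align}
while, using additionally that the $\bs{T}_i$'s are i.i.d., the denominator factors as
\begin{align}
 p_{\bs{Z}^n,\mbf G^n}(\bs{z}^n,\mbf g^n)
 = p_{\mbf G^n}(\mbf g^n)\prod_{i=1}^{n} p_{\bs{Z}_i\mid \mbf G_i}(\bs{z}_i\mid \mbf g_i). \nonumber
\end{align}
The only subtlety here is the second factorization, where one must check that, conditioned on the entire gain block $\mbf G^n$, the outputs $\bs{Z}_i$ are mutually independent with marginals depending solely on the corresponding $\mbf G_i$. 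This is exactly where I expect the main (though minor) obstacle to lie: one has to exploit both the i.i.d.\ structure of $(\bs{T}_i,\bs{\xi}_i)$ and the joint independence of $(\bs{T}^n,\bs{\xi}^n)$ from $\mbf G^n$ to conclude that conditioning on $\mbf G^n$ does not couple the $\bs{Z}_i$'s across time.

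Once these two product forms are in hand, the $p_{\mbf G^n}(\mbf g^n)$ factors cancel, and taking the logarithm gives
\begin{align}
 i(\bs{T}^n;\bs{Z}^n,\mbf G^n)
 = \sum_{i=1}^{n} \log \frac{p_{\bs{Z}_i\mid \bs{T}_i,\mbf G_i}(\bs{z}_i\mid \bs{t}_i,\mbf g_i)}{p_{\bs{Z}_i\mid \mbf G_i}(\bs{z}_i\mid \mbf g_i)}. \nonumber
\end{align}
Finally, using once more that $\mbf G_i$ is independent of $\bs{T}_i$, multiplying numerator and denominator inside each logarithm by $p_{\mbf G_i}(\mbf g_i)$ identifies each summand as the single-letter information density
\begin{align}
 \log \frac{p_{\bs{Z}_i,\mbf G_i\mid \bs{T}_i}(\bs{z}_i,\mbf g_i\mid \bs{t}_i)}{p_{\bs{Z}_i,\mbf G_i}(\bs{z}_i,\mbf g_i)} = i(\bs{T}_i;\bs{Z}_i,\mbf G_i), \nonumber
\end{align}
which yields the claimed additive decomposition.
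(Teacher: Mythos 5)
Your proposal is correct and follows essentially the same route as the paper's own proof: write the information density as $\log\bigl(p_{\bs{Z}^n,\mbf G^n\mid\bs{T}^n}/p_{\bs{Z}^n,\mbf G^n}\bigr)$, factor both numerator and denominator into a product over $i$ (using the independence of $\mbf G^n$ from $\bs{T}^n$, the independence of the $(\bs{T}_i,\bs{\xi}_i)$ pairs across time, and the Markov reduction $p_{\bs{Z}_i\mid\mbf G^n,\bs{T}^n}=p_{\bs{Z}_i\mid\mbf G_i,\bs{T}_i}$ and $p_{\bs{Z}_i\mid\mbf G^n}=p_{\bs{Z}_i\mid\mbf G_i}$), and finally reinstate $p_{\mbf G_i}$ inside each log to recognize $i(\bs{T}_i;\bs{Z}_i,\mbf G_i)$. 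The only cosmetic difference is that the paper cancels $p_{\mbf G^n}$ up front by passing to $\log\bigl(p_{\bs{Z}^n\mid\mbf G^n,\bs{T}^n}/p_{\bs{Z}^n\mid\mbf G^n}\bigr)$, whereas you carry $p_{\mbf G^n}$ as an explicit factor and cancel it one step later; the subtlety you flag (conditional independence of the $\bs{Z}_i$'s given $\mbf G^n$ with marginals depending only on $\mbf G_i$) is exactly the point the paper handles with its Markov-chain justification in steps $(a)$ and $(b)$.
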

\color{black}
\begin{proof}
We have
\begin{align}
i(\bs{T}^n;\bs{Z}^n,\mbf G^n)
&=\log\left(\frac{p_{\bs{T}^n,\bs{Z}^n,\mbf G^n}\left(\bs{T}^n,\bs{Z}^n,\mbf G^n \right)}{p_{\bs{Z}^n,\mbf G^n}\left(\bs{Z}^n,\mbf G^n\right)p_{\bs{T}^{n}}(\bs{T}^{n})}\right)\nonumber \\
&=\log\left(\frac{p_{\bs{Z}^n,\mbf G^n|\bs{T}^{n}}\left(\bs{Z}^n,\mbf G^n |\bs{T}^{n}\right)}{p_{\bs{Z}^n,\mbf G^n}\left(\bs{Z}^n,\mbf G^n\right)}\right).
\nonumber
\end{align}
Since $\mbf G^n$ and $\bs{T}^n$ are independent, we have
\begin{align}
    \log\left(\frac{p_{\bs{Z}^n,\mbf G^n|\bs{T}^{n}}\left(\bs{Z}^n,\mbf G^n |\bs{T}^{n}\right)}{p_{\bs{Z}^n,\mbf G^n}\left( \bs{Z}^n,\mbf G^n\right)}\right)=\log\left(\frac{p_{\bs{Z}^n|\mbf G^n,\bs{T}^{n}}\left(\bs{Z}^n|\mbf G^n ,\bs{T}^{n}\right)}{p_{\bs{Z}^n|\mbf G^{n}}\left( \bs{Z}^n|\mbf G^{n}\right)}\right). \nonumber
\end{align}
Furthermore, since conditioned on $(\mbf G^n, \bs{T}^n)$, the outputs are independent, we have
\begin{align}
    \log\left(\frac{p_{\bs{Z}^n|\mbf G^n,\bs{T}^{n}}\left(\bs{Z}^n|\mbf G^n ,\bs{T}^{n}\right)}{p_{\bs{Z}^n|\mbf G^{n}}\left( \bs{Z}^n|\mbf G^{n}\right)}\right)=\log\left(\frac{\prod_{i=1}^{n}p_{\bs{Z}_i|\mbf G^n,\bs{T}^n}\left(\bs{Z}_i|\mbf G^n ,\bs{T}^n\right)}{p_{\bs{Z}^n|\mbf G^{n}}\left( \bs{Z}^n|\mbf G^{n}\right)}\right)\nonumber.
\end{align}
This yields

\begin{align} 
&i(\bs{T}^n;\bs{Z}^n,\mbf G^n) \nonumber \\
&=\log\left(\frac{\prod_{i=1}^{n}p_{\bs{Z}_i|\mbf G^n,\bs{T}^n}\left(\bs{Z}_i|\mbf G^n ,\bs{T}^n\right)}{p_{\bs{Z}^n|\mbf G^{n}}\left( \bs{Z}^n|\mbf G^{n}\right)}\right) \nonumber\\
&\overset{(a)}{=}\log\left(\frac{\prod_{i=1}^{n}p_{\bs{Z}_i|\mbf G_i,\bs{T}_i}\left(\bs{Z}_i|\mbf G_i ,\bs{T}_i\right)}{p_{\bs{Z}^n|\mbf G^{n}}\left( \bs{Z}^n|\mbf G^{n}\right)}\right),\nonumber
\end{align}
where $(a)$ follows because $$ \mbf G_{1}\bs{T}_{1}\dots \mbf G_{i-1}\bs{T}_{i-1} \mbf G_{i+1}\bs{T}_{i+1}\dots \mbf G_{n}\bs{T}_{n}\bs{Z}^{i-1} \circlearrow{\mbf G_i\bs{T}_{i}}\circlearrow{\bs{Z}_{i}}$$ forms a Markov chain.

Now since conditioned on $\mbf G^n$ and for independent inputs, the outputs are independent, we have
\begin{align}
&\log\left(\frac{\prod_{i=1}^{n}p_{\bs{Z}_i|\mbf G_i,\bs{T}_i}\left(\bs{Z}_i|\mbf G_i ,\bs{T}_i\right)}{p_{\bs{Z}^n|\mbf G^{n}}\left( \bs{Z}^n|\mbf G^{n}\right)}\right) \nonumber \\
&=\log\left(\frac{\prod_{i=1}^{n}p_{\bs{Z}_i|\mbf G_i,\bs{T}_i}\left(\bs{Z}_i|\mbf G_i, \bs{T}_i\right)}{\prod_{i=1}^{n}p_{\bs{Z}_i|\mbf G^{n}}\left( \bs{Z}_i|\mbf G^{n}\right)}\right).\nonumber
\end{align}
It follows that

\begin{align}
&i(\bs{T}^n;\bs{Z}^n,\mbf G^n)\nonumber \\
&=\log\left(\frac{\prod_{i=1}^{n}p_{\bs{Z}_i|\mbf G_i,\bs{T}_i}\left(\bs{Z}_i|\mbf G_i ,\bs{T}_i\right)}{\prod_{i=1}^{n}p_{\bs{Z}_i|\mbf G^{n}}\left( \bs{Z}_i|\mbf G^{n}\right)}\right) \nonumber \\
&\overset{(b)}{=}\log\left(  \frac{ \prod_{i=1}^{n} p_{\bs{Z}_i|\mbf G_i,\bs{T}_i}\left(\bs{Z}_i|\mbf G_i ,\bs{T}_i\right)   }{\prod_{i=1}^{n}  p_{\bs{Z}_i|\mbf G_i}\left( \bs{Z}_i|\mbf G_i\right)}   \right)\nonumber \\
&=\log\left( \prod_{i=1}^{n}  \frac{ p_{\bs{Z}_i|\mbf G_i,\bs{T}_i}\left(\bs{Z}_i|\mbf G_i ,\bs{T}_i\right)   }{ p_{\bs{Z}_i|\mbf G_i}\left( \bs{Z}_i|\mbf G_i\right)}   \right)\nonumber \\
&=\sum_{i=1}^{n} \log\left(  \frac{ p_{\bs{Z}_i|\mbf G_i,\bs{T}_i}\left(\bs{Z}_i|\mbf G_i ,\bs{T}_i\right)   }{ p_{\bs{Z}_i|\mbf G_i}\left( \bs{Z}_i|\mbf G_i\right)}   \right) \nonumber \\
&=\sum_{i=1}^{n} \log\left(  \frac{ p_{\bs{Z}_i|\mbf G_i,\bs{T}_i}\left(\bs{Z}_i|\mbf G_i ,\bs{T}_i\right)p_{\mbf G_i,\bs{T}_i}(\mbf G_i,\bs{T}_i)    }{ p_{\bs{Z}_i|\mbf G_i}\left( \bs{Z}_i|\mbf G_i\right)p_{\mbf G_i,\bs{T}_i}(\mbf G_i,\bs{T}_i) }  \right) \nonumber \\
&\overset{(c)}{=}\sum_{i=1}^{n} \log\left(  \frac{ p_{\bs{Z}_i,\mbf G_i,\bs{T}_i}\left(\bs{Z}_i,\mbf G_i ,\bs{T}_i\right)   }{ p_{\bs{Z}_i,\mbf G_i}\left( \bs{Z}_i,\mbf G_i\right)p_{\bs{T}_i}(\bs{T}_i) }   \right) \nonumber \\
&=\sum_{i=1}^{n} i(\bs{T}_i;\bs{Z}_i,\mbf G_i),
\nonumber \end{align}
where
$(b)$ follows because conditioned on $\mbf G_i,$  $\bs{Z}_i$ is independent of  $\mbf G_1,\hdots,\mbf G_{i-1},\mbf G_{i+1},\hdots,\mbf G_{n}$ since $(\bs{T}_i,\bs{\xi}_i)$ is independent of $\mbf G_1,\hdots,\mbf G_{i-1},\mbf G_{i+1},\hdots,\mbf G_{n}$ and $(c)$ follows because $\bs{T}_i$ and $\mbf G_i$ are independent for $i=1\hdots n.$
\end{proof}
\color{black}
 Now, recall that we chose the inputs $\bs{T}^n$ of $W_{\mbf G^n}$ to be i.i.d such that   $\bs{T}_i \sim \mc N_{\mbb C}\left( \bs{0}_{N_{T}},\tilde{\mbf Q}\right), i=1\hdots n.$ We have using Lemma \ref{infdensity}
\begin{align}
    \color{black}\mbb E \left[\frac{1}{n}i(\bs{T}^n;\bs{Z}^n,\mbf G^n )\right] \color{black}&= \frac{1}{n}\mbb E \left[\sum_{i=1}^{n} i(\bs{T}_i;\bs{Z}_i,\mbf G_i) \right] \nonumber \\
    &=\frac{1}{n}\sum_{i=1}^{n} \mbb E \left[i(\bs{T}_i;\bs{Z}_i,\mbf G_i)  \right] \nonumber \\
    &=\frac{1}{n} \sum_{i=1}^{n}I(\bs{T}_i;\bs{Z}_i,\mbf G_i) \nonumber \\
    &=\frac{1}{n} \sum_{i=1}^{n}\left(I(\bs{T}_i;\bs{Z}_i|\mbf G_i)+I(\bs{T}_i,\mbf G_i)\right) \nonumber \\
    &=\frac{1}{n} \sum_{i=1}^{n}I(\bs{T}_i;\bs{Z}_i|\mbf G_i) \nonumber \\
    &\overset{(a)}{=}\frac{1}{n} \sum_{i=1}^{n}\mbb E \left[\log\det\left(\mathbf{I}_{N_{R}}+\frac{1}{\sigma^2}\mathbf{G}_i\tilde{\mbf Q}\mathbf{G}_i^{H}\right)\right] \nonumber \\
    &\overset{(b)}{=}\color{black}\mbb E \left[\log\det\left(\mathbf{I}_{N_{R}}+\frac{1}{\sigma^2}\mathbf{G}\tilde{\mbf Q}\mathbf{G}^{H}\right)\right] \nonumber \\
    &=\phi(\tilde{\mbf Q})\color{black},
\nonumber \end{align}
 where $(a)$ follows because  $\bs{\xi}_i \sim \mathcal{N}_{\mathbb{C}}(\bs{0}_{N_R},\sigma^2 \mathbf{I}_{N_R}), \ i=1, \hdots,n$ and because all the $\bs{T}_i's$ are i.i.d. such that   $\bs{T}_i \sim \mc N_{\mbb C}\left( \bs{0}_{N_{T}},\tilde{\mbf Q}\right), i=1\hdots n.$
and $(b)$ follows because from Lemma \ref{distributiongain}, we know that    $\vect\left(\mbf G_i\right) \sim \mc N_{\mbb C}\left(\mbf 0_{N_{R}N_{T}},\mbf I_{N_{R}N_{T}}   \right), i=1 \hdots n$ and because  $\vect\left(\mbf G\right)\sim \mc N_{\mbb C}\left(\bs{0}_{N_{R}N_{T}},\mbf I_{N_{R}N_{T}}   \right).$
It follows that

  \begin{align}
    &\color{black}\mbb P \left[\frac{1}{n}i(\bs{T}^n;\bs{Z}^n,\mbf G^n)  \leq  \phi(\tilde{\mbf Q})-\frac{\delta}{2}\right] \nonumber \\
     &= \mbb P \left[\frac{1}{n}i(\bs{T}^n;\bs{Z}^n,\mbf G^n)  \leq  \mbb E \left[\frac{1}{n}i(\bs{T}^n;\bs{Z}^n,\mbf G^n)  \right]-\frac{\delta}{2}\right] \nonumber \\
     &\leq  \mbb P \left[\Bigg\vert \frac{1}{n}i(\bs{T}^n;\bs{Z}^n,\mbf G^n) - \mbb E \left[\frac{1}{n}i(\bs{T}^n;\bs{Z}^n,\mbf G^n)  \right]\Bigg\vert \geq \frac{\delta}{2}\right] \nonumber \\
     &\overset{(a)}{\leq} \color{black}\frac{4\mathrm{var}\left(\frac{i\left(\bs{T}^n;\bs{Z}^n,\mbf G^n\right)}{n} \right)}{\delta^2} \nonumber \\
     &\overset{(b)}{\leq} \frac{4\kappa(n)}{\delta^2}\color{black},
 \label{upperprob3} \end{align}
 where $(a)$ follows from the Chebyshev's inequality and $(b)$ follows 
because $\mathrm{var}\left(\frac{i\left(\bs{T}^n;\bs{Z}^n,\mbf G^n\right)}{n} \right)\leq \kappa(n)$ for some $\kappa(n)>0$ with $\underset{n\rightarrow \infty}{\lim}\kappa(n)=0$ (from the auxiliary result of Lemma \ref{boundvariance}).

From \eqref{upperprob1}, \eqref{upperprob2} and \eqref{upperprob3}, we obtain
\begin{align}
    e_{\max}(\Gamma_n) \leq  4\frac{\kappa(n)}{\delta^2}+2^{-n\hat{\beta}} +2^{-n\frac{\delta}{4}}, 
\nonumber \end{align}
where $ \underset{n\rightarrow \infty}{\lim} 4\frac{\kappa(n)}{\delta^2}+2^{-n\hat{\beta}}  +2^{-n\frac{\delta}{4}}=0.$
Therefore, for sufficiently large $n,$ it holds that
  $e_{\max}(\Gamma_n)\leq \theta$. This completes the direct proof of Theorem \ref{single-letter characterization}. 

\subsection{Converse Proof}
\label{converse}
Let $R$ be any achievable rate for the channel $W_{\mbf G^n}$ in \eqref{channelmodel.}.  So, for every $\theta,\delta>0$, there exists a code sequence $(\Gamma_n)_{n=1}^\infty$  such that 
    \[
        \frac{\log\lVert \Gamma_n\rVert}{n}\geq R-\delta
    \]
    and 
   \begin{align}
     e_{\max}(\Gamma_n)=\underset{\ell\in\{1\hdots\lVert\Gamma_n \rVert \}}{\max} \mbb E \left[W_{\mbf G^n}({\setd_\ell^{(\mbf G^n)_c}}|\bs{t}_\ell)\right]\leq\theta   \label{maxerror}
     \end{align}
    for sufficiently large $n$.

 Notice that from \eqref{maxerror}, it follows that the average error probability is also bounded from above by $\theta.$ The uniformly-distributed message $M$ is mapped to the random input sequence $\bs{T}^n=(\bs{T}_1,\hdots,\bs{T}_n)$ of the channel in \eqref{channelmodel.}, where the covariance matrix of each input $\bs{T}_i$ is denoted by $\mbf Q_i.$   Let $(\bs{Z}^n,\mbf G^n)$ the corresponding outputs, where $\bs{Z}^n=(\bs{Z}_1,\hdots,\bs{Z}_n).$ We define $\mbf Q^\star$ such that
$\mbf Q^\star=\frac{1}{n}\sum_{i=1}^{n}\mbf Q_i.$ We model the random decoded message by $\hat{M}.$ The set of messages is denoted by $\mathcal{M}.$ 
 \color{black}
 \begin{lemma}
\label{traceQ}
$$\mathrm{tr}(\mbf Q^\star)\leq P$$
\end{lemma}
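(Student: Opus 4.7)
The plan is to simply push the per-codeword average power constraint through to the covariance matrix via linearity of expectation.

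First I would recall that by Definition \ref{defcode}, every codeword $\mbf t_\ell = (\bs{t}_{\ell,1},\ldots,\bs{t}_{\ell,n})$ of the code $\Gamma_n$ satisfies the hard constraint $\frac{1}{n}\sum_{i=1}^{n} \bs{t}_{\ell,i}^H \bs{t}_{\ell,i} \leq P$. Since the message $M$ is uniformly distributed over the index set $\mathcal{M}$ of codewords and the random input sequence $\bs{T}^n$ is the image of $M$ under the encoder, this pointwise bound transfers to the random codeword, so $\frac{1}{n}\sum_{i=1}^{n} \bs{T}_i^H \bs{T}_i \leq P$ holds almost surely.

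Next I would take expectations and use linearity together with the identity $\mbb E[\bs{T}_i^H \bs{T}_i] = \mathrm{tr}(\mbb E[\bs{T}_i \bs{T}_i^H])$. Since $\mbf Q_i$ is the covariance matrix of $\bs{T}_i$, we have $\mbf Q_i = \mbb E[\bs{T}_i \bs{T}_i^H] - \mbb E[\bs{T}_i]\mbb E[\bs{T}_i]^H$, and hence
\begin{align}
\mathrm{tr}(\mbf Q_i) = \mbb E[\bs{T}_i^H \bs{T}_i] - \lVert \mbb E[\bs{T}_i] \rVert^2 \leq \mbb E[\bs{T}_i^H \bs{T}_i]. \nonumber
\end{align}
Averaging over $i=1,\ldots,n$ and invoking the almost-sure bound above gives
\begin{align}
\mathrm{tr}(\mbf Q^\star) = \frac{1}{n}\sum_{i=1}^{n}\mathrm{tr}(\mbf Q_i) \leq \frac{1}{n}\sum_{i=1}^{n}\mbb E[\bs{T}_i^H \bs{T}_i] = \mbb E\!\left[\frac{1}{n}\sum_{i=1}^{n}\bs{T}_i^H \bs{T}_i\right] \leq P, \nonumber
\end{align}
which is the desired inequality.

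This is essentially a one-line argument, so there is no real obstacle; the only point worth being careful about is the distinction between $\mbb E[\bs{T}_i\bs{T}_i^H]$ (the correlation matrix) and the covariance matrix $\mbf Q_i$, but since subtracting the rank-one mean outer product only decreases the trace, the inequality goes the right way regardless of whether the inputs are assumed to be zero mean.
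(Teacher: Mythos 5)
Your proof is correct and follows essentially the same route as the paper: push the almost-sure per-codeword power constraint through expectation, linearity, and cyclicity of the trace, noting that the covariance matrix has trace at most that of the correlation matrix. You make the covariance-versus-correlation subtraction slightly more explicit (the paper simply writes $\mathrm{tr}(\mbf Q_i)\leq \mathrm{tr}(\mbb E[\bs T_i\bs T_i^H])$), but the underlying argument is identical.
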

\color{black}
\begin{proof}
From \eqref{powerconstraint}, it holds that
\begin{align}
    \frac{1}{n}\sum_{i=1}^{n} \bs{T}_{i}^H\bs{T}_{i}\leq P, \quad \text{almost surely}. \nonumber
\end{align}
This implies that
\begin{align}
    \mbb E\left[\frac{1}{n}\sum_{i=1}^{n} \bs{T}_{i}^H\bs{T}_{i}\right] &=\frac{1}{n}\sum_{i=1}^{n}\mbb E\left[ \bs{T}_{i}^H\bs{T}_{i}\right] \nonumber \\
    &\leq P.\nonumber
\end{align}
This yields
\begin{align}
    \mathrm{tr}\left[\mbf Q^\star\right] &= \mathrm{tr} \left[\frac{1}{n}\sum_{i=1}^{n} \mbf Q_{i}\right] \nonumber \\
    &=\frac{1}{n}\sum_{i=1}^{n}\mathrm{tr}\left[ \mbf Q_i\right] \nonumber\\
    &\leq \frac{1}{n}\sum_{i=1}^{n} \mathrm{tr}\left(\mbb E\left[ \bs{T}_{i}\bs{T}_{i}^H\right]\right) \nonumber \\
    &=\frac{1}{n}\sum_{i=1}^{n} \mbb E\left[ \mathrm{tr}\left(\bs{T}_{i}\bs{T}_{i}^H\right)\right] \nonumber \\
    &=\frac{1}{n}\sum_{i=1}^{n} \mbb E\left[ \mathrm{tr}\left(\bs{T}_{i}^H\bs{T}_{i}\right)\right] \nonumber \\
    &=\frac{1}{n}\sum_{i=1}^{n} \mbb E\left[ \bs{T}_{i}^H\bs{T}_{i}\right] \nonumber \\
    &\leq P, \nonumber
\end{align}
where we used $r=\mathrm{tr}(r)$ for scalar $r$, $\mathrm{tr}\left(\mbf A \mbf B\right)= \mathrm{tr}\left(\mbf B \mbf A\right)$ and the linearity of the expectation and of the trace operators.
\end{proof}
\color{black}
 By using $\Gamma_n$ as a transmission-code for the channel $W_{\mbf G^n}$, it follows using the fact that $M$ and $\mbf G^n$ are independent that
\begin{align}
    \mbb P\left[ \hat{M}\neq M\right]&= \mbb E \left[\mbb P\left[ M\neq \hat{M}|\mbf G^n\right]   \right] \nonumber \\
    &= \mbb E \left[ \sum_{\ell=1}^{\lvert \mc M \rvert }\mbb P [M=\ell]\mbb P\left[ \hat{M}\neq \ell |M=\ell, \mbf G^n\right]   \right] \nonumber \\
    &=\sum_{\ell=1}^{\lvert \mc M \rvert }\mbb P [M=\ell] \mbb E \left[ \mbb P\left[ \hat{M}\neq \ell |M=\ell, \mbf G^n\right]   \right] \nonumber \\
    &= \sum_{\ell=1}^{\lvert \mc M \rvert }\mbb P [M=\ell]\mbb E \left[W_{\mbf G^n}({\setd_\ell^{(\mbf G^n)_c}}|\bs{t}_\ell)\right] \nonumber \\
    &\leq e_{\max}(\Gamma_n) \nonumber \\
    &\leq  \theta. 
\nonumber \end{align}
Now, we have
\begin{align}
    H(M) &= \log |\mathcal{M}| \nonumber \\
        &=    \log \lVert \Gamma_n \rVert \nonumber \\
        &\geq n(R-\delta).
        \nonumber
\end{align}
By applying Fano's inequality, we obtain
\begin{align}
    H(M|\hat{M})&\leq 1+ \mbb P\left[M\neq \hat{M}\right] \log \lvert \mathcal{M} \rvert \nonumber \\
    &\leq 1+\theta \log \lvert \mathcal{M} \rvert \nonumber \\
    &=1+\theta H(M). \nonumber
\end{align}

Now, on the one hand, it holds that
\begin{align}
    I(M;\hat{M})&=H(M)-H(M|\hat{M}) \nonumber \\
    &\geq (1-\theta)H(M)-1, \nonumber 
\end{align}
which yields
\begin{align}
   H(M)\leq \frac{1+I(M;\hat{M})}{1-\theta}. \nonumber
\end{align}
On the other hand
\begin{align}
  &\frac{1}{n} I(M;\hat{M}) \nonumber \\
  &\overset{(a)}{\leq} \frac{1}{n} I(\bs{T}^{n};\bs{Z}^n,\mbf G^n) \nonumber \\
  &=\frac{1}{n} I(\bs{T}^{n};\bs{Z}^n|\mbf G^n)+\frac{1}{n} I(\bs{T}^n,\mbf G^n) \nonumber \\
&\overset{(b)}{=}\frac{1}{n} I(\bs{T}^{n};\bs{Z}^n|\mbf G^n) \nonumber \\
& \overset{(c)}{=} \frac{1}{n} \sum_{i=1}^{n} I(\bs{Z}_{i};\bs{T}^{n}|\mbf G^n,\bs{Z}^{i-1}) \nonumber \\
& = \frac{1}{n} \sum_{i=1}^{n} h(\bs{Z}_{i}|\mbf G^n,\bs{Z}^{i-1})-h(\bs{Z}_{i}|\mbf G^n,\bs{T}^{n},\bs{Z}^{i-1}) \nonumber \\
& \overset{(d)}{=}\frac{1}{n} \sum_{i=1}^{n} h(\bs{Z}_{i}|\mbf G^n,\bs{Z}^{i-1})-h(\bs{Z}_{i}|\mbf G_{i},\bs{T}_{i}) \nonumber \\
& \overset{(e)}{\leq}\frac{1}{n} \sum_{i=1}^{n} h(\bs{Z}_{i}|\mbf G_i)-h(\bs{Z}_{i}|\mbf G_i,\bs{T}_{i}) \nonumber \\
& = \frac{1}{n}\sum_{i=1}^{n}  I(\bs{T}_{i};\bs{Z}_{i}|\mbf G_i) \nonumber \\
& \overset{(f)}{\leq }\frac{1}{n} \sum_{i=1}^{n} \mbb E\left[ \log\det(\mbf I_{N_{R}}+\frac{1}{\sigma^2}\mbf G_{i} \mbf Q_{i} \mbf G_{i}^{H}) \right] \nonumber \\
&\overset{(g)}{=} \frac{1}{n} \sum_{i=1}^{n} \mbb E\left[ \log\det(\mbf I_{N_{R}}+\frac{1}{\sigma^2}\mbf G \mbf Q_{i} \mbf G^{H}) \right] \nonumber \\
&= \mbb E\left[\frac{1}{n}\sum_{i=1}^{n}\log\det(\mathbf{I}_{N_{R}}+\frac{1}{\sigma^2}\mathbf{G}\mathbf{Q}_{i} \mathbf{G}^{H})    \right] \nonumber \\
&\overset{(h)}{\leq} \mbb E\left[\log\det\left(\frac{1}{n}\sum_{i=1}^{n} \left[\mathbf{I}_{N_{R}}+\frac{1}{\sigma^2}\mathbf{G}\mathbf{Q}_{i} \mathbf{G}^{H}\right]\right)\right] \nonumber \\
&=\mbb E\left[\log\det\left(\mathbf{I}_{N_{R}}+\frac{1}{\sigma^2}\mathbf{G}\left(\frac{1}{n}\sum_{i=1}^{n}\mathbf{Q}_{i}\right)\mathbf{G}^{H}\right)\right] \nonumber \\
&=\mbb E\left[\log\det\left(\mathbf{I}_{N_{R}}+\frac{1}{\sigma^2}\mathbf{G}\mbf Q^\star \mathbf{G}^{H}\right)\right] \nonumber \\
&\overset{(i)}{\leq} \underset{\mbf Q \in \mathcal{Q}_{(P,N_T)}}{\max}\mbb E \left[\log\det\left(\mathbf{I}_{N_{R}}+\frac{1}{\sigma^2}\mathbf{G}\mbf Q\mathbf{G}^{H}\right)\right], \nonumber 
\end{align}
where $(a)$ follows from the Data Processing Inequality because $M\circlearrow{\bs{T}^n}\circlearrow{\mbf G^n,\bs{Z}^n}\circlearrow{\hat{M}}$ forms a Markov chain, $(b)$ follows because $\mbf G^n$ and  $\bs{T}^n$ are independent, $(c)$ follows from the chain rule for mutual information, $(d)$ follows because 
$$ \mbf G_{1}\bs{T}_{1}\dots \mbf G_{i-1}\bs{T}_{i-1} \mbf G_{i+1}\bs{T}_{i+1}\dots \mbf G_{n}\bs{T}_{n}\bs{Z}^{i-1} \circlearrow{\mbf G_i\bs{T}_{i}}\circlearrow{\bs{Z}_{i}}$$ forms a Markov chain, $(e)$ follows because conditioning does not increase entropy, $(f)$ follows because $\bs{\xi}_{i} \sim \mathcal{N}_{\mathbb{C}}\left(\mbf 0_{N_{R}},\sigma^2\mbf I_{N_{R}}\right),\ i=1\hdots n,$
$(g)$ follows because the $\mbf G_is$ are identically distributed from Lemma \ref{distributiongain} where $\mbf G$ is a random matrix that has the same distribution as each of the $\mbf G_i$ and
$(h)$   follows from Jensen's Inequality since the function $\log\circ\det$ is concave on the set of Hermitian positive semidefinite matrices and since $\mbf I_{N_{R}}+\frac{1}{\sigma^2}\mbf G\mbf Q_{i}\mbf G^H$ is Hermitian positive semidefinite for $i=1\hdots n$, $(i)$ follows because $\mbf Q^\star=\frac{1}{n}\sum_{i=1}^{n}\mbf Q_i \in \mathcal{Q}_{(P,N_T)}$ from Lemma \ref{traceQ}. 

As a result, we have
\begin{align}
    n(R-\delta) \leq \frac{n\underset{\mbf Q \in \mathcal{Q}_{(P,N_T)}}{\max}\mbb E \left[\log\det\left(\mathbf{I}_{N_{R}}+\frac{1}{\sigma^2}\mathbf{G}\mbf Q\mathbf{G}^{H}\right)\right] +1}{1-\theta}.
\nonumber \end{align}

This implies that 
\begin{align}
    R \leq \color{black}\frac{ \underset{\mbf Q \in \mathcal{Q}_{(P,N_T)}}{\max}\mbb E \left[\log\det\left(\mathbf{I}_{N_{R}}+\frac{1}{\sigma^2}\mathbf{G}\mbf Q\mathbf{G}^{H}\right)\right]+\frac{1}{n}}{1-\theta}+\delta. \color{black}
    \label{finalinequality}
\end{align}
In particular, we can choose $\delta,\theta>0$ to be arbitrarily small such that the right-hand side of \eqref{finalinequality} is equal to $\underset{\mbf Q \in \mathcal{Q}_{(P,N_T)}}{\max}\mbb E \left[\log\det\left(\mathbf{I}_{N_{R}}+\frac{1}{\sigma^2}\mathbf{G}\mbf Q\mathbf{G}^{H}\right)\right]+\delta'$ for $n\rightarrow \infty,$ with $\delta'$ being an arbitrarily small positive constant. This completes the converse proof of Theorem \ref{single-letter characterization}.

\section{Proof of Lemma \ref{boundvariance}}
\label{prooflemma}
Let $\bs{T}^{n}=\left(\bs{T}_1,\hdots,\bs{T}_n\right)$ be an $n$-length input sequence of the channel $W_{\mbf G^n}$ such that the $\bs{T}_{i}'s$ are i.i.d., where $\bs{T}_i \sim \mc N\left(\bs{0}_{N_{T}},\tilde{\mbf Q}\right),\ i=1\hdots n$  and where $\tilde{\mbf Q} \in \mc Q_{(P,N_T)}.$  Let $\bs{Z}^n$ be the corresponding output sequence, where $\bs{Z}^n=(\bs{Z}_1,\hdots,\bs{Z}_n).$ 
By Lemma \ref{infdensity}, it holds that
\begin{align}
 i(\bs{T}^n;\bs{Z}^n,\mbf G^n)=\sum_{i=1}^{n} i(\bs{T}_i;\bs{Z}_i,\mbf G_i). \label{sumdenisities}
 \end{align}

We have
\begin{equation}
    \mathrm{var}\left(\frac{i(\bs{T}^n;\bs{Z}^n,\mbf G^n)}{n} \right)=\frac{1}{n^2} \mbb E\left[i(\bs{T}^n;\bs{Z}^n,\mbf G^n)^2\right]-\frac{1}{n^2}\mbb E \left[i(\bs{T}^n;\bs{Z}^n,\mbf G^n)\right]^2. \label{equalityvariance}
\end{equation}
Let $\tilde{\mbf G}$ be any random  matrix with i.i.d. entries, independent of $\mbf G_1$ and $\mbf W_i,i=2,\hdots n$ such that $\vect(\tilde{\mbf G})\sim \mc N_{\mbb C}\left(\bs{0}_{N_RN_T},\mbf I_{N_RN_T} \right).$
By Lemma \ref{distributiongain}, it follows that $\tilde{\mbf G}$ has the same distribution as  $\mbf G_i,i=1,\hdots n.$ Furthermore, since $\tilde{\mbf G}$ is independent of $\mbf G_1$ and $\mbf W_i,i=2,\hdots n,$ it is also independent of all the $\mbf G_is.$
Now
\begin{align}
    &\frac{1}{n^2}\mbb E\left[i(\bs{T}^n;\bs{Z}^n,\mbf G^n)^2\right] \nonumber \\ \nonumber\\
    &=\frac{1}{n^2}\mbb E\left[\left(\sum_{i=1}^{n} i(\bs{T}_i;\bs{Z}_i,\mbf G_i)\right)^2\right]\nonumber \\ \nonumber \\
    &=\frac{1}{n^2}\sum_{i=1}^{n}\sum_{k=1,k\neq i}^{n}\mbb E\left[ i(\bs{T}_i;\bs{Z}_i,\mbf G_i)i(\bs{T}_k;\bs{Z}_k,\mbf G_{k}) \right] + \frac{1}{n^2}\sum_{i=1}^{n}\mbb E\left[ i(\bs{T}_i;\bs{Z}_i,\mbf G_i)^2\right] \nonumber \\ \nonumber \\
    &=\frac{1}{n^2}\sum_{i=1}^{n}\sum_{k=1,k\neq i}^{n}\mbb E \left[\mbb E\left[ i(\bs{T}_i;\bs{Z}_i,\mbf G_i)i(\bs{T}_k;\bs{Z}_k,\mbf G_{k})|\mbf G_i, \mbf G_{k} \right]\right] + \frac{1}{n^2}\sum_{i=1}^{n}\mbb E\left[ i(\bs{T}_i;\bs{Z}_i,\mbf G_i)^2\right] \nonumber \\ \nonumber \\
    &\overset{(a)}{=}\frac{1}{n^2}\sum_{i=1}^{n}\sum_{k=1,k\neq i}^{n}\mbb E \left[\mbb E\left[ i(\bs{T}_i;\bs{Z}_i,\mbf G_i)|\mbf G_i, \mbf G_{k}\right] \mbb E \left[i(\bs{T}_k;\bs{Z}_k,\mbf G_{k})|\mbf G_i, \mbf G_{k} \right]\right] + \frac{1}{n^2}\sum_{i=1}^{n}\mbb E\left[ i(\bs{T}_i;\bs{Z}_i,\mbf G_i)^2\right] \nonumber \\ \nonumber \\
    &\overset{(b)}{=}\frac{1}{n^2}\sum_{i=1}^{n}\sum_{k=1,k\neq i}^{n}\mbb E \left[\mbb E\left[ i(\bs{T}_i;\bs{Z}_i,\mbf G_i)|\mbf G_i\right] \mbb E \left[i(\bs{T}_k;\bs{Z}_k,\mbf G_{k})| \mbf G_{k} \right]\right] + \frac{1}{n^2}\sum_{i=1}^{n}\mbb E\left[ i(\bs{T}_i;\bs{Z}_i,\mbf G_i)^2\right] \nonumber \\ \nonumber \\
    &\overset{(c)}{=}\frac{1}{n^2}\sum_{i=1}^{n}\sum_{k=1,k\neq i}^{n}\mbb E \left[\log\det(\mbf I_{N_{R}}+\frac{1}{\sigma^2}\mbf G_i \tilde{\mbf Q} \mbf G_i^{H})\log\det(\mbf I_{N_{R}}+\frac{1}{\sigma^2}\mbf G_{k} \tilde{\mbf Q} \mbf G_{k}^{H})\right]  +\frac{1}{n^2}\sum_{i=1}^{n}\mbb E\left[ i(\bs{T}_i;\bs{Z}_i,\mbf G_i)^2\right],
    \label{upperboundofmean} 
\end{align}
where $(a)$ follows because for independent inputs and conditioned on $(\mbf G_i,\mbf G_j),$ $i(\bs{T}_i;\bs{Z}_i,\mbf G_i)$ and  $i(\bs{T}_j;\bs{Z}_j,\mbf G_j)$ are independent, $(b)$ follows because for independent inputs and conditioned on $\mbf G_i,$ $i(\bs{T}_i;\bs{Z}_i,\mbf G_i)$ and $\mbf G_k$ are independent, and because for independent inputs and conditioned on $\mbf G_k,$ $i(\bs{T}_k;\bs{Z}_k,\mbf G_k)$ and $\mbf G_i$ are independent, and $(c)$ follows because $\bs{\xi}_i \sim \mathcal{N}_{\mathbb{C}}(\bs{0}_{N_R},\sigma^2 \mathbf{I}_{N_R}), i=1, \hdots,n$ and because all the $\bs{T}_is$ are i.i.d. such that   $\bs{T}_i \sim \mc N_{\mbb C}\left( \bs{0}_{N_{T}},\tilde{\mbf Q}\right), i=1\hdots n.$

It holds using \eqref{sumdenisities} that
\begin{align}
    \frac{1}{n^2}\mbb E\left[i(\bs{T}^n;\bs{Z}^n,\mbf G^n)\right]^2 
    &= \frac{1}{n^2}\mbb E\left[\sum_{i=1}^{n} i(\bs{T}_i;\bs{Z}_i,\mbf G_i)\right]^2 \nonumber \\  \nonumber \\
    &=\frac{1}{n^2}\left(  \sum_{i=1}^{n} \mbb E\left[ i(\bs{T}_i;\bs{Z}_i,\mbf G_i)\right] \right)^{2} \nonumber \\ \nonumber \\
    &\geq \frac{1}{n^2}\sum_{i=1}^{n} \sum_{k=1,k\neq i}^{n}\mbb E\left[i(\bs{T}_i;\bs{Z}_i,\mbf G_i)\right]\mbb E\left[i(\bs{T}_k;\bs{Z}_k,\mbf G_k)\right] \nonumber \\ \nonumber \\
    &=\frac{1}{n^2}\sum_{i=1}^{n} \sum_{k=1,k\neq i}^{n} \mbb E \left[\mbb E\left[i(\bs{T}_i;\bs{Z}_i,\mbf G_i)|\mbf G_i\right]\right]\mbb E \left[\mbb E\left[i(\bs{T}_k;\bs{Z}_k,\mbf G_k)|\mbf G_{k}\right]\right] \nonumber \\ \nonumber\\
    &=\frac{1}{n^2}\sum_{i=1}^{n} \sum_{k=1,k\neq i}^{n}\mbb E \left[\log\det(\mbf I_{N_{R}}+\frac{1}{\sigma^2}\mbf G_i \tilde{\mbf Q} \mbf G_i^{H})\right]\mbb E \left[ \log\det(\mbf I_{N_{R}}+\frac{1}{\sigma^2}\mbf G_{k} \tilde{\mbf Q} \mbf G_{k}^{H})\right]\nonumber \\ \nonumber \\
    &=\frac{1}{n^2}\sum_{i=1}^{n} \sum_{k=1,k\neq i}^{n}\mbb E \left[ \log\det\left(\mbf I_{N_{R}}+\frac{1}{\sigma^2}\tilde{\mbf G} \tilde{\mbf Q} \tilde{\mbf G}^H\right)\right]^2.
    \label{lowerbound2}
\end{align}

It follows from \eqref{equalityvariance}, \eqref{upperboundofmean} and \eqref{lowerbound2} that
\begin{align}
     &\mathrm{var}\left(\frac{i(\bs{T}^n;\bs{Z}^n,\mbf G^n)}{n} \right) \nonumber \\  \nonumber \\
     &\leq \frac{1}{n^2}\sum_{i=1}^{n}\sum_{k=1,k\neq i}^{n}\mbb E \left[\log\det(\mbf I_{N_{R}}+\frac{1}{\sigma^2}\mbf G_i \tilde{\mbf Q} \mbf G_i^{H})\log\det(\mbf I_{N_{R}}+\frac{1}{\sigma^2}\mbf G_{k} \tilde{\mbf Q} \mbf G_{k}^{H})\right] \nonumber \\  \nonumber \\
     & \quad +\frac{1}{n^2}\sum_{i=1}^{n}\mbb E\left[ i(\bs{T}_i;\bs{Z}_i,\mbf G_i)^2\right]-\frac{1}{n^2}\sum_{i=1}^{n} \sum_{k=1,k\neq i}^{n}\mbb E \left[ \log\det\left(\mbf I_{N_{R}}+\frac{1}{\sigma^2}\tilde{\mbf G} \tilde{\mbf Q} \tilde{\mbf G}^H\right)\right]^2 \nonumber \\  \nonumber \\
     &=\frac{1}{n^2}\sum_{i=1}^{n}\sum_{k=1,k\neq i}^{n} \left(\mbb E \left[\log\det(\mbf I_{N_{R}}+\frac{1}{\sigma^2}\mbf G_i \tilde{\mbf Q} \mbf G_i^{H})\log\det(\mbf I_{N_{R}}+\frac{1}{\sigma^2}\mbf G_{k} \tilde{\mbf Q} \mbf G_{k}^{H})\right]-\mbb E \left[ \log\det\left(\mbf I_{N_{R}}+\frac{1}{\sigma^2}\tilde{\mbf G} \tilde{\mbf Q} \tilde{\mbf G}^H\right)\right]^2\right) \nonumber \\  \nonumber \\
     &\quad + \frac{1}{n^2}\sum_{i=1}^{n}\mbb E\left[ i(\bs{T}_i;\bs{Z}_i,\mbf G_i)^2\right]. \label{lasteq}
\end{align}
By defining for any $i,k\in\{1,\hdots n\}$ with $i\neq k,$
\begin{align}
m(i,k) &=\mbb E \left[\log\det(\mbf I_{N_{R}}+\frac{1}{\sigma^2}\mbf G_i \tilde{\mbf Q} \mbf G_i^{H})\log\det(\mbf I_{N_{R}}+\frac{1}{\sigma^2}\mbf G_k \tilde{\mbf Q} \mbf G_k^{H})\right]-\mbb E \left[ \log\det\left(\mbf I_{N_{R}}+\frac{1}{\sigma^2}\tilde{\mbf G} \tilde{\mbf Q} \tilde{\mbf G}^H\right)\right]^2, \nonumber
\end{align}
we obtain using \eqref{lasteq}
\begin{align}
     &\mathrm{var}\left(\frac{i(\bs{T}^n;\bs{Z}^n,\mbf G^n)}{n} \right)\nonumber \\  \nonumber \\ &\leq
     \frac{1}{n^2}\sum_{i=1}^{n}\sum_{k=1,k\neq i}^{n} m(i,k) +\frac{1}{n^2}\sum_{i=1}^{n}\mbb E\left[ i(\bs{T}_i;\bs{Z}_i,\mbf G_i)^2\right] \nonumber \\  \nonumber \\
     &=\frac{1}{n^2}\sum_{i=1}^{n}\sum_{k=1}^{i-1} m(i,k)+\frac{1}{n^2}\sum_{i=1}^{n}\sum_{k=i+1}^{n} m(i,k)+\frac{1}{n^2}\sum_{i=1}^{n}\mbb E\left[ i(\bs{T}_i;\bs{Z}_i,\mbf G_i)^2\right]. \label{sumterms}
\end{align}
Now, the goal is to find a suitable upper-bound for each term in \eqref{sumterms}.
\subsection{Upper-bound for $\frac{1}{n^2}\sum_{i=1}^{n}\sum_{k=1}^{i-1} m(i,k)+\frac{1}{n^2}\sum_{i=1}^{n}\sum_{k=i+1}^{n} m(i,k)$}
We are going to show that
\begin{align} \frac{1}{n^2}\sum_{i=1}^{n}\sum_{k=1}^{i-1} m(i,k)+\frac{1}{n^2}\sum_{i=1}^{n}\sum_{k=i+1}^{n} m(i,k)
    &\leq \frac{2c'}{n(1-\sqrt{\alpha})}, \nonumber
\end{align}
for some $c'>0.$
Let us first introduce and prove the following Lemma
\begin{lemma}
\label{meanproducti1i2}
Let $i_1,i_2\in \{1,\hdots n\}.$ Assume without loss of generality that $i_1<i_2,$ then
\begin{align}
&\mbb E \left[ \log\det(\mbf I_{N_{R}}+\frac{1}{\sigma^2}\mbf G_{i_{2}} \tilde{\mbf Q} \mbf G_{i_{2}}^{H})\log\det(\mbf I_{N_{R}}+\frac{1}{\sigma^2}\mbf G_{i_{1}} \tilde{\mbf Q} \mbf G_{i_{1}}^{H})\right] \nonumber \\
&\leq \mbb E\left[
 \log\det\left(\mbf I_{N_{R}}+\frac{1}{\sigma^2}\tilde{\mbf G}\tilde{\mbf Q} \tilde{\mbf G} ^{H}\right)\right]^2+c'\sqrt{\alpha}^{i_{2}-i_{1}}, \nonumber
\end{align}
for some $c'>0,$ where $\tilde{\mbf G}$ is a random  matrix with i.i.d. entries, independent of $\mbf G_1$ and $\mbf W_i,i=2,\hdots n,$ such that $\vect(\tilde{\mbf G})\sim \mc N_{\mbb C}\left(\bs{0}_{N_RN_T},\mbf I_{N_RN_T} \right).$
\end{lemma}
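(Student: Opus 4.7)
The plan is to exploit the Markov structure from Lemma~\ref{twoindexgainformula} to decompose $\mbf G_{i_{2}}$ into a piece that is $\sqrt{\alpha}^{\,i_{2}-i_{1}}$-correlated with $\mbf G_{i_{1}}$ plus an independent piece, and then to extract this small coefficient via a fundamental-theorem-of-calculus argument on a suitable interpolation.

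First, I would combine Lemma~\ref{twoindexgainformula} with the variance calculation in the proof of Lemma~\ref{distributiongain} to write
$$\mbf G_{i_{2}}=\rho_{0}\mbf G_{i_{1}}+\sqrt{1-\rho_{0}^{2}}\,\mbf H,\qquad \rho_{0}:=\sqrt{\alpha}^{\,i_{2}-i_{1}},$$
where $\mbf H$ is a renormalized linear combination of $\mbf W_{i_{1}+1},\ldots,\mbf W_{i_{2}}$ satisfying $\vect(\mbf H)\sim \mc N_{\mbb C}(\bs 0_{N_{R}N_{T}},\mbf I_{N_{R}N_{T}})$ and $\mbf H\perp \mbf G_{i_{1}}$. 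Then I would introduce an independent copy $\mbf G'_{i_{1}}$ of $\mbf G_{i_{1}}$ (also independent of $\mbf H$) and set $\mbf G^{\star}:=\rho_{0}\mbf G'_{i_{1}}+\sqrt{1-\rho_{0}^{2}}\,\mbf H$. By construction $\mbf G^{\star}$ is distributed as $\mbf G_{i_{2}}$ and is independent of $\mbf G_{i_{1}}$. With $f(\mbf G):=\log\det(\mbf I_{N_{R}}+\sigma^{-2}\mbf G\tilde{\mbf Q}\mbf G^{H})$ and $\mu:=\mbb E[f(\tilde{\mbf G})]$, independence gives $\mbb E[f(\mbf G^{\star})f(\mbf G_{i_{1}})]=\mu^{2}$, so the task reduces to bounding $|\mbb E[(f(\mbf G_{i_{2}})-f(\mbf G^{\star}))f(\mbf G_{i_{1}})]|$.

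Since $\mbf G_{i_{2}}-\mbf G^{\star}=\rho_{0}(\mbf G_{i_{1}}-\mbf G'_{i_{1}})$, I would parametrize $\mbf G(t):=(1-t)\mbf G^{\star}+t\mbf G_{i_{2}}$ for $t\in[0,1]$ and apply $f(\mbf G_{i_{2}})-f(\mbf G^{\star})=\int_{0}^{1}\frac{d}{dt}f(\mbf G(t))\,dt$. Computing the differential of $\log\det$ with $\mbf A(t):=\mbf I_{N_{R}}+\sigma^{-2}\mbf G(t)\tilde{\mbf Q}\mbf G(t)^{H}$ yields
$$\frac{d}{dt}f(\mbf G(t))=\frac{2\rho_{0}}{\ln 2\,\sigma^{2}}\,\mathrm{Re}\,\mathrm{tr}\bigl(\mbf A(t)^{-1}(\mbf G_{i_{1}}-\mbf G'_{i_{1}})\tilde{\mbf Q}\mbf G(t)^{H}\bigr).$$
Using $\mbf A(t)\succeq \mbf I_{N_{R}}$ so that $\lVert\mbf A(t)^{-1}\rVert\leq 1$, together with $\lVert\tilde{\mbf Q}\rVert\leq \mathrm{tr}(\tilde{\mbf Q})\leq P$, submultiplicativity of the operator norm, and $|\mathrm{tr}(\mbf B)|\leq N_{R}\lVert\mbf B\rVert$ for $\mbf B\in\mbb C^{N_{R}\times N_{R}}$, I obtain
$$|f(\mbf G_{i_{2}})-f(\mbf G^{\star})|\leq \frac{2\rho_{0}N_{R}P}{\ln 2\,\sigma^{2}}\,\lVert \mbf G_{i_{1}}-\mbf G'_{i_{1}}\rVert\int_{0}^{1}\lVert \mbf G(t)\rVert\,dt.$$

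Finally, I would apply Cauchy-Schwarz to $\mbb E[(f(\mbf G_{i_{2}})-f(\mbf G^{\star}))f(\mbf G_{i_{1}})]$ and again to split the moment of the right-hand side above, reducing everything to a handful of mixed fourth moments of the operator norms of $\mbf G_{i_{1}},\mbf G'_{i_{1}},\mbf H$ and to $\mbb E[f(\mbf G_{i_{1}})^{2}]$. All of these are finite and independent of $i_{1},i_{2}$ by Lemma~\ref{boundedmatrixnorm} in the Appendix together with standard Gaussian moment bounds and the inequality $f(\mbf G)\leq \frac{N_{R}P}{\ln 2\,\sigma^{2}}\lVert\mbf G\rVert^{2}$ derived in Section~\ref{direct}. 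This furnishes a constant $c'>0$ with $|\mbb E[f(\mbf G_{i_{2}})f(\mbf G_{i_{1}})]-\mu^{2}|\leq c'\rho_{0}=c'\sqrt{\alpha}^{\,i_{2}-i_{1}}$. The main delicate step is the derivative estimate for $\log\det$: the key observation $\mbf A(t)\succeq \mbf I_{N_{R}}$ ensures the uniform-in-$t$ bound $\lVert\mbf A(t)^{-1}\rVert\leq 1$, so that the explicit factor $\rho_{0}$ pulled out of $\mbf G_{i_{2}}-\mbf G^{\star}$ is the only source of smallness and survives all subsequent estimates.
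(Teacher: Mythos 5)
Your proposal is correct and reaches the paper's conclusion by a genuinely different route for the key estimate. The coupling you set up is, after unwinding notation, identical to the paper's: your $\mbf G^{\star}=\rho_{0}\mbf G'_{i_{1}}+\sqrt{1-\rho_{0}^{2}}\,\mbf H$ is exactly what the paper calls $\tilde{\mbf W}=\mbf S+\sqrt{\alpha}^{\,i_{2}-i_{1}}\tilde{\mbf G}$ (both are standard complex Gaussian and independent of $\mbf G_{i_{1}}$), and both proofs then reduce the claim to bounding the deviation of $f(\mbf G_{i_{2}})$ from $f$ of this coupled copy. Where you diverge is in how that deviation is controlled: the paper works purely algebraically, writing $\sigma^{-2}\mbf G_{i_{2}}\tilde{\mbf Q}\mbf G_{i_{2}}^{H}\preceq\sigma^{-2}\tilde{\mbf W}\tilde{\mbf Q}\tilde{\mbf W}^{H}+(\text{error})\,\mbf I_{N_{R}}$ via repeated applications of Lemma~\ref{normposdef} and then invoking the $\log\det$ subadditivity bound of Lemma~\ref{bounddeterminantsum} with the observation $\lambda_{\min}(\mbf I+\sigma^{-2}\tilde{\mbf W}\tilde{\mbf Q}\tilde{\mbf W}^{H})\geq 1$; you instead integrate the derivative of $\log\det$ along the segment from $\mbf G^{\star}$ to $\mbf G_{i_{2}}$ and use the uniform bound $\lVert\mbf A(t)^{-1}\rVert\leq 1$, which plays the same role as the paper's eigenvalue observation. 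Your interpolation argument is somewhat cleaner (it avoids the chain of matrix-norm manipulations in \eqref{up1}--\eqref{parta2}), and it actually yields the stronger two-sided estimate $\lvert\mbb E[f(\mbf G_{i_{2}})f(\mbf G_{i_{1}})]-\mu^{2}\rvert\leq c'\sqrt{\alpha}^{\,i_{2}-i_{1}}$, whereas the paper proves only the one-sided inequality needed for its variance bound. The only step you should spell out when finalizing is the finiteness of the mixed moment $\mbb E\bigl[\lVert\mbf G_{i_{1}}-\mbf G'_{i_{1}}\rVert^{2}\bigl(\int_{0}^{1}\lVert\mbf G(t)\rVert\,dt\bigr)^{2}\bigr]$; this follows by bounding $\lVert\mbf G(t)\rVert\leq\lVert\mbf G^{\star}\rVert+\lVert\mbf G_{i_{2}}\rVert$ and applying Cauchy--Schwarz together with Lemma~\ref{boundedmatrixnorm}, exactly as you indicate, but it is worth writing out because $\mbf G_{i_{2}}$ and $\mbf G_{i_{1}}-\mbf G'_{i_{1}}$ are not independent.
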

\begin{proof}

 By Lemma \ref{twoindexgainformula}, we know that
\begin{align}
    \mbf G_{i_{2}}=\sqrt{\alpha}^{i_{2}-i_{1}}\mbf G_{i_{1}}+\sqrt{1-\alpha}\sum_{j=i_{1}+1}^{i_{2}} \sqrt{\alpha}^{i_{2}-j} \mbf W_j.
\nonumber \end{align}
By defining $$\mbf S=\sqrt{1-\alpha}\sum_{j=i_{1}+1}^{i_{2}} \sqrt{\alpha}^{i_{2}-j} \mbf W_j,$$
it follows that
\begin{align}
    \mbf G_{i_{2}}= \sqrt{\alpha}^{i_{2}-i_{1}}\mbf G_{i_{1}}+\mbf S. \label{newgi2}
\end{align}

Define
$$\tilde{\mbf W}=\mbf S+\sqrt{\alpha}^{i_{2}-i_{1}} \tilde{\mbf G},$$
with  $\tilde{\mbf G}$ being a random matrix  with i.i.d. entries, independent of $\mbf G_1$ and $\mbf W_i,i=2,\hdots n$ such that $\vect(\tilde{\mbf G})\sim \mc N_{\mbb C}\left(\bs{0}_{N_RN_T},\mbf I_{N_RN_T} \right).$
\color{black}

Since $\mbf W_i,i=i_{1}+1,\hdots,i_{2},$ have i.i.d entries,
 it follows that $\tilde{\mbf W}$ has i.i.d. entries. Notice also that $\tilde{\mbf W}$ is \textit{independent} of $\mbf G_{i_1},$ since $\mbf G_{i_{1}}$ is independent of $(\mbf S, \tilde{\mbf G}).$ Analogously to the proof of 
Lemma \ref{distributiongain}, one can show that $\vect(\tilde{\mbf W})\sim \mc N_{\mbb C}\left(\bs{0}_{N_{R} N_{T}},\mbf I_{N_{R} N_{T}}\right).$
 
 The proof of  Lemma \ref{meanproducti1i2} is divided in three parts:
 \begin{enumerate}
     \item We will prove first that \begin{align}
    & \log\det(\mbf I_{N_{R}}+\frac{1}{\sigma^2}\mbf G_{i_{2}} \tilde{\mbf Q} \mbf G_{i_{2}}^{H}) \nonumber \\ \nonumber \\
    &\leq \log\det\left( \mbf I_{N_{R}}+\frac{1}{\sigma^2}\tilde{\mbf W}\tilde{\mbf Q} \tilde{\mbf W}^{H} \right) +\frac{N_R}{\ln(2)\sigma^2}\sqrt{\alpha}^{i_{2}-i_{1}} \left(P\lVert \mbf G_{i_{1}} \rVert^2+P\lVert\tilde{\mbf G}\rVert^2+2\lVert \tilde{\mbf W}\tilde{\mbf Q}\tilde{\mbf G}^H \rVert +2P\lVert\mbf G_{i_{1}}\rVert \lVert\mbf S\rVert          \right). \nonumber 
 \end{align}
     \item  We will prove second that \begin{align}
     \log\det(\mbf I_{N_{R}}+\frac{1}{\sigma^2}\mbf G_{i_{1}} \tilde{\mbf Q} \mbf G_{i_{1}}^{H}) \leq \frac{P N_R}{\ln(2)\sigma^2}\lVert \mbf G_{i_{1}} \rVert^{2}. \nonumber  
 \end{align}
     \item This will allow us to show that
     \begin{align}
&\mbb E \left[ \log\det(\mbf I_{N_{R}}+\frac{1}{\sigma^2}\mbf G_{i_{2}} \tilde{\mbf Q} \mbf G_{i_{2}}^{H})\log\det(\mbf I_{N_{R}}+\frac{1}{\sigma^2}\mbf G_{i_{1}} \tilde{\mbf Q} \mbf G_{i_{1}}^{H})\right] \leq \mbb E\left[
 \log\det\left(\mbf I_{N_{R}}+\frac{1}{\sigma^2}\tilde{\mbf G}\tilde{\mbf Q} \tilde{\mbf G} ^{H}\right)\right]^2+c'\sqrt{\alpha}^{i_{2}-i_{1}}, \nonumber
\end{align}
for some $c'>0.$
 \end{enumerate}

\subsubsection{Upper-bound for  $\log\det(\mbf I_{N_{R}}+\frac{1}{\sigma^2}\mbf G_{i_{2}} \tilde{\mbf Q} \mbf G_{i_{2}}^{H})$ }
From \eqref{newgi2}, we have
\begin{align}
    &\frac{1}{\sigma^2}\mbf G_{i_{2}} \tilde{\mbf Q} \mbf G_{i_{2}}^{H} \nonumber \\  \nonumber \\
    &=\frac{1}{\sigma^2} \left[\sqrt{\alpha}^{i_{2}-i_{1}}\mbf G_{i_{1}}+\mbf S \right] \tilde{\mbf Q}  \left[\sqrt{\alpha}^{i_{2}-i_{1}}\mbf G_{i_{1}}^H+\mbf S^H   \right]    \nonumber \\ \nonumber \\ 
&=\frac{1}{\sigma^2}\alpha^{i_{2}-i_{1}}\mbf G_{i_{1}} \tilde{\mbf Q} \mbf G_{i_{1}}^{H}   +\frac{1}{\sigma^2}\sqrt{\alpha}^{i_{2}-i_{1}}\mbf G_{i_{1}} \tilde{\mbf Q}\mbf S^H+\frac{1}{\sigma^2}\sqrt{\alpha}^{i_{2}-i_{1}}\mbf S\tilde{\mbf Q} \mbf G_{i_{1}}^{H}+
\frac{1}{\sigma^2}\mbf S\tilde{\mbf Q} \mbf S^H  \nonumber \\ \nonumber \\
&=\frac{1}{\sigma^2}\left[\alpha^{i_{2}-i_{1}}\mbf G_{i_{1}} \tilde{\mbf Q} \mbf G_{i_{1}}^{H}+\mbf S\tilde{\mbf Q}\mbf S^H\right]   +\frac{1}{\sigma^2}\sqrt{\alpha}^{i_{2}-i_{1}}\mbf G_{i_{1}} \tilde{\mbf Q}\mbf S^H+\frac{1}{\sigma^2}\sqrt{\alpha}^{i_{2}-i_{1}}\mbf S\tilde{\mbf Q} \mbf G_{i_{1}}^{H}.
\label{up1} \end{align}

We will prove first that
\begin{align}
      &\frac{1}{\sigma^2}\sqrt{\alpha}^{i_{2}-i_{1}}\mbf G_{i_{1}} \tilde{\mbf Q}\mbf S^H 
 +\frac{1}{\sigma^2}\sqrt{\alpha}^{i_{2}-i_{1}}\mbf S\tilde{\mbf Q} \mbf G_{i_{1}}^{H}\nonumber \\
 &\preceq\frac{2P}{\sigma^2}\sqrt{\alpha}^{i_{2}-i_{1}}\lVert\mbf G_{i_{1}}\rVert \lVert\mbf S\rVert \mbf I_{N_R}. \label{firstpart}
\end{align}
From Lemma \ref{normposdef} in the Appendix, we know that for any Hermitian matrix $\mbf A \in \mbb C^{n\times n},$ it holds that $\mbf A \preceq \lVert \mbf A\rVert \mbf I_{n}. $
Notice now that the matrix
  $$\frac{1}{\sigma^2}\sqrt{\alpha}^{i_{2}-i_{1}}\mbf G_{i_{1}} \tilde{\mbf Q}\mbf S^H 
 +\frac{1}{\sigma^2}\sqrt{\alpha}^{i_{2}-i_{1}}\mbf S\tilde{\mbf Q} \mbf G_{i_{1}}^{H}$$ is a  Hermitian matrix since it is equal to its Hermitian transpose.
 It follows using Lemma \ref{normposdef} in the Appendix that
 \begin{align}
     &\frac{1}{\sigma^2}\sqrt{\alpha}^{i_{2}-i_{1}}\mbf G_{i_{1}} \tilde{\mbf Q}\mbf S^H 
 +\frac{1}{\sigma^2}\sqrt{\alpha}^{i_{2}-i_{1}}\mbf S\tilde{\mbf Q} \mbf G_{i_{1}}^{H} \nonumber \\ \nonumber \\
 &\preceq \lVert \frac{1}{\sigma^2}\sqrt{\alpha}^{i_{2}-i_{1}}\mbf G_{i_{1}} \tilde{\mbf Q}\mbf S^H 
 +\frac{1}{\sigma^2}\sqrt{\alpha}^{i_{2}-i_{1}}\mbf S\tilde{\mbf Q} \mbf G_{i_{1}}^{H}\rVert \mbf I_{N_R} \nonumber \\ \nonumber \\
 & \preceq \left( \frac{1}{\sigma^2}\sqrt{\alpha}^{i_{2}-i_{1}}\lVert\mbf G_{i_{1}}\rVert \lVert \tilde{\mbf Q} \rVert \lVert\mbf S^H\rVert
 +\frac{1}{\sigma^2}\sqrt{\alpha}^{i_{2}-i_{1}}\Vert \mbf S\rVert\lVert\tilde{\mbf Q}\rVert \lVert\mbf G_{i_{1}}^H\rVert \right) \mbf I_{N_R} \nonumber \\ \nonumber \\
 &= \frac{2}{\sigma^2}\sqrt{\alpha}^{i_{2}-i_{1}}\lVert\mbf G_{i_{1}}\rVert \lVert \tilde{\mbf Q} \rVert \lVert\mbf S\rVert \mbf I_{N_R} \nonumber \\ \nonumber \\
 &\preceq\frac{2P}{\sigma^2}\sqrt{\alpha}^{i_{2}-i_{1}}\lVert\mbf G_{i_{1}}\rVert \lVert\mbf S\rVert \mbf I_{N_R}. \label{lastupperbound1}
 \end{align}
 This proves \eqref{firstpart}.
\begin{comment}
It follows from \eqref{up1} using Lemma \ref{normposdef}  that
\begin{align}
&\log\det(\mbf I_{N_{R}}+\frac{1}{\sigma^2}\mbf G_{i_{2}} \tilde{\mbf Q} \mbf G_{i_{2}}^{H}) \nonumber \\  \nonumber \\
&\leq \log \det\left( \mbf I_{N_{R}}+\frac{1}{\sigma^2}\left[\mbf S\tilde{\mbf Q} \mbf S^H +\alpha^{i_{2}-i_{1}}\mbf G_{i_{1}} \tilde{\mbf Q} \mbf G_{i_{1}}^{H}\right]+  \frac{2P}{\sigma^2}\sqrt{\alpha}^{i_{2}-i_{1}}\lVert\mbf G_{i_{1}}\rVert \lVert\mbf S\rVert \mbf I_{N_R} \right), \label{up2}  \end{align}
where we used that $\det(\mbf A)\geq \det(\mbf B)$ for $\mbf A\succeq \mbf 0, \mbf B \succeq \mbf 0, \mbf A-\mbf B\succeq \mbf 0.$
\end{comment}

Next, we will prove that
\begin{align}
    &\frac{1}{\sigma^2}\left[\alpha^{i_{2}-i_{1}}\mbf G_{i_{1}} \tilde{\mbf Q} \mbf G_{i_{1}}^{H}+\mbf S\tilde{\mbf Q}\mbf S^H\right] \nonumber \\ \nonumber \\
    &\preceq \frac{1}{\sigma^2}\tilde{\mbf W}\tilde{\mbf Q}\tilde{\mbf W}^{H}+\frac{P}{\sigma^2}\alpha^{i_{2}-i_{1}}\left(\lVert \mbf G_{i_{1}} \rVert^2+\lVert\tilde{\mbf G}\rVert^2\right)\mbf I_{N_R}+\frac{2}{\sigma^2}\sqrt{\alpha}^{i_{2}-i_{1}} \lVert \tilde{\mbf W}\tilde{\mbf Q}\tilde{\mbf G}^H \rVert\mbf I_{N_R}. \label{proofsecondpart}
\end{align}
It follows using the fact that $\tilde{\mbf W}=\mbf S+\sqrt{\alpha}^{i_{2}-i_{1}} \tilde{\mbf G}$  that
\begin{align}
    &\mbf S\tilde{\mbf Q} \mbf S^H \nonumber \\ \nonumber \\
    &=\left(\mbf S+\sqrt{\alpha}^{i_{2}-i_{1}} \tilde{\mbf G}-\sqrt{\alpha}^{i_{2}-i_{1}} \tilde{\mbf G}\right) \tilde{\mbf Q} \left(\mbf S^H+
    \sqrt{\alpha}^{i_{2}-i_{1}} \tilde{\mbf G}^H-\sqrt{\alpha}^{i_{2}-i_{1}} \tilde{\mbf G}^H\right) \nonumber \\ \nonumber \\
    &=\left(\tilde{\mbf W}-\sqrt{\alpha}^{i_{2}-i_{1}} \tilde{\mbf G}\right) \tilde{\mbf Q} \left(\tilde{\mbf W}^H-\sqrt{\alpha}^{i_{2}-i_{1}} \tilde{\mbf G}^H\right) \nonumber \\ \nonumber \\
    &=\tilde{\mbf W}\tilde{\mbf Q}\tilde{\mbf W}^H-\sqrt{\alpha}^{i_{2}-i_{1}}\tilde{\mbf W}\tilde{\mbf Q}\tilde{\mbf G}^H-\sqrt{\alpha}^{i_{2}-i_{1}}\tilde{\mbf G}\tilde{\mbf Q}\tilde{\mbf W}^H+\alpha^{i_{2}-i_{1}}\tilde{\mbf G} \tilde{\mbf Q}\tilde{\mbf G}^{H}. \nonumber
\end{align}
This yields
\begin{align}
    &\frac{1}{\sigma^2}\left[\alpha^{i_{2}-i_{1}}\mbf G_{i_{1}} \tilde{\mbf Q} \mbf G_{i_{1}}^{H}+\mbf S\tilde{\mbf Q}\mbf S^H\right] \nonumber \\  \nonumber \\
    &=\frac{1}{\sigma^2}\left[\alpha^{i_{2}-i_{1}}\mbf G_{i_{1}} \tilde{\mbf Q} \mbf G_{i_{1}}^{H}+\tilde{\mbf W}\tilde{\mbf Q}\tilde{\mbf W}^H-\sqrt{\alpha}^{i_{2}-i_{1}}\tilde{\mbf W}\tilde{\mbf Q}\tilde{\mbf G}^H-\sqrt{\alpha}^{i_{2}-i_{1}}\tilde{\mbf G}\tilde{\mbf Q}\tilde{\mbf W}^H+\alpha^{i_{2}-i_{1}}\tilde{\mbf G} \tilde{\mbf Q}\tilde{\mbf G}^{H}\right]  \nonumber \\ \nonumber \\
    &=\frac{1}{\sigma^2}\tilde{\mbf W}\tilde{\mbf Q}\tilde{\mbf W}^{H}+\frac{1}{\sigma^2}\alpha^{i_{2}-i_{1}}\left[\mbf G_{i_{1}} \tilde{\mbf Q} \mbf G_{i_{1}}^{H}+\tilde{\mbf G} \tilde{\mbf Q}\tilde{\mbf G}^H\right]-\frac{1}{\sigma^2}\sqrt{\alpha}^{i_{2}-i_{1}}\tilde{\mbf W}\tilde{\mbf Q}\tilde{\mbf G}^H-\frac{1}{\sigma^2}\sqrt{\alpha}^{i_{2}-i_{1}}\tilde{\mbf G}\tilde{\mbf Q}\tilde{\mbf W}^H. 
    \label{boundprime}
    \end{align}
 Now notice that 
$$ \frac{1}{\sigma^2}\alpha^{i_{2}-i_{1}}\left[\mbf G_{i_{1}} \tilde{\mbf Q} \mbf G_{i_{1}}^{H}+\tilde{\mbf G} \tilde{\mbf Q}\tilde{\mbf G}^H\right] $$ 
is a Hermitian matrix.
This implies using Lemma \ref{normposdef} that
\begin{align}
    &\frac{1}{\sigma^2}\alpha^{i_{2}-i_{1}}\left[\mbf G_{i_{1}} \tilde{\mbf Q} \mbf G_{i_{1}}^{H}+\tilde{\mbf G} \tilde{\mbf Q}\tilde{\mbf G}^H\right] \nonumber \\ \nonumber \\
    &\preceq \frac{1}{\sigma^2}\alpha^{i_{2}-i_{1}}\lVert\mbf G_{i_{1}} \tilde{\mbf Q} \mbf G_{i_{1}}^{H}+\tilde{\mbf G} \tilde{\mbf Q}\tilde{\mbf G}^H\rVert \mbf I_{N_R} \nonumber \\ \nonumber \\
    &\preceq \frac{1}{\sigma^2}\alpha^{i_{2}-i_{1}}\lVert \tilde{\mbf Q}\rVert\left(\lVert \mbf G_{i_{1}} \rVert^2+\lVert\tilde{\mbf G}\rVert^2\right)\mbf I_{N_R} \nonumber \\ \nonumber \\
    &\preceq \frac{P}{\sigma^2}\alpha^{i_{2}-i_{1}}\left(\lVert \mbf G_{i_{1}} \rVert^2+\lVert\tilde{\mbf G}\rVert^2\right)\mbf I_{N_R}. \label{part1m}
\end{align}

Notice also that $-\frac{1}{\sigma^2}\sqrt{\alpha}^{i_{2}-i_{1}}\tilde{\mbf W}\tilde{\mbf Q}\tilde{\mbf G}^H-\frac{1}{\sigma^2}\sqrt{\alpha}^{i_{2}-i_{1}}\tilde{\mbf G}\tilde{\mbf Q}\tilde{\mbf W}^H $ is a Hermitian matrix. It follows using Lemma \ref{normposdef} that
\begin{align}
    &-\frac{1}{\sigma^2}\sqrt{\alpha}^{i_{2}-i_{1}}\tilde{\mbf W}\tilde{\mbf Q}\tilde{\mbf G}^H-\frac{1}{\sigma^2}\sqrt{\alpha}^{i_{2}-i_{1}}\tilde{\mbf G}\tilde{\mbf Q}\tilde{\mbf W}^H \nonumber \\ \nonumber \\
    &\preceq \lVert -\frac{1}{\sigma^2}\sqrt{\alpha}^{i_{2}-i_{1}}\tilde{\mbf W}\tilde{\mbf Q}\tilde{\mbf G}^H-\frac{1}{\sigma^2}\sqrt{\alpha}^{i_{2}-i_{1}}\tilde{\mbf G}\tilde{\mbf Q}\tilde{\mbf W}^H\rVert \mbf I_{N_R} \nonumber \\ \nonumber \\
    &\preceq \frac{2}{\sigma^2}\sqrt{\alpha}^{i_{2}-i_{1}} \lVert \tilde{\mbf W}\tilde{\mbf Q}\tilde{\mbf G}^H \rVert \mbf I_{N_R}. \label{part2m}
\end{align}
As a result, we have using \eqref{part1m} and \eqref{part2m}
\begin{align}
    &\frac{1}{\sigma^2}\tilde{\mbf W}\tilde{\mbf Q}\tilde{\mbf W}^{H}+\frac{1}{\sigma^2}\alpha^{i_{2}-i_{1}}\left[\mbf G_{i_{1}} \tilde{\mbf Q} \mbf G_{i_{1}}^{H}+\tilde{\mbf G} \tilde{\mbf Q}\tilde{\mbf G}^H\right]-\frac{1}{\sigma^2}\sqrt{\alpha}^{i_{2}-i_{1}}\tilde{\mbf W}\tilde{\mbf Q}\tilde{\mbf G}^H-\frac{1}{\sigma^2}\sqrt{\alpha}^{i_{2}-i_{1}}\tilde{\mbf G}\tilde{\mbf Q}\tilde{\mbf W}^H \nonumber \\ \nonumber \\
    &\preceq \frac{1}{\sigma^2}\tilde{\mbf W}\tilde{\mbf Q}\tilde{\mbf W}^{H}+\frac{P}{\sigma^2}\alpha^{i_{2}-i_{1}}\left(\lVert \mbf G_{i_{1}} \rVert^2+\lVert\tilde{\mbf G}\rVert^2\right)\mbf I_{N_R}+\frac{2}{\sigma^2}\sqrt{\alpha}^{i_{2}-i_{1}} \lVert \tilde{\mbf W}\tilde{\mbf Q}\tilde{\mbf G}^H \rVert \mbf I_{N_R}. \label{lastupperbound2} 
\end{align}
This proves \eqref{proofsecondpart}.
Thus, it follows from \eqref{boundprime} and \eqref{lastupperbound2} that
\begin{align}
    &\frac{1}{\sigma^2}\left[\alpha^{i_{2}-i_{1}}\mbf G_{i_{1}} \tilde{\mbf Q} \mbf G_{i_{1}}^{H}+\mbf S\tilde{\mbf Q}\mbf S^H\right] \nonumber \\ \nonumber \\
    &\preceq \frac{1}{\sigma^2}\tilde{\mbf W}\tilde{\mbf Q}\tilde{\mbf W}^{H}+\frac{P}{\sigma^2}\alpha^{i_{2}-i_{1}}\left(\lVert \mbf G_{i_{1}} \rVert^2+\lVert\tilde{\mbf G}\rVert^2\right)\mbf I_{N_R}+\frac{2}{\sigma^2}\sqrt{\alpha}^{i_{2}-i_{1}} \lVert \tilde{\mbf W}\tilde{\mbf Q}\tilde{\mbf G}^H \rVert\mbf I_{N_R}. \label{last2}
\end{align}
We deduce using \eqref{lastupperbound1} and \eqref{last2} that
\begin{align}
    &\frac{1}{\sigma^2}\left[\alpha^{i_{2}-i_{1}}\mbf G_{i_{1}} \tilde{\mbf Q} \mbf G_{i_{1}}^{H}+\mbf S\tilde{\mbf Q}\tilde{\mbf S}\right]   +\frac{1}{\sigma^2}\sqrt{\alpha}^{i_{2}-i_{1}}\mbf G_{i_{1}} \tilde{\mbf Q}\mbf S^H+\frac{1}{\sigma^2}\sqrt{\alpha}^{i_{2}-i_{1}}\mbf S\tilde{\mbf Q} \mbf G_{i_{1}}^{H} \nonumber \\ \nonumber \\
    &\preceq \frac{1}{\sigma^2}\tilde{\mbf W}\tilde{\mbf Q}\tilde{\mbf W}^{H}+\frac{P}{\sigma^2}\alpha^{i_{2}-i_{1}}\left(\lVert \mbf G_{i_{1}} \rVert^2+\lVert\tilde{\mbf G}\rVert^2\right)\mbf I_{N_R}+\frac{2}{\sigma^2}\sqrt{\alpha}^{i_{2}-i_{1}} \lVert \tilde{\mbf W}\tilde{\mbf Q}\tilde{\mbf G}^H \rVert \mbf I_{N_R}+\frac{2P}{\sigma^2}\sqrt{\alpha}^{i_{2}-i_{1}}\lVert\mbf G_{i_{1}}\rVert \lVert\mbf S\rVert \mbf I_{N_R} \nonumber \\ \nonumber \\
    &= \frac{1}{\sigma^2}\tilde{\mbf W}\tilde{\mbf Q}\tilde{\mbf W}^{H}+\frac{P}{\sigma^2}\alpha^{i_{2}-i_{1}}\left(\lVert \mbf G_{i_{1}} \rVert^2+\lVert\tilde{\mbf G}\rVert^2\right)\mbf I_{N_R}  +\frac{2}{\sigma^2}\sqrt{\alpha}^{i_{2}-i_{1}}\left(\lVert \tilde{\mbf W}\tilde{\mbf Q}\tilde{\mbf G}^H \rVert +P\lVert\mbf G_{i_{1}}\rVert \lVert\mbf S\rVert\right)\mbf I_{N_R} \nonumber \\ \nonumber \\
    &\overset{(a)}{\preceq}\frac{1}{\sigma^2}\tilde{\mbf W}\tilde{\mbf Q}\tilde{\mbf W}^{H}+\frac{P}{\sigma^2}\sqrt{\alpha}^{i_{2}-i_{1}}\left(\lVert \mbf G_{i_{1}} \rVert^2+\lVert\tilde{\mbf G}\rVert^2\right)\mbf I_{N_R}  +\frac{2}{\sigma^2}\sqrt{\alpha}^{i_{2}-i_{1}}\left(\lVert \tilde{\mbf W}\tilde{\mbf Q}\tilde{\mbf G}^H \rVert +P\lVert\mbf G_{i_{1}}\rVert \lVert\mbf S\rVert\right)\mbf I_{N_R} \nonumber \\ \nonumber \\
    &=  \frac{1}{\sigma^2}\tilde{\mbf W}\tilde{\mbf Q}\tilde{\mbf W}^{H}+\frac{1}{\sigma^2}\sqrt{\alpha}^{i_{2}-i_{1}} \left(P\lVert \mbf G_{i_{1}} \rVert^2+P\lVert\tilde{\mbf G}\rVert^2+2\lVert \tilde{\mbf W}\tilde{\mbf Q}\tilde{\mbf G}^H \rVert +2P\lVert\mbf G_{i_{1}}\rVert \lVert\mbf S\rVert          \right)\mbf I_{N_R}, \label{parta2}
\end{align}
where $(a)$ follows because $\alpha<\sqrt{\alpha}$ for $0<\alpha<1.$

Therefore, it follows from \eqref{up1} and \eqref{parta2} that
\begin{align}
&\frac{1}{\sigma^2}\mbf G_{i_{2}} \tilde{\mbf Q} \mbf G_{i_{2}}^{H} \nonumber \\  \nonumber \\
    &\preceq\frac{1}{\sigma^2}\tilde{\mbf W}\tilde{\mbf Q}\tilde{\mbf W}^{H}+\frac{1}{\sigma^2}\sqrt{\alpha}^{i_{2}-i_{1}} \left(P\lVert \mbf G_{i_{1}} \rVert^2+P\lVert\tilde{\mbf G}\rVert^2+2\lVert \tilde{\mbf W}\tilde{\mbf Q}\tilde{\mbf G}^H \rVert +2P\lVert\mbf G_{i_{1}}\rVert \lVert\mbf S\rVert          \right)\mbf I_{N_R}.\nonumber
\end{align}
This yields
\begin{align}
&\log\det(\mbf I_{N_{R}}+\frac{1}{\sigma^2}\mbf G_{i_{2}} \tilde{\mbf Q} \mbf G_{i_{2}}^{H}) \nonumber \\ \nonumber \\
&\leq \log\det\left(\mbf I_{N_R}+\frac{1}{\sigma^2}\tilde{\mbf W}\tilde{\mbf Q}\tilde{\mbf W}^{H}+\frac{1}{\sigma^2}\sqrt{\alpha}^{i_{2}-i_{1}} \left(P\lVert \mbf G_{i_{1}} \rVert^2+P\lVert\tilde{\mbf G}\rVert^2+2\lVert \tilde{\mbf W}\tilde{\mbf Q}\tilde{\mbf G}^H \rVert +2P\lVert\mbf G_{i_{1}}\rVert \lVert\mbf S\rVert          \right)\mbf I_{N_R}\right).
\label{up2}  \end{align}

Now by Lemma \ref{bounddeterminantsum} in the Appendix, we know that for any positive-definite Hermitian matrix $\mbf A \in \mbb C^{n\times n} $ with smallest eigenvalue $\lambda_{\min}(\mbf A)$  and for any positive semi-definite Hermitian matrix $\mbf B \in \mbb C^{n \times n},$  the following is satisfied:
\begin{align}
    \log\det(\mbf A+\mbf B)\leq\log\det(\mbf A)+\log\det(\mbf I_{n}+\frac{1}{\lambda_{\min}(\mbf A)}\mbf B).
\nonumber \end{align}
By applying Lemma \ref{bounddeterminantsum} in the Appendix
for 

$$\mbf A=\mbf I_{N_{R}}+\frac{1}{\sigma^2}\tilde{\mbf W}\tilde{\mbf Q} \tilde{\mbf W}^{H}     $$
and 
for 
\begin{align}
    \mbf B&=\frac{1}{\sigma^2}\sqrt{\alpha}^{i_{2}-i_{1}} \left(P\lVert \mbf G_{i_{1}} \rVert^2+P\lVert\tilde{\mbf G}\rVert^2+2\lVert \tilde{\mbf W}\tilde{\mbf Q}\tilde{\mbf G}^H \rVert +2P\lVert\mbf G_{i_{1}}\rVert \lVert\mbf S\rVert          \right)\mbf I_{N_R}, \nonumber
\end{align}
it follows from \eqref{up2} that
\begin{align}
&\log\det(\mbf I_{N_{R}}+\frac{1}{\sigma^2}\mbf G_{i_{2}} \tilde{\mbf Q} \mbf G_{i_{2}}^{H}) \nonumber \\ \nonumber \\
&\leq \log\det\left( \mbf I_{N_{R}}+\frac{1}{\sigma^2}\tilde{\mbf W}\tilde{\mbf Q} \tilde{\mbf W}^{H} \right) \nonumber \\ \nonumber \\
&\quad +\log \det \left(\mbf I_{N_{R}}+\frac{\frac{1}{\sigma^2}\sqrt{\alpha}^{i_{2}-i_{1}} \left(P\lVert \mbf G_{i_{1}} \rVert^2+P\lVert\tilde{\mbf G}\rVert^2+2\lVert \tilde{\mbf W}\tilde{\mbf Q}\tilde{\mbf G}^H \rVert +2P\lVert\mbf G_{i_{1}}\rVert \lVert\mbf S\rVert          \right)}{\lambda_{\min}\left( \mbf I_{N_{R}}+\frac{1}{\sigma^2}\tilde{\mbf W}\tilde{\mbf Q} \tilde{\mbf W}^{H}   \right)}\mbf I_{N_{R}} \right) \nonumber \\ \nonumber \\
&\overset{(a)}{\leq} \log\det\left( \mbf I_{N_{R}}+\frac{1}{\sigma^2}\tilde{\mbf W}\tilde{\mbf Q} \tilde{\mbf W}^{H} \right) \nonumber \\ \nonumber \\
&\quad +\frac{1}{\ln(2)}\mathrm{tr}\left[\frac{\frac{1}{\sigma^2}\sqrt{\alpha}^{i_{2}-i_{1}} \left(P\lVert \mbf G_{i_{1}} \rVert^2+P\lVert\tilde{\mbf G}\rVert^2+2\lVert \tilde{\mbf W}\tilde{\mbf Q}\tilde{\mbf G}^H \rVert +2P\lVert\mbf G_{i_{1}}\rVert \lVert\mbf S\rVert          \right)}{\lambda_{\min}\left( \mbf I_{N_{R}}+\frac{1}{\sigma^2}\tilde{\mbf W}\tilde{\mbf Q} \tilde{\mbf W}^{H}   \right)}\mbf I_{N_{R}}\right] \nonumber \\ \nonumber \\
&\overset{(b)}{\leq}\log\det\left( \mbf I_{N_{R}}+\frac{1}{\sigma^2}\tilde{\mbf W}\tilde{\mbf Q} \tilde{\mbf W}^{H} \right) \nonumber \\ \nonumber \\
&\quad+\frac{1}{\ln(2)}\mathrm{tr}\left[\frac{1}{\sigma^2}\sqrt{\alpha}^{i_{2}-i_{1}} \left(P\lVert \mbf G_{i_{1}} \rVert^2+P\lVert\tilde{\mbf G}\rVert^2+2\lVert \tilde{\mbf W}\tilde{\mbf Q}\tilde{\mbf G}^H \rVert +2P\lVert\mbf G_{i_{1}}\rVert \lVert\mbf S\rVert          \right)\mbf I_{N_{R}} \right] \nonumber \\ \nonumber \\
&\overset{(c)}{=} \log\det\left( \mbf I_{N_{R}}+\frac{1}{\sigma^2}\tilde{\mbf W}\tilde{\mbf Q} \tilde{\mbf W}^{H} \right) \nonumber \\ \nonumber \\
&\quad+\frac{N_R}{\ln(2)\sigma^2}\sqrt{\alpha}^{i_{2}-i_{1}} \left(P\lVert \mbf G_{i_{1}} \rVert^2+P\lVert\tilde{\mbf G}\rVert^2+2\lVert \tilde{\mbf W}\tilde{\mbf Q}\tilde{\mbf G}^H \rVert +2P\lVert\mbf G_{i_{1}}\rVert \lVert\mbf S\rVert          \right)
,\nonumber
\end{align}
where $(a)$ follows because $  \ln\det(\mbf I_{n}+\mbf A) \leq \mathrm{tr}(\mbf A)$ for positive semi-definite $\mbf A,$ $(b)$ follows because $\lambda_{\min}\left( \mbf I_{N_{R}}+\frac{1}{\sigma^2}\tilde{\mbf W}\tilde{\mbf Q} \tilde{\mbf W}^{H}   \right)\geq 1$ and  $(c)$ follows because $\mathrm{tr}(c \mbf I_{N_R})=c N_{R}$ for any constant $c.$

To conclude, we have proved that
 \begin{align}
    & \log\det(\mbf I_{N_{R}}+\frac{1}{\sigma^2}\mbf G_{i_{2}} \tilde{\mbf Q} \mbf G_{i_{2}}^{H}) \nonumber \\ \nonumber \\
    &\leq \log\det\left( \mbf I_{N_{R}}+\frac{1}{\sigma^2}\tilde{\mbf W}\tilde{\mbf Q} \tilde{\mbf W}^{H} \right)+\frac{N_R}{\ln(2)\sigma^2}\sqrt{\alpha}^{i_{2}-i_{1}} \left(P\lVert \mbf G_{i_{1}} \rVert^2+P\lVert\tilde{\mbf G}\rVert^2+2\lVert \tilde{\mbf W}\tilde{\mbf Q}\tilde{\mbf G}^H \rVert +2P\lVert\mbf G_{i_{1}}\rVert \lVert\mbf S\rVert          \right). \label{upperbound1}
 \end{align}
\subsubsection{Upper-bound for  $\log\det(\mbf I_{N_{R}}+\frac{1}{\sigma^2}\mbf G_{i_{1}} \tilde{\mbf Q} \mbf G_{i_{1}}^{H})$ }
By Lemma \ref{normposdef} in the Appendix, we have
 \begin{align}
     \log\det(\mbf I_{N_{R}}+\frac{1}{\sigma^2}\mbf G_{i_{1}} \tilde{\mbf Q} \mbf G_{i_{1}}^{H}) &\leq \log\det(\mbf I_{N_{R}}+\frac{1}{\sigma^2}\lVert \mbf G_{i_{1}} \tilde{\mbf Q} \mbf G_{i_{1}}^{H}\rVert\mbf I_{N_{R}} ) \nonumber \\ \nonumber \\
     &\leq  \frac{1}{\ln(2)}\mathrm{tr}\left[\frac{1}{\sigma^2}\lVert \mbf G_{i_{1}} \tilde{\mbf Q} \mbf G_{i_{1}}^{H}\rVert \mbf I_{N_{R}} \right] \nonumber \\ \nonumber \\
     &\leq \frac{1}{\ln(2)}\mathrm{tr}\left[\frac{1}{\sigma^2}\lVert \mbf G_{i_{1}} \rVert^{2} \lVert \tilde{\mbf Q} \rVert \mbf I_{N_{R}}\right]\nonumber \\ \nonumber \\
     &=\frac{N_R}{\ln(2)\sigma^2}\lVert \mbf G_{i_{1}} \rVert^{2} \lVert \tilde{\mbf Q} \rVert \nonumber \\ \nonumber \\
     &\leq \frac{P N_R}{\ln(2)\sigma^2}\lVert \mbf G_{i_{1}} \rVert^{2}.  \label{upperbound2}
 \end{align}
 \subsubsection{Upper bound for $\mbb E \left[ \log\det(\mbf I_{N_{R}}+\frac{1}{\sigma^2}\mbf G_{i_{2}} \tilde{\mbf Q} \mbf G_{i_{2}}^{H})\log\det(\mbf I_{N_{R}}+\frac{1}{\sigma^2}\mbf G_{i_{1}} \tilde{\mbf Q} \mbf G_{i_{1}}^{H})\right]$}

 Let 
\begin{align}
    &\Lambda\left( \mbf G_{i_{1}},\mbf S,\tilde{\mbf G},\tilde{\mbf W}\right)=\lVert \mbf G_{i_{1}} \rVert^{2} \left(P\lVert \mbf G_{i_{1}} \rVert^2+P\lVert\tilde{\mbf G}\rVert^2+2\lVert \tilde{\mbf W}\tilde{\mbf Q}\tilde{\mbf G}^H \rVert +2P\lVert\mbf G_{i_{1}}\rVert \lVert\mbf S\rVert          \right). \nonumber \\ \nonumber \\ \nonumber
\end{align}

It follows using \eqref{upperbound1} and \eqref{upperbound2} that
\begin{align}
 &\mbb E \left[ \log\det(\mbf I_{N_{R}}+\frac{1}{\sigma^2}\mbf G_{i_{2}} \tilde{\mbf Q} \mbf G_{i_{2}}^{H})\log\det(\mbf I_{N_{R}}+\frac{1}{\sigma^2}\mbf G_{i_{1}} \tilde{\mbf Q} \mbf G_{i_{1}}^{H})\right] \nonumber \\ \nonumber \\
 &\leq \mbb E \left[\log\det\left( \mbf I_{N_{R}}+\frac{1}{\sigma^2}\tilde{\mbf W}\tilde{\mbf Q} \tilde{\mbf W}^{H} \right)
 \log\det\left(\mbf I_{N_{R}}+\frac{1}{\sigma^2}\mbf G_{i_{1}} \tilde{\mbf Q} \mbf G_{i_{1}}^{H}\right)\right]\nonumber \\ \nonumber \\
 &\quad+\mbb E\left[\frac{N_R}{\ln(2)\sigma^2}\sqrt{\alpha}^{i_{2}-i_{1}} \left(P\lVert \mbf G_{i_{1}} \rVert^2+P\lVert\tilde{\mbf G}\rVert^2+2\lVert \tilde{\mbf W}\tilde{\mbf Q}\tilde{\mbf G}^H \rVert +2P\lVert\mbf G_{i_{1}}\rVert \lVert\mbf S\rVert          \right)\frac{P N_R}{\ln(2)\sigma^2}\lVert \mbf G_{i_{1}} \rVert^{2} \right]  \nonumber \\ \nonumber \\
 &=\mbb E \left[\log\det\left( \mbf I_{N_{R}}+\frac{1}{\sigma^2}\tilde{\mbf W}\tilde{\mbf Q} \tilde{\mbf W}^{H} \right)
 \log\det\left(\mbf I_{N_{R}}+\frac{1}{\sigma^2}\mbf G_{i_{1}} \tilde{\mbf Q} \mbf G_{i_{1}}^{H}\right)\right]\nonumber \\ \nonumber \\
 &\quad+\mbb E\left[\frac{P N_R^2}{\ln(2)^2\sigma^4}\sqrt{\alpha}^{i_{2}-i_{1}}\lVert \mbf G_{i_{1}} \rVert^{2} \left(P\lVert \mbf G_{i_{1}} \rVert^2+P\lVert\tilde{\mbf G}\rVert^2+2\lVert \tilde{\mbf W}\tilde{\mbf Q}\tilde{\mbf G}^H \rVert +2P\lVert\mbf G_{i_{1}}\rVert \lVert\mbf S\rVert          \right) \right]  \nonumber \\ \nonumber \\
 &\overset{(a)}{=}\mbb E \left[\log\det\left( \mbf I_{N_{R}}+\frac{1}{\sigma^2}\tilde{\mbf W}\tilde{\mbf Q} \tilde{\mbf W}^{H} \right)\right]\mbb E\left[
 \log\det\left(\mbf I_{N_{R}}+\frac{1}{\sigma^2}\mbf G_{i_{1}} \tilde{\mbf Q} \mbf G_{i_{1}}^{H}\right)\right]+ \frac{P N_R^2}{\ln(2)^2\sigma^4}\sqrt{\alpha}^{i_{2}-i_{1}} \mbb E\left[\Lambda\left( \mbf G_{i_{1}},\mbf S,\tilde{\mbf G},\tilde{\mbf W}\right)\right] \nonumber \\ \nonumber \\
 &\overset{(b)}{=}\mbb E\left[
 \log\det\left(\mbf I_{N_{R}}+\frac{1}{\sigma^2}\tilde{\mbf G}\tilde{\mbf Q} \tilde{\mbf G} ^{H}\right)\right]^2+\frac{P N_R^2}{\ln(2)^2\sigma^4}\sqrt{\alpha}^{i_{2}-i_{1}}  \mbb E\left[\Lambda\left( \mbf G_{i_{1}},\mbf S,\tilde{\mbf G},\tilde{\mbf W}\right)\right], 
 \nonumber
\end{align}
where $(a)$ follows because $\tilde{\mbf W}$ and $\mbf G_{i_{1}}$ are independent, $(b)$ follows because $\tilde{\mbf G}$ has the same distribution as $\tilde{\mbf W}$ and $\mbf G_{i_{1}}$ since $   \vect\left(\tilde{\mbf W}\right)\sim \mc N_{\mbb C}\left( \bs{0}_{N_RN_T},\mbf I_{N_RN_T}\right)$
and since from Lemma \ref{distributiongain}, we know that $   \vect\left(\mbf G_{i_{1}}\right)\sim \mc N_{\mbb C}\left( \bs{0}_{N_RN_T},\mbf I_{N_RN_T}\right).$ 

Now, from Lemma \ref{meanbounded} in the Appendix we know that $ \mbb E\left[\Lambda\left( \mbf G_{i_{1}},\mbf S,\tilde{\mbf G},\tilde{\mbf W}\right)\right]$ is bounded from above by some $c>0$. Therefore it follows that for $i_1<i_2$
\begin{align}
&\mbb E \left[ \log\det(\mbf I_{N_{R}}+\frac{1}{\sigma^2}\mbf G_{i_{2}} \tilde{\mbf Q} \mbf G_{i_{2}}^{H})\log\det(\mbf I_{N_{R}}+\frac{1}{\sigma^2}\mbf G_{i_{1}} \tilde{\mbf Q} \mbf G_{i_{1}}^{H})\right] \nonumber \\
&\leq \mbb E\left[
 \log\det\left(\mbf I_{N_{R}}+\frac{1}{\sigma^2}\tilde{\mbf G}\tilde{\mbf Q} \tilde{\mbf G} ^{H}\right)\right]^2+\frac{PN_{R}^2}{\ln(2)^2\sigma^4} c\sqrt{\alpha}^{i_{2}-i_{1}}
 \nonumber \\
 &=\mbb E\left[
 \log\det\left(\mbf I_{N_{R}}+\frac{1}{\sigma^2}\tilde{\mbf G}\tilde{\mbf Q} \tilde{\mbf G} ^{H}\right)\right]^2+c'\sqrt{\alpha}^{i_{2}-i_{1}}, \nonumber
\end{align}
for some $c>0,$ where $c'= \frac{PN_{R}^2 c}{\ln(2)^2\sigma^4}>0.$ This completes the proof of Lemma \ref{meanproducti1i2}.
 \end{proof}
Now that we proved Lemma \ref{meanproducti1i2}, we will use that lemma to prove that \begin{align} \frac{1}{n^2}\sum_{i=1}^{n}\sum_{k=1}^{i-1} m(i,k)+\frac{1}{n^2}\sum_{i=1}^{n}\sum_{k=i+1}^{n} m(i,k)
    &\leq \frac{2c'}{n(1-\sqrt{\alpha})}. \nonumber
\end{align}

We recall that for any $i,k\in\{1,\hdots n\}$ with $i\neq k,$
 \begin{align}
m(i,k) &=\mbb E \left[\log\det(\mbf I_{N_{R}}+\frac{1}{\sigma^2}\mbf G_i \tilde{\mbf Q} \mbf G_i^{H})\log\det(\mbf I_{N_{R}}+\frac{1}{\sigma^2}\mbf G_k \tilde{\mbf Q} \mbf G_k^{H})\right]-\mbb E \left[ \log\det\left(\mbf I_{N_{R}}+\frac{1}{\sigma^2}\tilde{\mbf G} \tilde{\mbf Q} \tilde{\mbf G}^H\right)\right]^2. 
\nonumber \end{align}
\underline{If $k<i:$}  Lemma \ref{meanproducti1i2} implies that
\begin{align}
        m(i,k) \leq c'\sqrt{\alpha}^{i-k}.
\nonumber \end{align} \underline{If $i<k:$} Lemma \ref{meanproducti1i2} implies that
\begin{align}
        m(i,k) \leq c'\sqrt{\alpha}^{k-i}.
\nonumber \end{align}

Therefore, we have

\begin{align}
    &\frac{1}{n^2}\sum_{i=1}^{n}\sum_{k=1}^{i-1} m(i,k) \nonumber \\
    &\leq \frac{c'}{n^2} \sum_{i=1}^{n}\sum_{k=1}^{i-1}\sqrt{\alpha}^{i-k} \nonumber \\ 
    &\leq \frac{c'}{n(1-\sqrt{\alpha})},\label{boundfirstpart}
\end{align}
because by Lemma \ref{sumterms1} in the Appendix, we have  for any $0<\alpha<1$
\begin{align}
    \sum_{i=1}^{n} \sum_{k=1}^{i-1} \alpha^{i-k}\leq \frac{n}{1-\alpha}.\nonumber
 \end{align}
Furthermore, it holds that 
\begin{align}
\frac{1}{n^2}\sum_{i=1}^{n}\sum_{k=i+1}^{n} m(i,k)&\leq \frac{c'}{n^2}\sum_{i=1}^{n}\sum_{k=i+1}^{n} \sqrt{\alpha}^{k-i} \nonumber \\  \nonumber \\
&\leq \frac{c'}{n(1-\sqrt{\alpha})} \label{boundsecondpart}
\end{align}
\color{black}
because by Lemma \ref{sumtermsiplus1} in the Appendix, we have for any $0<\alpha<1$
\begin{align}
    \sum_{i=1}^{n} \sum_{k=i+1}^{n} \alpha^{k-i}\leq\frac{n}{1-\alpha}. \nonumber
    \end{align}

From \eqref{boundfirstpart} and \eqref{boundsecondpart}, we deduce that

\begin{align} \frac{1}{n^2}\sum_{i=1}^{n}\sum_{k=1}^{i-1} m(i,k)+\frac{1}{n^2}\sum_{i=1}^{n}\sum_{k=i+1}^{n} m(i,k)
    &\leq \frac{2c'}{n(1-\sqrt{\alpha})}. \nonumber
\end{align}
\subsection{Upper-bound for $\frac{1}{n^2}\sum_{i=1}^{n}\mbb E\left[ i(\bs{T}_i;\bs{Z}_i,\mbf G_i)^2\right]$}
We are going to prove that
\begin{align}
    \frac{1}{n^2}\sum_{i=1}^{n}\mbb E\left[ i(\bs{T}_i;\bs{Z}_i,\mbf G_i)^2\right]\leq \frac{c''}{n} \nonumber
\end{align}
for some $c''>0.$
It suffices to show that $\mbb E\left[ i(\bs{T}_i;\bs{Z}_i,\mbf G_i)^2\right]$ is bounded from above for $i=1,\hdots,n.$ 
Recall that
\begin{align}
\bs{Z}_{i}=\mbf G_i\bs{T}_{i}+\bs{\xi}_{i}, \quad i=1\hdots n \nonumber
\end{align}
and that for $i=1\hdots n$
$$\bs{\xi}_{i} \sim \mathcal{N}_{\mathbb{C}}\left(\mbf 0_{N_{R}},\sigma^2\mbf I_{N_{R}}\right).$$
By Lemma \ref{infdensityi} in the Appendix, we know that for $i=1,\hdots, n$
\begin{align}
    &i(\bs{T}_i;\bs{Z}_i,\mbf G_i) \nonumber \\ 
    &=\log\det(\mbf I_{N_{R}}+\frac{1}{\sigma^2}\mbf G_i \tilde{\mbf Q} \mbf G_i^{H})-\frac{1}{\ln(2)\sigma^2}\left(\bs{Z}_i-\mbf G_i\bs{T}_i \right)^{H}\left(\bs{Z}_i-\mbf G_i \bs{T}_i\right)+\frac{1}{\ln(2)\sigma^2}\bs{Z}_i^{H} \left(\mbf I_{N_{R}}+\frac{1}{\sigma^2}\mbf G \tilde{\mbf Q} \mbf G^{H}  \right)^{-1}\bs{Z}_i. 
\nonumber \end{align}
We have
\begin{align} 
&\lvert i(\bs{T}_i;\bs{Z}_i,\mbf G_i)\rvert \nonumber \\ \nonumber \\
    &= \Bigg\lvert \log\det(\mbf I_{N_{R}}+\frac{1}{\sigma^2}\mbf G_i \tilde{\mbf Q} \mbf G_i^{H})-\frac{1}{\ln(2)\sigma^2}\left(\bs{Z}_i-\mbf G_i\bs{T}_i \right)^{H}\left(\bs{Z}_i-\mbf G_i \bs{T}_i\right)+\frac{1}{\ln(2)\sigma^2}\bs{Z}_i^{H} \left(\mbf I_{N_{R}}+\frac{1}{\sigma^2}\mbf G \tilde{\mbf Q} \mbf G^{H}  \right)^{-1}\bs{Z}_i \Bigg\rvert \nonumber \\ \nonumber \\
    &\leq \Bigg\vert\log\det(\mbf I_{N_{R}}+\frac{1}{\sigma^2}\mbf G_i \tilde{\mbf Q} \mbf G_i^{H})+\frac{1}{\ln(2)\sigma^2}\bs{Z}_i^{H} \left(\mbf I_{N_{R}}+\frac{1}{\sigma^2}\mbf G_i \tilde{\mbf Q} \mbf G_i^{H}  \right)^{-1}\bs{Z}_i \Bigg\rvert+\frac{1}{\ln(2)\sigma^2}\Bigg\lvert\left(\bs{Z}_i-\mbf G_i\bs{T}_i \right)^{H}\left(\bs{Z}_i-\mbf G_i \bs{T}_i\right)\Bigg\rvert \nonumber \\ \nonumber \\
    &=\Bigg\lvert\log\det(\mbf I_{N_{R}}+\frac{1}{\sigma^2}\mbf G_i \tilde{\mbf Q} \mbf G_i^{H})+\frac{1}{\ln(2)\sigma^2}\bs{Z}_i^{H} \left(\mbf I_{N_{R}}+\frac{1}{\sigma^2}\mbf G_i \tilde{\mbf Q} \mbf G_i^{H}  \right)^{-1}\bs{Z}_i \Bigg\rvert+\frac{1}{\ln(2)\sigma^2}\lvert\bs{\xi}_i^{H}\bs{\xi}_i\rvert \nonumber \\ \nonumber\\
    &=\Bigg\lvert\log\det(\mbf I_{N_{R}}+\frac{1}{\sigma^2}\mbf G_i \tilde{\mbf Q} \mbf G_i^{H})+\frac{1}{\ln(2)\sigma^2}\bs{Z}_i^{H} \left(\mbf I_{N_{R}}+\frac{1}{\sigma^2}\mbf G_i \tilde{\mbf Q} \mbf G_i^{H}  \right)^{-1}\bs{Z}_i \Bigg\rvert+\frac{1}{\ln(2)\sigma^2}\lVert\bs{\xi}_i\rVert^{2}.
\nonumber \end{align}
Since $i(\bs{T}_i;\bs{Z}_i,\mbf G_i) \in \mbb R, $ we have
 \begin{align}
    &i(\bs{T}_i;\bs{Z}_i,\mbf G_i)^{2} \nonumber \\ \nonumber \\
    &=\lvert i(\bs{T}_i;\bs{Z}_i,\mbf G_i)\rvert^{2} \nonumber \\ \nonumber \\
    &\leq \left(\Bigg\lvert\log\det(\mbf I_{N_{R}}+\frac{1}{\sigma^2}\mbf G_i \tilde{\mbf Q} \mbf G_i^{H})+\frac{1}{\ln(2)\sigma^2}\bs{Z}_i^{H} \left(\mbf I_{N_{R}}+\frac{1}{\sigma^2}\mbf G_i \tilde{\mbf Q} \mbf G_i^{H}  \right)^{-1}\bs{Z}_i \Bigg\rvert+\frac{1}{\ln(2)\sigma^2}\lVert\bs{\xi}_i\rVert^{2} \right)^2 \nonumber \\ \nonumber \\
    &\overset{(a)}{\leq} 2 \left(\Bigg\lvert\log\det(\mbf I_{N_{R}}+\frac{1}{\sigma^2}\mbf G_i \tilde{\mbf Q} \mbf G_i^{H})+\frac{1}{\ln(2)\sigma^2}\bs{Z}_i^{H} \left(\mbf I_{N_{R}}+\frac{1}{\sigma^2}\mbf G_i \tilde{\mbf Q} \mbf G_i^{H}  \right)^{-1}\bs{Z}_i \Bigg\rvert\right)^2+\frac{2}{\ln(2)^2\sigma^4}\lVert\bs{\xi}_i\rVert^{4} \nonumber \\ \nonumber \\
    &\overset{(b)}{\leq} 4\left[\log\det(\mbf I_{N_{R}}+\frac{1}{\sigma^2}\mbf G_i \tilde{\mbf Q} \mbf G_i^{H})\right]^2+\frac{4}{\ln(2)^2\sigma^4}\left(\bs{Z}_i^{H} (\mbf I_{N_{R}}+\frac{1}{\sigma^2}\mbf G_i \tilde{\mbf Q} \mbf G_i^{H} )^{-1}\bs{Z}_i\right)^2+\frac{2}{\ln(2)^2\sigma^4}\lVert\bs{\xi}_i\rVert^{4} \nonumber\\ \nonumber \\
    &\leq  4\left[\log\det(\mbf I_{N_{R}}+\frac{1}{\sigma^2}\mbf G_i \tilde{\mbf Q} \mbf G_i^{H})\right]^2+\frac{4}{\ln(2)^2\sigma^4}\lVert (\mbf I_{N_{R}}+\frac{1}{\sigma^2}\mbf G_i \tilde{\mbf Q} \mbf G_i^{H})^{-1} \rVert^2 \lVert \bs{Z}_i \rVert^{4} +\frac{2}{\ln(2)^2\sigma^4}\lVert\bs{\xi}_i\rVert^{4}\nonumber \\ \nonumber \\
    &\overset{(c)}{\leq}  4\left[\log\det(\mbf I_{N_{R}}+\frac{1}{\sigma^2}\mbf G_i \tilde{\mbf Q} \mbf G_i^{H})\right]^2+\frac{4}{\ln(2)^2\sigma^4} \lVert \bs{Z}_i \rVert^{4}+\frac{2}{\ln(2)^2\sigma^4}\lVert\bs{\xi}_i\rVert^{4} \nonumber \\ \nonumber \\
    &= 4\left[\log\det(\mbf I_{N_{R}}+\frac{1}{\sigma^2}\mbf G_i \tilde{\mbf Q} \mbf G_i^{H})\right]^2+\frac{4}{\ln(2)^2\sigma^4} \lVert \mbf G_i \bs{T}_i+\bs{\xi}_i\rVert^{4}+\frac{2}{\ln(2)^2\sigma^4}\lVert\bs{\xi}_i\rVert^{4} \nonumber \\ \nonumber \\
    &\leq4\left[\log\det(\mbf I_{N_{R}}+\frac{1}{\sigma^2}\mbf G_i \tilde{\mbf Q} \mbf G_i^{H})\right]^2+\frac{4}{\ln(2)^2\sigma^4} \left(\lVert \mbf G_i\rVert \lVert\bs{T}_i\rVert+\lVert\bs{\xi}_i\rVert\right)^{4}+\frac{2}{\ln(2)^2\sigma^4}\lVert\bs{\xi}_i\rVert^{4} \nonumber \\ \nonumber \\
    &\overset{(d)}{\leq} 4\left[\log\det(\mbf I_{N_{R}}+\frac{1}{\sigma^2}\lVert\mbf G_i \tilde{\mbf Q} \mbf G_i^{H}\rVert \mbf I_{N_{R}})\right]^2+\frac{4}{\ln(2)^2\sigma^4} \left(2 \lVert \mbf G_i \rVert^2\lVert \bs{T}_i \rVert^2+2\lVert \bs{\xi}_i \rVert^2  \right)^{2}+\frac{2}{\ln(2)^2\sigma^4}\lVert\bs{\xi}_i\rVert^{4} \nonumber \\ \nonumber \\
    &\overset{(e)}{\leq} \frac{4}{\ln(2)^2} \left[\mathrm{tr}\left( \frac{1}{\sigma^2}\lVert\mbf G_i \tilde{\mbf Q} \mbf G_i^{H}\rVert \mbf I_{N_{R}}\right)\right]^{2}+\frac{32}{\ln(2)^2\sigma^4} \left(\lVert \mbf G_i \rVert^4 \lVert \bs{T}_i \Vert^{4}+\lVert \bs{\xi}_{i}\rVert^4  \right)+\frac{2}{\ln(2)^2\sigma^4}\lVert\bs{\xi}_i\rVert^{4} \nonumber \\ \nonumber \\
    &\leq \frac{4}{\ln(2)^2\sigma^4}N_{R}^{2}\lVert \mbf G_i \rVert^{4}\lVert \tilde{\mbf Q} \rVert^2+\frac{32}{\ln(2)^2\sigma^4} \left(\lVert \mbf G_i \rVert^4 \lVert \bs{T}_i \Vert^{4}+\lVert \bs{\xi}_{i}\rVert^4  \right)+\frac{2}{\ln(2)^2\sigma^4}\lVert\bs{\xi}_i\rVert^{4} \nonumber \\ \nonumber \\
    &\overset{(f)}{\leq} \frac{4}{\ln(2)^2\sigma^4}N_{R}^{2}P^2\lVert \mbf G_i \rVert^{4}+\frac{32}{\ln(2)^2\sigma^4} \left(\lVert \mbf G_i \rVert^4 \lVert \bs{T}_i \Vert^{4}+\lVert \bs{\xi}_{i}\rVert^4  \right)+\frac{2}{\ln(2)^2\sigma^4}\lVert\bs{\xi}_i\rVert^{4},
\nonumber \end{align}
where $(a)(b)$ follow because for $K_{1},K_{2}\geq 0,$ $(K_{1}+K_{2})^2\leq 2K_{1}^2+2K_{2}^2,$ $(c)$ follows because $\lVert (\mbf I_{N_{R}}+\frac{1}{\sigma^2}\mbf G_i \tilde{\mbf Q} \mbf G_i^{H})^{-1} \rVert=\frac{1}{\lambda_{\min}(\mbf I_{N_{R}}+\frac{1}{\sigma^2}\mbf G_i \tilde{\mbf Q} \mbf G_i^{H})}\leq 1$, $(d)$ follows because $\mbf A \preceq \lVert \mbf A \rVert\mbf I_{n}$ for any Hermitian $\mbf A \in \mbb C^{n\times n}$ (by Lemma \ref{normposdef} in the Appendix) ,  $(e)$ follows because $\ln\det(\mbf I_{n}+\mbf A) \leq \mathrm{tr}(\mbf A)$ for $\mbf A$ positive semi-definite and because for $K_{1},K_{2}\geq 0,$ $(K_{1}+K_{2})^{2}\leq 2K_{1}^2+2K_{2}^2$ and $(f)$ follows because $\lVert \tilde{\mbf Q} \rVert=\lambda_{\max}(\tilde{\mbf Q})\leq \text{tr}(\tilde{\mbf Q})\leq P.$
This implies using the fact that $\mbf G_i$ and $\bs{T}_i$ are independent that
\begin{align}
  \mbb E \left[ i(\bs{T}_i;\bs{Z}_i,\mbf G_i)^2\right] 
  &\leq \frac{4P^2}{\ln(2)^2\sigma^4}N_{R}^{2}\mbb E \left[\lVert \mbf G_i \rVert^{4}\right]+\frac{32}{\ln(2)^2\sigma^4} \left(\mbb E \left[\lVert \mbf G_i \rVert^4\right] \mbb E\left[\lVert \bs{T}_i \Vert^{4}\right]+\mbb E\left[\lVert \bs{\xi}_{i}\rVert^4\right]  \right)+\frac{2}{\ln(2)^2\sigma^4}\mbb E\left[\lVert\bs{\xi}_i\rVert^{4}\right] \nonumber \\
  &\leq \frac{4P^2}{\ln(2)^2\sigma^4}N_{R}^{2}c_1+\frac{16}{\ln(2)^2\sigma^4} \left(c_1 c_2+c_3  \right)+\frac{2}{\ln(2)^2\sigma^4}c_3\nonumber \\
&=c'', \nonumber 
\end{align}
 for some $c_1,c_2,c_3>0,$ where we used that $\mbb E \left[\lVert \mbf G_i \rVert^4\right]$ is bounded from above (by Lemma \ref{boundedmatrixnorm} in the Appendix) and that $\mbb E\left[\lVert \bs{T}_i \Vert^{4}\right]$ and $\mbb E\left[\lVert \bs{\xi}_{i}\rVert^4\right]$ are both bounded from above (by Lemma \ref{boundedvectornorm} in the Appendix) and where $c''>0.$
 
As a result, we have
\begin{align}
    \frac{1}{n^2}\sum_{i=1}^{n}\mbb E\left[ i(\bs{T}_i;\bs{Z}_i,\mbf G_i)^2\right]\leq \frac{c''}{n}.\nonumber
\end{align}
To summarize, we have proved that
\begin{itemize}
    \item   $\frac{1}{n^2}\sum_{i=1}^{n}\sum_{k=1}^{i-1} m(i,k)+\frac{1}{n^2}\sum_{i=1}^{n}\sum_{k=i+1}^{n} m(i,k)
    \leq \frac{2c'}{n(1-\sqrt{\alpha})}$ \\~\\
    \item $\frac{1}{n^2}\sum_{i=1}^{n}\mbb E\left[ i(\bs{T}_i;\bs{Z}_i,\mbf G_i)^2\right]\leq \frac{c''}{n}$
\end{itemize}

Now, from \eqref{sumterms}, we know that
\begin{align}
     &\mathrm{var}\left(\frac{i(\bs{T}^n;\bs{Z}^n,\mbf G^n)}{n} \right)\nonumber  \nonumber \\
     &\leq \frac{1}{n^2}\sum_{i=1}^{n}\sum_{k=1}^{i-1} m(i,k)+\frac{1}{n^2}\sum_{i=1}^{n}\sum_{k=i+1}^{n} m(i,k)+\frac{1}{n^2}\sum_{i=1}^{n}\mbb E\left[ i(\bs{T}_i;\bs{Z}_i,\mbf G_i)^2\right]. \nonumber
\end{align}

To conclude, it follows that
\begin{align}
    \mathrm{var}\left(\frac{i(\bs{T}^n;\bs{Z}^n,\mbf G^n)}{n} \right)&\leq \frac{2c'}{n(1-\sqrt{\alpha})}+ \frac{c''}{n} \nonumber \\
    &=\kappa(n),
\nonumber \end{align}
where $\underset{n\rightarrow \infty}{\lim} \kappa(n)=0.$  This completes the proof of Lemma \ref{boundvariance}.
\section{Conclusion}
In this paper, we studied the problem of message transmission over time-varying MIMO first-order Gauss-Markov Rayleigh fading channels with average power constraint and with CSIR, as an example of infinite-state Markov fading channels. The novelty of our work lies in establishing a single-letter characterization of the channel capacity.
As a future work, it would be interesting to study the capacity of time-varying MIMO Rayleigh fading channels when a higher-order Gauss-Markov model is used to describe the channel variations over the time.
\label{conclusion}
\appendix
\subsection{Auxiliary Lemmas}
\begin{lemma}
\label{normposdef}
For any Hermitian matrix $\mbf A \in \mbb C^{n\times n}$, the matrix
$$\lVert \mbf A \rVert \mbf I_{n} - \mbf A $$
is positive semi-definite.
\end{lemma}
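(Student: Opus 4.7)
\textbf{Proof proposal for Lemma \ref{normposdef}.}
The plan is to reduce the claim to a statement about eigenvalues via the spectral theorem. Since $\mbf A \in \mbb C^{n\times n}$ is Hermitian, the spectral theorem provides a unitary $\mbf U$ and a real diagonal matrix $\Lambda = \mathrm{diag}(\lambda_1,\ldots,\lambda_n)$ such that $\mbf A = \mbf U \Lambda \mbf U^H$, where the $\lambda_i$ are the (real) eigenvalues of $\mbf A$.

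The key identity is that for a Hermitian matrix, the operator norm with respect to the Euclidean norm coincides with the spectral radius: $\lVert \mbf A\rVert = \max_{i} |\lambda_i|$. This follows because the singular values of a Hermitian matrix are the absolute values of its eigenvalues, and $\lVert \mbf A \rVert$ equals the largest singular value. In particular, $\lVert \mbf A \rVert \geq \lambda_i$ for every $i = 1,\ldots,n$.

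Next I would write
\begin{align}
\lVert \mbf A \rVert \mbf I_n - \mbf A &= \mbf U \bigl( \lVert \mbf A\rVert \mbf I_n - \Lambda \bigr) \mbf U^H \nonumber \\
&= \mbf U\, \mathrm{diag}\bigl(\lVert \mbf A \rVert - \lambda_1, \ldots, \lVert \mbf A \rVert - \lambda_n\bigr)\, \mbf U^H, \nonumber
\end{align}
using that $\mbf I_n = \mbf U \mbf I_n \mbf U^H$. Each diagonal entry $\lVert \mbf A \rVert - \lambda_i$ is nonnegative by the observation above, so the middle factor is a positive semi-definite diagonal matrix. A unitary congruence preserves positive semi-definiteness, which yields the claim.

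There is essentially no obstacle here: the lemma is a textbook fact, and the only subtlety worth flagging is making explicit that $\lVert\cdot\rVert$ denotes the operator norm (as fixed in the Notation paragraph), so that the identification $\lVert \mbf A\rVert = \max_i|\lambda_i|$ is available. One could alternatively argue directly by testing against an arbitrary vector $\bs x$: $\bs x^H \mbf A \bs x \leq \lVert \mbf A \rVert \lVert \bs x\rVert^2 = \bs x^H (\lVert \mbf A\rVert \mbf I_n) \bs x$, which follows from the Rayleigh quotient bound for Hermitian $\mbf A$; this avoids invoking the spectral decomposition explicitly but relies on the same underlying fact.
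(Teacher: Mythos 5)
Your primary argument via the spectral theorem is correct, and the observation that $\lVert \mbf A\rVert = \max_i|\lambda_i|$ for Hermitian $\mbf A$ is exactly the right fact to invoke. However, it is a genuinely different route from the paper's: the paper does not diagonalize at all. It argues directly against an arbitrary test vector $\bs x$, using only that $\bs x^H \mbf A \bs x$ is real for Hermitian $\mbf A$ and the elementary operator-norm bound $\bs x^H \mbf A \bs x \leq |\bs x^H \mbf A \bs x| \leq \lVert \mbf A\rVert \lVert \bs x\rVert^2$, which immediately gives $\bs x^H(\lVert \mbf A\rVert \mbf I_n - \mbf A)\bs x \geq 0$. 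This is in fact the alternative you sketch in your closing remark. The spectral route buys you a slightly stronger conclusion (an explicit diagonalization of $\lVert\mbf A\rVert\mbf I_n - \mbf A$ and identification of its eigenvalues), at the cost of invoking the spectral theorem and the characterization of $\lVert\cdot\rVert$ as the spectral radius. The paper's direct test-vector argument is more elementary, needing only Cauchy--Schwarz-level facts about the operator norm and no eigendecomposition machinery; for a lemma whose only role is to supply $\mbf A \preceq \lVert \mbf A\rVert \mbf I_n$ in later bounds, that minimalism is arguably preferable.
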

\color{black}
\begin{proof}
Since $\mbf A$ is Hermitian, we know that for any $\bs{x}\in \mbb C^{n},$ $\bs{x}^{H}\mbf A \bs{x}$ is real.
Therefore, for any $\bs{x}\in \mbb C^{n} \setminus \{\bs{0}\},$
\begin{align}
 \bs{x}^{H}\mbf A\bs{x} &\leq  \lvert \bs{x}^{H}\mbf A\bs{x}\rvert \nonumber \\  \nonumber \\
&\leq \lVert \mbf A \rVert \lVert  \bs{x} \rVert^{2}. \nonumber
\end{align}
It follows that
\begin{align}
\bs{x}^{H}\left( \lVert \mbf A \rVert \mbf I_{n}-\mbf A \right) \bs{x} 
&=\bs{x}^{H}\lVert\mbf A\rVert\mbf I_{n}\bs{x}-\bs{x}^{H}\mbf A\bs{x} \nonumber \\  \nonumber \\
&=\lVert \mbf A \rVert \lVert  \bs{x} \rVert^{2}-\bs{x}^{H}\mbf A\bs{x}\nonumber\\  \nonumber \\
&\geq 0.
\nonumber \end{align}
\end{proof}
\color{black}
\begin{lemma}
\label{bounddeterminantsum}
Let $\mbf A \in \mbb C^{n\times n}$ be any positive-definite Hermitian matrix with $\lambda_{\min}(\mbf A)$ being its smallest eigenvalue and let $\mbf B \in \mbb C^{n \times n}$ be any positive semi-definite matrix, then
\begin{align}
    \log\det(\mbf A+\mbf B)\leq\log\det(\mbf A)+\log\det(\mbf I_{n}+\frac{1}{\lambda_{\min}(\mbf A)}\mbf B).
\nonumber \end{align}
\end{lemma}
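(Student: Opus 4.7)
The plan is to reduce the inequality to a statement about eigenvalues, and then to use the Courant--Fischer min--max characterization; the subtle point is that the naive positive semi-definite bound $\mbf A^{-1/2}\mbf B\mbf A^{-1/2} \preceq \frac{1}{\lambda_{\min}(\mbf A)} \mbf B$ simply does \emph{not} hold in general (easy $2\times 2$ counterexamples exist), so one cannot apply Loewner-order monotonicity directly to the matrices themselves. First, since $\mbf A$ is positive definite, its Hermitian square root $\mbf A^{1/2}$ is invertible, and the identity $\mbf A+\mbf B = \mbf A^{1/2}(\mbf I_n + \mbf A^{-1/2}\mbf B\mbf A^{-1/2})\mbf A^{1/2}$ together with multiplicativity of $\det$ yields
\begin{align}
\log\det(\mbf A+\mbf B) = \log\det(\mbf A) + \log\det\bigl(\mbf I_n + \mbf A^{-1/2}\mbf B\mbf A^{-1/2}\bigr). \nonumber
\end{align}
Setting $\mbf M := \mbf A^{-1/2}\mbf B\mbf A^{-1/2}$ and $\mbf N := \frac{1}{\lambda_{\min}(\mbf A)}\mbf B$ (both Hermitian and positive semi-definite), the claim reduces to $\det(\mbf I_n+\mbf M) \le \det(\mbf I_n+\mbf N)$.

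The key technical step is to establish the eigenvalue comparison $\lambda_i(\mbf M)\le \lambda_i(\mbf N)$ for every $i$, with eigenvalues listed in decreasing order. I would invoke Courant--Fischer,
\begin{align}
\lambda_i(\mbf M) = \max_{\dim S = i}\;\min_{\bs{x}\in S\setminus\{\bs{0}\}} \frac{\bs{x}^H\mbf A^{-1/2}\mbf B\mbf A^{-1/2}\bs{x}}{\bs{x}^H\bs{x}}, \nonumber
\end{align}
and apply the substitution $\bs{y}=\mbf A^{-1/2}\bs{x}$. Since $\mbf A^{-1/2}$ is invertible, $S'=\mbf A^{-1/2}S$ still ranges over all $i$-dimensional subspaces, and the Rayleigh quotient becomes the generalized one $\frac{\bs{y}^H\mbf B\bs{y}}{\bs{y}^H\mbf A\bs{y}}$. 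Using $\bs{y}^H\mbf A\bs{y}\ge \lambda_{\min}(\mbf A)\|\bs{y}\|^2$ together with $\bs{y}^H\mbf B\bs{y}\ge 0$, this quotient is bounded \emph{pointwise} by $\frac{1}{\lambda_{\min}(\mbf A)}\frac{\bs{y}^H\mbf B\bs{y}}{\|\bs{y}\|^2}$. Pointwise domination survives the inner $\min$ and then the outer $\max$ over subspaces, which yields $\lambda_i(\mbf M)\le \frac{\lambda_i(\mbf B)}{\lambda_{\min}(\mbf A)}=\lambda_i(\mbf N)$.

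To conclude, since $\mbf M$ and $\mbf N$ are positive semi-definite, $\det(\mbf I_n+\mbf M)=\prod_{i=1}^n(1+\lambda_i(\mbf M))$ and similarly for $\mbf N$; the eigenvalue comparison then gives $\det(\mbf I_n+\mbf M)\le \det(\mbf I_n+\mbf N)$ termwise, and taking $\log$ followed by adding $\log\det(\mbf A)$ recovers the lemma. The main obstacle, as signaled at the outset, is precisely this eigenvalue comparison: one has to resist the temptation to work with the Loewner ordering on the matrices themselves and instead exploit the fact that the eigenvalues of $\mbf M$ are governed by a \emph{generalized} Rayleigh quotient with $\mbf A$ in the denominator, which is what allows the factor $\lambda_{\min}(\mbf A)^{-1}$ to appear.
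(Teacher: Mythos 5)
Your proof is correct, but it follows a genuinely different route from the paper's. You conjugate by $\mbf A^{-1/2}$, obtaining $\mbf M=\mbf A^{-1/2}\mbf B\mbf A^{-1/2}$, correctly observe that $\mbf M\preceq\lambda_{\min}(\mbf A)^{-1}\mbf B$ can fail in the Loewner order, and therefore fall back on Courant--Fischer to establish the eigenvalue-by-eigenvalue comparison $\lambda_i(\mbf M)\le\lambda_i(\mbf B)/\lambda_{\min}(\mbf A)$, from which the determinant bound follows termwise. The paper instead symmetrizes the other way: using $\det(\mbf I_n+\mbf X\mbf Y)=\det(\mbf I_n+\mbf Y\mbf X)$ it writes $\det(\mbf I_n+\mbf A^{-1}\mbf B)=\det(\mbf I_n+\mbf B^{1/2}\mbf A^{-1}\mbf B^{1/2})$, and then the elementary fact $\mbf A^{-1}\preceq\lambda_{\min}(\mbf A)^{-1}\mbf I_n$, being a statement about a single matrix, transfers verbatim through conjugation by $\mbf B^{1/2}$ to give $\mbf B^{1/2}\mbf A^{-1}\mbf B^{1/2}\preceq\lambda_{\min}(\mbf A)^{-1}\mbf B$ in the Loewner order, after which monotonicity of $\det$ closes the argument. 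Note that $\mbf B^{1/2}\mbf A^{-1}\mbf B^{1/2}$ and your $\mbf M$ are similar (both are similar to $\mbf A^{-1}\mbf B$), so they have identical spectra; the difference is purely that the paper's choice of conjugation is the one for which the Loewner comparison holds as a matrix inequality, whereas yours requires the min--max machinery to see the same spectral conclusion. Your route is heavier but does yield a slightly stronger intermediate fact (the full ordered-eigenvalue domination); the paper's is shorter and stays within elementary positive-semidefinite manipulations. Both are valid.
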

\color{black}
\begin{proof}
\begin{align}
\det(\mbf A+\mbf B)&=\det(\mbf A)\det(\mbf I_{n}+\mbf A^{-1}\mbf B) \nonumber \\
&=\det(\mbf A)\det(\mbf I_{n}+\mbf B\mbf A^{-1}) \nonumber\\
&=\det(\mbf A)\det(\mbf I_{n}+\mbf B^{\frac{1}{2}}\mbf B^{\frac{1}{2}}\mbf A^{-1}) \nonumber\\
&= \det(\mbf A)\det(\mbf I_{n}+\mbf B^{\frac{1}{2}}\mbf A^{-1}\mbf B^{\frac{1}{2}}) \nonumber \\
&\overset{(a)}{\leq} \det(\mbf A)\det(\mbf I_{n}+\mbf B^{\frac{1}{2}}\frac{1}{\lambda_{\min}(\mbf A)}\mbf I_{n}\mbf B^{\frac{1}{2}}) \nonumber \\
&= \det(\mbf A)\det(\mbf I_{n}+ \frac{1}{\lambda_{\min}(\mbf A)}\mbf I_{n}\mbf B) \nonumber \\
&=\det(\mbf A)\det(\mbf I_{n}+\frac{1}{\lambda_{\min}(\mbf A)}\mbf B),
\nonumber \end{align}
where $(a)$ follows from the following properties:
\begin{enumerate}
\item For any positive semi-definite Hermitian matrices $\mbf M_1$ and $\mbf M_2,$ if $\mbf M_1- \mbf M_2$ is Hermitian positive semi-definite then
$$\det\left(\mbf M_1\right)\geq \det(\mbf M_2). $$
\item For any positive definite Hermitian matrix $\mbf M \in \mbb C^{n\times n},$ with minimum eigenvalue $\lambda_{\min}(\mbf M),$ it holds that
    $$\mbf M-\lambda_{\min}(\mbf M)\mbf I$$ is positive semi-definite, 

\item For any positive definite Hermitian matrices $\mbf M$ and $\tilde{\mbf M},$ if $\mbf M- \tilde{\mbf M}$ is positive semi-definite then
$\tilde{\mbf M}^{-1}-\mbf M^{-1}$  is positive semi-definite.
\end{enumerate}
Therefore, it follows that
\begin{align}
      \log\det(\mbf A+\mbf B)\leq\log\det(\mbf A)+\log\det(\mbf I_{n}+\frac{1}{\lambda_{\min}(\mbf A)}\mbf B).\nonumber
\end{align}
\end{proof}
\color{black}

\color{black}
\begin{lemma}
$\mbb E \left[\Lambda\left( \mbf G_{i_{1}},\mbf S,\tilde{\mbf G},\tilde{\mbf W}\right)\right]\leq c $ for some $c>0.$
\label{meanbounded}
\end{lemma}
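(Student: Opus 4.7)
The plan is to expand
$$\mbb E\!\left[\Lambda(\mbf G_{i_1},\mbf S,\tilde{\mbf G},\tilde{\mbf W})\right]
=P\,\mbb E\!\left[\lVert \mbf G_{i_1}\rVert^4\right]
+P\,\mbb E\!\left[\lVert \mbf G_{i_1}\rVert^2\lVert \tilde{\mbf G}\rVert^2\right]
+2\,\mbb E\!\left[\lVert \mbf G_{i_1}\rVert^2\lVert \tilde{\mbf W}\tilde{\mbf Q}\tilde{\mbf G}^H\rVert\right]
+2P\,\mbb E\!\left[\lVert \mbf G_{i_1}\rVert^3\lVert \mbf S\rVert\right]$$
by linearity of expectation, and to bound each of the four summands by a constant that is uniform in $n$, $i_1$ and $i_2$. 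The approach relies on three observations.

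First, the independence structure: $\mbf G_{i_1}$ depends only on $\mbf G_1$ and $\mbf W_2,\ldots,\mbf W_{i_1}$, whereas $\mbf S$ depends only on $\mbf W_{i_1+1},\ldots,\mbf W_{i_2}$ and $\tilde{\mbf G}$ is independent of everything else by construction. Consequently $\mbf G_{i_1}$ is independent of each of $\mbf S$, $\tilde{\mbf G}$ and $\tilde{\mbf W}=\mbf S+\sqrt{\alpha}^{i_2-i_1}\tilde{\mbf G}$, which lets me factorize each of the four summands so that the $\lVert \mbf G_{i_1}\rVert$-factor splits off as its own expectation. Second, uniform moment bounds: Lemma \ref{distributiongain} gives $\vect(\mbf G_{i_1})\sim\mc N_{\mbb C}(\bs 0,\mbf I)$, and analogously (as shown in the proof of Lemma \ref{meanproducti1i2}) $\vect(\tilde{\mbf W})\sim\mc N_{\mbb C}(\bs 0,\mbf I)$ and $\vect(\tilde{\mbf G})\sim\mc N_{\mbb C}(\bs 0,\mbf I)$; moreover the variance of each entry of $\mbf S$ is $1-\alpha^{i_2-i_1}\leq 1$. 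Thus Lemma \ref{boundedmatrixnorm} in the Appendix supplies uniform bounds on $\mbb E[\lVert \mbf G_{i_1}\rVert^k]$, $\mbb E[\lVert \tilde{\mbf G}\rVert^k]$, $\mbb E[\lVert \tilde{\mbf W}\rVert^k]$ and $\mbb E[\lVert \mbf S\rVert^k]$ for each $k\in\{1,2,3,4\}$. Third, for the mixed term I combine operator-norm submultiplicativity $\lVert \tilde{\mbf W}\tilde{\mbf Q}\tilde{\mbf G}^H\rVert\leq \lVert \tilde{\mbf W}\rVert\,\lVert \tilde{\mbf Q}\rVert\,\lVert \tilde{\mbf G}\rVert\leq P\,\lVert \tilde{\mbf W}\rVert\,\lVert \tilde{\mbf G}\rVert$ with the Cauchy-Schwarz inequality to reduce $\mbb E[\lVert \tilde{\mbf W}\rVert\,\lVert \tilde{\mbf G}\rVert]$ to $\sqrt{\mbb E[\lVert \tilde{\mbf W}\rVert^2]\,\mbb E[\lVert \tilde{\mbf G}\rVert^2]}$, which is finite, and the last term is treated identically via Cauchy-Schwarz and independence of $\mbf G_{i_1}$ and $\mbf S$.

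Concretely I would write, term by term, $P\,\mbb E[\lVert \mbf G_{i_1}\rVert^4]\leq Pc_1$, $P\,\mbb E[\lVert \mbf G_{i_1}\rVert^2]\,\mbb E[\lVert \tilde{\mbf G}\rVert^2]\leq Pc_2^2$, $2P\,\mbb E[\lVert \mbf G_{i_1}\rVert^2]\sqrt{\mbb E[\lVert \tilde{\mbf W}\rVert^2]\mbb E[\lVert \tilde{\mbf G}\rVert^2]}\leq 2Pc_2\sqrt{c_3 c_2}$, and $2P\,\mbb E[\lVert \mbf G_{i_1}\rVert^3]\,\mbb E[\lVert \mbf S\rVert]\leq 2Pc_4c_5$, using the constants supplied by Lemma \ref{boundedmatrixnorm}. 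Summing these four deterministic constants yields $c$.

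I do not expect a genuine obstacle here: the only subtle point is making sure that the bound on the moments of $\mbf S$ is uniform in $i_1,i_2$, but this follows directly because $\vect(\mbf S)$ is complex Gaussian with covariance matrix $(1-\alpha^{i_2-i_1})\mbf I_{N_RN_T}\preceq \mbf I_{N_RN_T}$, so its $\lVert\cdot\rVert$-moments are dominated by those of a standard complex Gaussian matrix. Everything else is a routine application of linearity, independence, Cauchy-Schwarz and submultiplicativity of the operator norm.
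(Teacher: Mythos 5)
Your proposal is correct, and it matches the paper's overall strategy (expand by linearity, factor out $\lVert\mbf G_{i_1}\rVert$-moments by independence, then supply uniform bounds on each remaining expectation via Lemma~\ref{boundedmatrixnorm}), but the two non-trivial sub-bounds are handled by a slightly different route. For $\mbb E[\lVert\tilde{\mbf W}\tilde{\mbf Q}\tilde{\mbf G}^H\rVert]$, the paper first substitutes $\tilde{\mbf W}=\mbf S+\sqrt{\alpha}^{i_2-i_1}\tilde{\mbf G}$, applies the triangle inequality, and exploits that $\mbf S$ and $\tilde{\mbf G}$ \emph{are} independent so the cross term factors; you instead bound $\lVert\tilde{\mbf W}\tilde{\mbf Q}\tilde{\mbf G}^H\rVert\leq P\lVert\tilde{\mbf W}\rVert\lVert\tilde{\mbf G}\rVert$ and then invoke Cauchy--Schwarz on the (dependent) pair $\lVert\tilde{\mbf W}\rVert,\lVert\tilde{\mbf G}\rVert$ — which is valid and actually cleaner since you never need to expose the decomposition of $\tilde{\mbf W}$. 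For $\mbb E[\lVert\mbf S\rVert]$, the paper applies the triangle inequality to the defining sum and evaluates a geometric series to get a bound of the form $\frac{\sqrt{1-\alpha}}{1-\sqrt{\alpha}}\mbb E[\lVert\tilde{\mbf G}\rVert]$; you instead observe that $\vect(\mbf S)\sim\mc N_{\mbb C}(\bs 0,(1-\alpha^{i_2-i_1})\mbf I_{N_RN_T})$ and that $\mbf S$ is therefore a scalar multiple (by a factor $\leq 1$) of a standard complex Gaussian matrix, so its norm moments are dominated by those covered directly by Lemma~\ref{boundedmatrixnorm}. Your scaling argument is more direct and more general, though note that Lemma~\ref{boundedmatrixnorm} as stated applies only to standard Gaussian matrices, so the dominance step you sketch is exactly what is needed to make the appeal rigorous — you flagged this point yourself, and it is correct. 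One small cosmetic remark: for the last summand, independence of $\mbf G_{i_1}$ and $\mbf S$ already gives $\mbb E[\lVert\mbf G_{i_1}\rVert^3\lVert\mbf S\rVert]=\mbb E[\lVert\mbf G_{i_1}\rVert^3]\mbb E[\lVert\mbf S\rVert]$, so the Cauchy--Schwarz you mention there is not needed; but this is harmless.
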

\begin{proof}
Recall that 
\begin{align}
    \Lambda\left( \mbf G_{i_{1}},\mbf S,\tilde{\mbf G},\tilde{\mbf W}\right) =\lVert \mbf G_{i_{1}} \rVert^{2} \left(P\lVert \mbf G_{i_{1}} \rVert^2+P\lVert\tilde{\mbf G}\rVert^2+2\lVert \tilde{\mbf W}\tilde{\mbf Q}\tilde{\mbf G}^H \rVert +2P\lVert\mbf G_{i_{1}}\rVert \lVert\mbf S\rVert          \right), \nonumber
\end{align}
where $$\tilde{\mbf W}=\mbf S+\sqrt{\alpha}^{i_{2}-i_{1}} \tilde{\mbf G}$$
with $i_1<i_2,$ where
 $\mbf S=\sqrt{1-\alpha}\sum_{j=i_{1}+1}^{i_{2}} \sqrt{\alpha}^{i_{2}-j} \mbf W_j,$
and where $\tilde{\mbf G}$ is a random matrix  with i.i.d. entries, independent of $\mbf G_1$ and $\mbf W_i, \ i=2,\hdots n$ such that $\vect(\tilde{\mbf G})\sim \mc N_{\mbb C}\left(\bs{0}_{N_RN_T},\mbf I_{N_RN_T} \right).$

We have
\begin{align}
    &\mbb E \left[\Lambda\left( \mbf G_{i_{1}},\mbf S,\tilde{\mbf G},\tilde{\mbf W}\right)\right] \nonumber\\ \nonumber \\
    &=\mbb E \left[\lVert \mbf G_{i_{1}} \rVert^{2} \left(P\lVert \mbf G_{i_{1}} \rVert^2+P\lVert\tilde{\mbf G}\rVert^2+2\lVert \tilde{\mbf W}\tilde{\mbf Q}\tilde{\mbf G}^H \rVert +2P\lVert\mbf G_{i_{1}}\rVert \lVert\mbf S\rVert          \right)\right] \nonumber \\ \nonumber \\
    &= P\mbb E \left[ \lVert \mbf G_{i_{1}} \rVert^{4}\right]+P\mbb E\left[\lVert \mbf G_{i_{1}} \rVert^{2}\lVert \tilde{\mbf G} \rVert^2 \right]+2\mbb E \left[\lVert \mbf G_{i_{1}} \rVert^{2}\lVert \tilde{\mbf W}\tilde{\mbf Q}\tilde{\mbf G}^H \rVert \right]+2P\mbb E \left[\lVert \mbf G_{i_{1}} \rVert^{3}\lVert \mbf S \rVert \right] \nonumber \\ \nonumber \\
    &= P\mbb E \left[ \lVert \mbf G_{i_{1}} \rVert^{4}\right]+P\mbb E\left[\lVert \mbf G_{i_{1}} \rVert^{2}\right]\mbb E \left[\lVert \tilde{\mbf G} \rVert^2 \right]+2\mbb E \left[\lVert \mbf G_{i_{1}} \rVert^{2}\right]\mbb E\left[\lVert \tilde{\mbf W}\tilde{\mbf Q}\tilde{\mbf G}^H \rVert \right]+2P\mbb E \left[\lVert \mbf G_{i_{1}} \rVert^{3}\right]\mbb E\left[\lVert \mbf S \rVert \right] \nonumber \\ \nonumber\\
    &= P\mbb E \left[ \lVert \tilde{\mbf G} \rVert^{4}\right]+P\mbb E \left[\lVert \tilde{\mbf G} \rVert^2 \right]^2+2\mbb E \left[\lVert \tilde{\mbf G} \rVert^{2}\right]\mbb E\left[\lVert \tilde{\mbf W}\tilde{\mbf Q}\tilde{\mbf G}^H \rVert \right]+2P\mbb E \left[\lVert \tilde{\mbf G} \rVert^{3}\right]\mbb E\left[\lVert \mbf S \rVert \right],
    \nonumber
\end{align}
where we used that $\mbf G_{i_{1}}$ is independent of $(\tilde{\mbf W},\tilde{\mbf G})$ and that $\tilde{\mbf G}$ has the same distribution as $\mbf G_{i_{1}}.$

By Lemma \ref{boundedmatrixnorm}, we know that $\mbb E\left[ \lVert \tilde{\mbf G} \rVert^{\ell}\right]<\infty$ for all integers $\ell.$ Therefore, to complete the proof, we have to show that $\mbb E\left[\lVert \tilde{\mbf W}\tilde{\mbf Q}\tilde{\mbf G}^H \rVert \right]$ and $\mbb E\left[\lVert \mbf S\rVert\right]$ are both bounded from above. 

It holds that
\begin{align}
 &\mbb E\left[\lVert \tilde{\mbf W}\tilde{\mbf Q}\tilde{\mbf G}^H \rVert \right] \nonumber \\ \nonumber \\
 &=\mbb E\left[\Bigg\lVert \left(\mbf S+\sqrt{\alpha}^{i_{2}-i_{1}} \tilde{\mbf G}\right)\tilde{\mbf Q}\tilde{\mbf G}^H \Bigg\rVert \right] \nonumber \\ \nonumber \\
 &=\mbb E\left[\Bigg\lVert \mbf S\tilde{\mbf Q}\tilde{\mbf G}^H+\sqrt{\alpha}^{i_{2}-i_{1}} \tilde{\mbf G}\tilde{\mbf Q}\tilde{\mbf G}^H \Bigg\rVert \right] \nonumber \\ \nonumber \\
 &\leq \mbb E\left[\lVert \mbf S\tilde{\mbf Q}\tilde{\mbf G}^H\rVert+\sqrt{\alpha}^{i_{2}-i_{1}} \lVert\tilde{\mbf G}\tilde{\mbf Q}\tilde{\mbf G}^H \rVert \right] \nonumber \\ \nonumber \\
 &\leq \lVert \tilde{\mbf Q}\rVert \mbb E\left[\lVert \mbf S\rVert \lVert\tilde{\mbf G}\rVert+ \lVert\tilde{\mbf G}\rVert^2 \right] \nonumber \\ \nonumber \\ 
 &\leq P \mbb E\left[\lVert \mbf S\rVert \lVert\tilde{\mbf G}\rVert+ \lVert\tilde{\mbf G}\rVert^2 \right] \nonumber \\ \nonumber \\
 &=P\left( \mbb E\left[\lVert \mbf S\rVert\right]\mbb E\left[\lVert \tilde{\mbf G}\rVert\right]+\mbb E\left[\lVert \tilde{\mbf G}\rVert^2 \right]     \right), \nonumber
\end{align}
where we used that $\tilde{\mbf G}$ and $\mbf S$ are independent in the last step, since $\tilde{\mbf G}$ and $\mbf W_{i_{1}+1},\hdots \mbf W_{i_{2}}$ are independent.

Therefore, to complete the proof, it suffices to show that $\mbb E\left[ \lVert \mbf S \rVert \right]$ is bounded from above.

We have
\begin{align}
&\mbb E \left[ \lVert \mbf S \rVert \right] \nonumber \\
&=\mbb E \left[ \Bigg\lVert \sqrt{1-\alpha}\sum_{j=i_{1}+1}^{i_{2}} \sqrt{\alpha}^{i_{2}-j} \mbf W_j \Bigg\rVert \right] \nonumber \\
&\leq \mbb E\left[ \sqrt{1-\alpha}\sum_{j=i_{1}+1}^{i_{2}}\sqrt{\alpha}^{i_{2}-j}\lVert \mbf W_{j}\rVert  \right]\nonumber \\
&=\sqrt{1-\alpha}\sum_{j=i_{1}+1}^{i_{2}}\sqrt{\alpha}^{i_{2}-j}\mbb E\left[\lVert \mbf W_{j}\rVert\right] \nonumber \\
&=\sqrt{1-\alpha}\mbb E\left[\lVert \tilde{\mbf G} \rVert\right]\sum_{j=i_{1}+1}^{i_{2}}\sqrt{\alpha}^{i_{2}-j} \nonumber \\
&=\frac{\sqrt{1-\alpha}}{1-\sqrt{\alpha}}\left(1-\sqrt{\alpha}^{i_{2}-i_{1}}\right)\mbb E\left[\lVert \tilde{\mbf G} \rVert\right] \nonumber \\
&\leq \frac{\sqrt{1-\alpha}}{1-\sqrt{\alpha}}\mbb E\left[\lVert \tilde{\mbf G} \rVert\right], \nonumber
\end{align}
where we used that
\begin{align}
\sum_{j=i_{1}+1}^{i_{2}}\sqrt{\alpha}^{i_{2}-j}&=\sqrt{\alpha}^{i_2}\sum_{j=i_{1}+1}^{i_{2}}\left(\frac{1}{\sqrt{\alpha}}\right)^{j} \nonumber \\
&=\sqrt{\alpha}^{i_{2}}\left(\frac{1}{\sqrt{\alpha}}\right)^{i_{1}+1}\frac{1-\left(\frac{1}{\sqrt{\alpha}}\right)^{i_{2}-i_{1}}}{1-\frac{1}{\sqrt{\alpha}}} \nonumber \\
&= \frac{\sqrt{\alpha}^{i_{2}-i_{1}}-1}{\sqrt{\alpha}-1}
\nonumber \\
&=\frac{1-\sqrt{\alpha}^{i_{2}-i_{1}}}{1-\sqrt{\alpha}}
\nonumber 
\end{align}

and that $\tilde{\mbf G}$ has the same distribution as each of the $\mbf W_i.$ 
Therefore, $\mbb E \left[ \lVert \mbf S \rVert \right]$ is bounded from above.
This proves that $\mbb E \left[\Lambda\left( \mbf G_{i_{1}},\mbf S,\tilde{\mbf G},\tilde{\mbf W}\right)\right]\leq c$ for some $c>0.$
\end{proof}

\begin{comment}
\begin{lemma}
For any matrix $\mbf M$ with entries $\mbf M_{i,j}, i=1\hdots m, j=1\hdots n,$ it holds that
\begin{align}
    \lVert \mbf M \rVert \geq \underset{i,j}{\max } \lvert \mbf M_{i,j} \rvert.
\nonumber \end{align}
\end{lemma}
\begin{proof}
Define $(\ell,k)$ such that $|M_{\ell,k}|=\underset{i,j}{\max } \lvert \mbf M_{i,j} \rvert.$ Let $\bs{x} \in \mbb C^n$ such that the $k$-th entry is equal to $1$ and all other entries are equal to $0.$
It holds that
\begin{align}
    \lVert \mbf A \rVert &\geq \lVert \mbf A\bs{x} \rVert \nonumber \\
    &=\sqrt{\sum_{i=1}^{m} \lvert \sum_{j=1}^{n} \mbf M_{i,j}\bs{x}_{j} \rvert^2} \nonumber\\
    &\geq \sqrt{\lvert \sum_{j=1}^{n} \mbf M_{\ell,j}\bs{x}_{j} \rvert^2}\nonumber\\
    &=\lvert M_{\ell,k} \rvert \\
    &=\underset{i,j}{\max } \lvert \mbf M_{i,j} \rvert.
\nonumber \end{align}
\end{proof}
\end{comment}
\color{black} 
\begin{comment}
\begin{lemma}
\label{normposdef}
For any Hermitian matrix $\mbf A \in \mbb C^{n\times n}$, the matrix
$$\lVert A \rVert \mbf I_{n} - \mbf A $$
is positive semi-definite.
\end{lemma}
\color{black}
\begin{proof}
Since $\mbf A$ is Hermitian, it holds that for any $\bs{x}\in \mbb C^{n},$ $\bs{x}^{H}\mbf A \bs{x}$ is real.
Therefore, for any $\bs{x}\in \mbb C^{n} \setminus \{\bs{0}\},$
\begin{align}
 \bs{x}^{H}\mbf A\bs{x} &\leq  \lvert \bs{x}^{H}\mbf A\bs{x}\rvert \nonumber \\
&\leq \lVert \mbf A \rVert \lVert  \bs{x} \rVert^{2}. \nonumber
\end{align}
It follows that
\begin{align}
\bs{x}^{H}\left( \lVert \mbf A \rVert \mbf I_{n}-\mbf A \right) \bs{x} 
&=\bs{x}^{H}\lVert\mbf A\rVert\mbf I_{n}\bs{x}-\bs{x}^{H}\mbf A\bs{x} \nonumber \\
&=\lVert \mbf A \rVert \lVert  \bs{x} \rVert^{2}-\bs{x}^{H}\mbf A\bs{x}\nonumber\\
&\geq 0.
\nonumber \end{align}
\end{proof}
\color{black}
\end{comment}
\color{black}
\begin{comment}
\begin{lemma}
\label{boundedcovnorm}
For any $\tilde{\mbf Q} \in \mc Q_{(P,N)},$  it holds that
\begin{align}
\lVert \tilde{\mbf Q} \rVert \leq  P.
\nonumber \end{align}
\end{lemma}
\color{black}
\begin{proof}
Since $\tilde{\mbf Q}$ is Hermitian positive semi-definite, it follows that
$\lVert \tilde{\mbf Q} \rVert$ is equal to the maximum eigenvalue of $\tilde{\mbf Q}$.
Therefore, it holds that
\begin{align}
    \lVert \tilde{\mbf Q} \rVert \leq \mathrm{tr}(\tilde{\mbf Q}) \leq P.
\nonumber \end{align}
\end{proof}
\color{black}
\end{comment}
 
\begin{lemma}
\label{boundedmatrixnorm}
Let $\mbf G \in \mbb C^{N_{R} \times N_{T}}$ a random matrix with i.i.d. entries such that 
\begin{align}
    \vect\left(\mbf G\right)\sim \mc N_{\mbb C}\left(\bs{0}_{N_{R}N_{T}},\mbf I_{N_{R}N_{T}}\right).
\nonumber \end{align}
Then for all integers $\ell\geq  0,$ it holds that  $\mbb E \left[ \lVert \mbf G \rVert^{\ell}\right]<\infty.$ 
\end{lemma}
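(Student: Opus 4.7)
The plan is to dominate the operator norm by the Frobenius norm and then exploit the explicit distribution of the squared Frobenius norm of $\mbf G$ to verify that all its moments are finite.

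First, I would recall the standard norm inequality $\lVert \mbf A \rVert \leq \lVert \mbf A \rVert_F$ valid for any $\mbf A \in \mbb C^{m \times n}$, where $\lVert \mbf A \rVert_F = \sqrt{\sum_{i,j}|\mbf A_{i,j}|^2}$ is the Frobenius norm. This follows since $\lVert \mbf A\rVert^2 = \lambda_{\max}(\mbf A^H \mbf A) \leq \mathrm{tr}(\mbf A^H \mbf A) = \lVert \mbf A \rVert_F^2$, because the trace of a positive semi-definite matrix is the sum of its eigenvalues. Consequently, for every integer $\ell \geq 0$,
\begin{align}
\mbb E\bigl[\lVert \mbf G\rVert^\ell\bigr] \leq \mbb E\bigl[\lVert \mbf G \rVert_F^\ell\bigr] = \mbb E\Bigl[\bigl(\sum_{i,j}|\mbf G_{i,j}|^2\bigr)^{\ell/2}\Bigr]. \nonumber
\end{align}

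Second, I would identify the distribution of the right-hand side. Since $\vect(\mbf G) \sim \mc N_{\mbb C}(\bs 0_{N_R N_T}, \mbf I_{N_R N_T})$, the entries $\mbf G_{i,j}$ are i.i.d. standard complex Gaussian, so each $|\mbf G_{i,j}|^2$ is exponentially distributed with mean $1$; equivalently, $2|\mbf G_{i,j}|^2$ is chi-squared with two degrees of freedom. By independence, the random variable $Y := 2 \lVert \mbf G\rVert_F^2 = 2\sum_{i,j}|\mbf G_{i,j}|^2$ is chi-squared with $k = 2 N_R N_T$ degrees of freedom.

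Third, I would conclude via the moment generating function of the chi-squared distribution. Since $\mbb E[\exp(tY)] = (1 - 2t)^{-k/2}$ is finite for every $t < 1/2$, the random variable $Y$ has finite moments of every (including non-integer) positive order. Hence $\mbb E[Y^{\ell/2}] < \infty$, and the above display yields $\mbb E[\lVert \mbf G\rVert^\ell] \leq 2^{-\ell/2}\mbb E[Y^{\ell/2}] < \infty$, proving the claim.

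The argument is essentially routine, so there is no real obstacle; the only minor subtlety is that $\ell$ may be odd, in which case $\ell/2$ is not an integer. This is handled painlessly because the finiteness of the moment generating function of a chi-squared random variable near the origin guarantees finiteness of all positive moments, integer or not (equivalently, one could apply Jensen's inequality to the concave function $x \mapsto x^{\ell/2}$ for the odd case, or simply dominate $Y^{\ell/2} \leq 1 + Y^{\lceil \ell/2\rceil}$ and use integer moments of the chi-squared law).
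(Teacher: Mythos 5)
Your proof is correct, and it takes a genuinely different and considerably more elementary route than the paper. You dominate the operator norm by the Frobenius norm via $\lVert\mbf G\rVert^2 = \lambda_{\max}(\mbf G^H\mbf G) \le \mathrm{tr}(\mbf G^H\mbf G) = \lVert\mbf G\rVert_F^2$, then observe that $2\lVert\mbf G\rVert_F^2$ is $\chi^2_{2N_RN_T}$-distributed and hence has finite moments of every positive order (integer or not), which settles the odd-$\ell$ case automatically. The paper instead splits $\mbf G$ into real and imaginary parts, invokes the $\epsilon$-net covering argument for the unit spheres from Vershynin's book, bounds the operator norm by a supremum over a finite net of bilinear forms $\langle\mbf G_R\bs t,\bs z\rangle$, and then controls the $\ell$-th moment of these forms via Gaussian moments; that machinery is designed to yield sharp high-probability tail bounds and dimension-free constants, none of which is needed here since the lemma asks only for finiteness. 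Your Frobenius-norm route is shorter, avoids the real/imaginary decomposition and the covering-number estimates entirely, and gives an explicit closed form for the bound in terms of Gamma functions if one wants it; the only thing the paper's $\epsilon$-net approach buys in principle is tighter constants with respect to $N_R, N_T$, which is irrelevant for this statement.
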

\color{black}
\begin{proof}
we will use the $\epsilon$-net argument.
   \begin{definition} \cite{HDP}
    Let (T, d) be a metric space. Let $\mathcal{K} \subset T.$ Let $\epsilon > 0$.  A subset $ \mathcal{N} \subseteq  \mathcal{K}$ is called an $\epsilon$-net of $\mathcal{K}$ if every point in $\mathcal{K}$ is within distance $\epsilon$  of some point of $\mathcal{N},$ i.e
    \begin{align}
        \forall \bs{x} \in \mathcal{K} \ \exists \bs{x}_0 \in \mathcal{N}:d(\bs{x},\bs{x}_0) \leq \epsilon. \nonumber
    \end{align}
    \label{defepsilonnet}
   \end{definition}
   \begin{definition}\cite{HDP}
   The smallest possible cardinality of an $\epsilon$-net of $\mathcal{K}$ is called the covering number of $\mathcal{K}$ and is denoted by $\mathcal{N}(k,d,\epsilon).$
   \label{definitioncoveringnumber}
   \end{definition}
Let $\epsilon \in (0,\frac{1}{2}).$
It holds that

\begin{align}
    \lVert \mbf G \rVert \leq \lVert \mbf G_R \rVert+\lVert \mbf G_I \rVert, \nonumber
\end{align}

where $\mbf G_R=\text{Re}(\mbf G) \in \mbb R^{N_{R}\times N_{T}}$ and $\mbf G_I=\text{Im}(\mbf G)\in \mbb R^{N_{R}\times N_{T}}$ contain the real and imaginary parts of the matrix $\mbf G$, respectively.
Therefore, it follows that
\begin{align}
     \lVert \mbf G \rVert^{\ell} &\leq \left(\lVert \mbf G_R \rVert+\lVert \mbf G_I \rVert\right)^{\ell} \nonumber \\
     &\leq 2^{\ell-1}\left( \lVert \mbf G_R \rVert^{\ell}+\lVert \mbf G_I \rVert^{\ell} \right),
\nonumber \end{align}
where we used that for any integer $\ell$, and for any positive real numbers $a$ and $b,$ we have $(a+b)^{\ell}\leq 2^{\ell-1}(a^\ell+b^\ell)$ (see Lemma \ref{upperboundproperty} below).
This yields
\begin{align}
\mbb E\left[ \lVert \mbf G \rVert^{\ell}  \right] \leq 2^{\ell-1}\left(\mbb E\left[ \lVert \mbf G_R \rVert^{\ell}\right]+\left[\lVert \mbf G_I \rVert^{\ell}\right]  \right) \label{lthmoment}
\end{align}
It has been shown in \cite{HDP} that
the covering number for the unit Euclidean sphere $\mathcal{S}^{n-1}$ satisfies for $\epsilon>0$ the following:
   \begin{align}
      \mathcal{N}(S^{n-1},\epsilon)\leq \left(\frac{2}{\epsilon}+1\right)^n. 
         \label{upperboundcoveringnumber}
   \end{align}
  
 Furthermore, it has been shown in \cite{HDP} that for any real matrix $\mbf A \in \mbb R^{m\times n}$ and any $\epsilon\in (0,\frac{1}{2})$, for any $\epsilon$-net $\mathcal{N}$ of the sphere $\mathcal{S}^{n-1}$ and any $\epsilon$-net $\mathcal{M}$ of the sphere $\mathcal{S}^{m-1},$ it holds that
   \begin{align}
       \lVert \mbf A \rVert \leq \frac{1}{1-2\epsilon} \underset{\bs{x} \in \mathcal{N}, \bs{y} \in \mathcal{M}}{\sup}\langle \mbf A \bs{x},\bs{y}\rangle. 
       \nonumber
   \end{align}

Let  $\tilde{\mathcal{N}}$ be an $\epsilon$-net  of the sphere $S^{N_{T}-1}$ and  let $\tilde{\mathcal{M}}$ be an $\epsilon$-net $\tilde{\mathcal{M}}$ of the sphere $S^{N_{R}-1}$, both with the smallest possible cardinality.
It follows  for $\epsilon\in(0,\frac{1}{2})$ that
\begin{align}
    \lVert \mbf  G_R \rVert^{\ell} \leq \left(\frac{1}{1-2\epsilon}\right)^{\ell}\left(\underset{\bs{t}\in \tilde{\mathcal{N}},\bs{z}\in\tilde{\mathcal{M}}}{\sup} \langle \mbf G_R \bs{t},\bs{z}\rangle\right)^{\ell} \nonumber
\end{align}
and
\begin{align}
    \lVert \mbf  G_I \rVert^{\ell} \leq \left(\frac{1}{1-2\epsilon}\right)^{\ell}\left(\underset{\bs{t}\in \tilde{\mathcal{N}},\bs{z}\in\tilde{\mathcal{M}}}{\sup} \langle \mbf G_I \bs{t},\bs{z}\rangle\right)^{\ell}. \nonumber
\end{align}

Furthermore, it follows from  \eqref{upperboundcoveringnumber} for $\epsilon \in (0,\frac{1}{2})$   that
\begin{equation}
    \lvert \tilde{\mathcal{N}} \rvert \leq \left(\frac{2}{\epsilon}+1\right)^{N_{T}}=c_1
    \nonumber
\end{equation}
and that
\begin{align}
    \lvert \tilde{\mathcal{M}} \rvert \leq \left(\frac{2}{\epsilon}+1\right)^{N_{R}}=c_2,
    \nonumber
\end{align}
for some $c_1,c_2>0.$
We have for $\epsilon \in (0,\frac{1}{2})$
\begin{align}
    &\mbb E\left[ \lVert \mbf G_R \rVert^{\ell} \right] \nonumber \\
    &\leq \left(\frac{1}{1-2\epsilon}\right)^{\ell}\mbb E \left[\left(\underset{\bs{t}\in \tilde{\mathcal{N}},\bs{z}\in\tilde{\mathcal{M}}}{\sup} \langle \mbf G_R \bs{t},\bs{z}\rangle\right)^{\ell}\right]\nonumber \\
    &\leq \left(\frac{1}{1-2\epsilon}\right)^{\ell}\mbb E\left[ \left( \sum_{\bs{t}\in\tilde{\mathcal{N}}\bs{z}\in\tilde{\mathcal{M}}} \sum_{j=1}^{N_{R}}\sum_{i=1}^{N_{T}} {(\mbf G_R})_{ji} \bs{t}_i\bs{z}_j\right)^{\ell}
    \right] \nonumber \\
    &\leq \frac{(c_1c_2)^\ell}{(1-2\epsilon)^{\ell}} \mbb E\left[\left(\sum_{j=1}^{N_{R}}\sum_{i=1}^{N_{T}} {(\mbf G_R})_{ji} \right)^{\ell}\right] \nonumber \\
    &<\infty,\nonumber 
\end{align}
 where we used that $S_R=\left(\sum_{j=1}^{N_{R}}\sum_{i=1}^{N_{T}} {(\mbf G_R})_{ji} \right)$ is the sum of independent and identically distributed Gaussian random variables with mean 0 and variance $\frac{1}{2}.$ Therefore $S_R$ is a Gaussian random variable with mean $0$ and variance $\frac{N_RN_T}{2}.$  Therefore the $\ell^{\text{th}}$ moment of $S_R$ is finite. 

Analogously, one can show that $\mbb E\left[ \lVert \mbf G_I \rVert^{\ell} \right]<\infty.$

Thus, we can conclude using \eqref{lthmoment} that 
$\mbb E \left[\lVert \mbf G\rVert^\ell \right]<\infty.$
\end{proof}
\color{black}
\begin{lemma}
\label{upperboundproperty}
For any real numbers $a,b$ and for any integer $\ell \geq 0.$
\begin{align}
\lvert a+b \rvert^{\ell}\leq 2^{\ell-1}\left(\lvert a\rvert^{\ell}+\lvert b \rvert^{\ell}\right).
\nonumber \end{align}
\end{lemma}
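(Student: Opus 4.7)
The plan is to deduce the inequality from the triangle inequality together with the convexity of the map $x \mapsto x^{\ell}$ on $[0,\infty)$ for $\ell \geq 1$. The cases $\ell = 0$ and $\ell = 1$ are handled separately since there the inequality becomes an equality (or reduces to the triangle inequality itself), so the substantive content is the case $\ell \geq 2$.

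First I would observe that by the triangle inequality, $|a+b| \leq |a| + |b|$, and since the function $x \mapsto x^{\ell}$ is nondecreasing on $[0,\infty)$, this yields
\begin{align}
|a+b|^{\ell} \leq (|a| + |b|)^{\ell}. \nonumber
\end{align}
It therefore suffices to prove $(|a| + |b|)^{\ell} \leq 2^{\ell-1}(|a|^{\ell} + |b|^{\ell})$.

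Next I would invoke convexity: for $\ell \geq 1$ the function $\varphi(x) = x^{\ell}$ is convex on $[0,\infty)$, so Jensen's inequality applied to the two nonnegative points $|a|$ and $|b|$ with equal weights gives
\begin{align}
\left(\frac{|a| + |b|}{2}\right)^{\ell} \leq \frac{|a|^{\ell} + |b|^{\ell}}{2}. \nonumber
\end{align}
Multiplying both sides by $2^{\ell}$ yields $(|a| + |b|)^{\ell} \leq 2^{\ell-1}(|a|^{\ell} + |b|^{\ell})$, which when combined with the first displayed inequality completes the proof for $\ell \geq 1$. The case $\ell = 0$ is immediate since both sides equal $1$ (with the convention $0^{0} = 1$ or by restricting to the nondegenerate case). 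Alternatively, one can carry out a short induction on $\ell$ using the binomial theorem to avoid any reference to convexity, but the Jensen argument is the most direct route.

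There is no real obstacle here; the lemma is a standard power-mean inequality and is invoked in the proof of Lemma \ref{boundedmatrixnorm} only to bound $(\lVert \mbf G_R \rVert + \lVert \mbf G_I \rVert)^{\ell}$ by $2^{\ell-1}(\lVert \mbf G_R \rVert^{\ell} + \lVert \mbf G_I \rVert^{\ell})$, so only the case of nonnegative reals and integer $\ell \geq 1$ is actually needed in the paper.
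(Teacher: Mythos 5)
Your proof is correct and follows essentially the same route as the paper: triangle inequality to reduce to nonnegative arguments, then convexity of $x \mapsto x^{\ell}$ on $[0,\infty)$ (Jensen with equal weights) to obtain the factor $2^{\ell-1}$, with the trivial $\ell=0$ case handled separately. You spell out the monotonicity step and remark on the $\ell=1$ case a bit more explicitly, but the argument is the same.
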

\color{black}
\begin{proof}
The statement of the lemma is clear for $\ell=0.$ Now for any integer $\ell\geq 1,$
the function $\Psi(x)=x^\ell$ is convex for $x\geq 0,$ since its second derivative is equal to $\ell(\ell-1)x^{\ell-2}\geq 0.$

Therefore, by using the convexity of $\Psi,$ it follows that
\begin{align}
\Big\vert \frac{a+b}{2} \Big\vert^{\ell} &\leq \left( \frac{\vert a \rvert + \lvert b \rvert}{2}\right)^{\ell} \nonumber \\
&\leq \frac{\lvert a \rvert^{p}+\lvert b \rvert^{\ell}}{2}.
\nonumber 
\end{align}
\end{proof}
\begin{lemma}
\label{sumterms1}
For any $0<\alpha<1$,  it holds that
\begin{align}
    \sum_{i=1}^{n} \sum_{k=1}^{i-1} \alpha^{i-k}\leq \frac{n}{1-\alpha}.
\nonumber \end{align}
\end{lemma}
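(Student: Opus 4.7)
The plan is to handle the double sum by treating the inner sum, for each fixed $i$, as a truncated geometric series that can be uniformly bounded, and then trivially sum the resulting uniform bound over $i$.

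First, I would fix $i \in \{1,\ldots,n\}$ and reindex the inner sum via the substitution $j = i-k$. Since $k$ ranges from $1$ to $i-1$, the new index $j$ ranges from $1$ to $i-1$, giving
$$\sum_{k=1}^{i-1} \alpha^{i-k} = \sum_{j=1}^{i-1} \alpha^{j}.$$
This is a truncated geometric series.

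Next, since $0 < \alpha < 1$, every term of this truncated series is nonnegative, so it is bounded above by the full geometric series:
$$\sum_{j=1}^{i-1} \alpha^{j} \leq \sum_{j=1}^{\infty} \alpha^{j} = \frac{\alpha}{1-\alpha} \leq \frac{1}{1-\alpha}.$$
Crucially, this upper bound is independent of $i$.

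Finally, I would sum this uniform bound over $i = 1,\ldots,n$ to conclude
$$\sum_{i=1}^{n}\sum_{k=1}^{i-1}\alpha^{i-k} \leq \sum_{i=1}^{n}\frac{1}{1-\alpha} = \frac{n}{1-\alpha}.$$
There is no real obstacle here; the estimate is essentially the observation that the geometric series converges and its tail is dominated by the full sum $1/(1-\alpha)$. The companion Lemma used for the complementary sum (over $k \geq i+1$) should follow by the symmetric argument with the substitution $j = k-i$.
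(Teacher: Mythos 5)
Your proof is correct and follows essentially the same route as the paper: bound the inner (truncated geometric) sum uniformly in $i$ and multiply by $n$. The only cosmetic difference is that the paper first evaluates the inner sum exactly as $\frac{\alpha-\alpha^i}{1-\alpha}$ and then drops the negative term, whereas you bound the truncated series directly by the full series $\frac{\alpha}{1-\alpha} \leq \frac{1}{1-\alpha}$, which is a touch cleaner but mathematically the same step.
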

\color{black}
\begin{proof}
We have
\begin{align}
    &\sum_{i=1}^{n} \sum_{k=1}^{i-1} \alpha^{i-k} \nonumber \\
    &=\sum_{i=1}^{n} \alpha^{i} \sum_{k=1}^{i-1}\left(\frac{1}{\alpha}\right)^{k} \nonumber \\
    &=\sum_{i=1}^{n}\alpha^{i}\frac{1}{\alpha}\frac{1-\left(\frac{1}{\alpha}\right)^{i-1}}{1-\frac{1}{\alpha}} \nonumber\\
    &=\sum_{i=1}^{n}\alpha^{i}\frac{1-\left(\frac{1}{\alpha}\right)^{i-1}}{\alpha-1} \nonumber \\
    &=\sum_{i=1}^{n}\frac{\alpha^i-\alpha}{\alpha-1} \nonumber \\
   &=\sum_{i=1}^{n}\frac{\alpha-\alpha^i}{1-\alpha} \nonumber \\
   &=\frac{n\alpha}{1-\alpha}-\sum_{i=1}^{n}\frac{\alpha^i}{1-\alpha} \nonumber \\
   &\leq \frac{n\alpha}{1-\alpha} \nonumber \\
   &\leq \frac{n}{1-\alpha}. \nonumber
\end{align}
\end{proof}
\begin{lemma}
\label{sumtermsiplus1}
For any $0<\alpha<1$ it holds that
\begin{align}
    \sum_{i=1}^{n} \sum_{k=i+1}^{n} \alpha^{k-i}\leq\frac{n}{1-\alpha}.
\nonumber \end{align}
\end{lemma}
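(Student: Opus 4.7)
The plan is to reduce the double sum to a geometric series by carrying out the inner summation first, then bounding it uniformly in $i$. For fixed $i \in \{1,\hdots,n\}$, the substitution $j = k-i$ gives
\[
\sum_{k=i+1}^{n} \alpha^{k-i} \;=\; \sum_{j=1}^{n-i} \alpha^{j} \;=\; \alpha\,\frac{1-\alpha^{\,n-i}}{1-\alpha} \;\leq\; \frac{\alpha}{1-\alpha} \;\leq\; \frac{1}{1-\alpha},
\]
where the last two inequalities use $0 < \alpha < 1$ (so that $1-\alpha^{\,n-i} \leq 1$ and $\alpha \leq 1$). Summing this uniform bound over the $n$ outer indices $i = 1,\hdots,n$ immediately yields $\sum_{i=1}^{n}\sum_{k=i+1}^{n}\alpha^{k-i} \leq \frac{n}{1-\alpha}$, as required.

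An essentially equivalent route would be to exchange the order of summation and relabel: writing the sum as $\sum_{k=2}^{n}\sum_{i=1}^{k-1}\alpha^{k-i}$ and then swapping the names of the two indices reduces the present double sum to the one already bounded in Lemma \ref{sumterms1}, so one may simply invoke that lemma rather than redo the geometric-series computation from scratch.

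I expect no real obstacle here. The argument is elementary; the only mild care needed is to keep the $0 < \alpha < 1$ hypothesis in view at both of the trivial inequalities (to ensure $1 - \alpha^{\,n-i} \leq 1$ when passing from the finite geometric series to its infinite counterpart, and to drop the factor of $\alpha$ when weakening $\frac{n\alpha}{1-\alpha}$ to $\frac{n}{1-\alpha}$). Since $n$ does not enter the per-$i$ bound, the sharpness needed for the application in the variance estimate of Lemma \ref{boundvariance}, namely linear growth in $n$, is automatic.
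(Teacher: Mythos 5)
Your proof is correct and follows essentially the same route as the paper: evaluate the inner geometric sum to get $\alpha\frac{1-\alpha^{n-i}}{1-\alpha}$, bound it uniformly by $\frac{1}{1-\alpha}$, and sum over the $n$ outer indices. The paper's algebra (factoring out $(1/\alpha)^i$ rather than substituting $j=k-i$) is cosmetically different but produces the identical intermediate expression, and your side remark about reducing to Lemma \ref{sumterms1} by swapping the order of summation is also valid.
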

\color{black}
\begin{proof}
We have
\begin{align}
    &\sum_{i=1}^{n} \sum_{k=i+1}^{n} \alpha^{k-i} \nonumber \\
    &=\sum_{i=1}^{n} \left(\frac{1}{\alpha}\right)^{i} \sum_{k=i+1}^{n} \alpha^{k} \nonumber \\
    &=\sum_{i=1}^{n} \left(\frac{1}{\alpha}\right)^{i} \alpha^{i+1} \frac{\left(1-\alpha^{n-i}\right)}{1-\alpha} \nonumber\\
    &=\sum_{i=1}^{n}\frac{\alpha\left(1-\alpha^{n-i}\right)}{1-\alpha} \nonumber \\
    &=\frac{n\alpha}{1-\alpha}-\frac{\alpha^{n+1}}{1-\alpha}\sum_{i=1}^{n}\left(\frac{1}{\alpha}\right)^{i} \nonumber \\
     &\leq \frac{n\alpha}{1-\alpha} \nonumber \\
   &\leq \frac{n}{1-\alpha}. \nonumber
\nonumber \end{align}
\end{proof}
\color{black}
\begin{lemma} 
\label{infdensityi}
$\forall i \in \{1,\hdots,n\}$
\begin{align}
&i(\bs{T}_i;\bs{Z}_i,\mbf G_i) \nonumber \\  \nonumber \\
    &=\log\det(\mbf I_{N_{R}}+\frac{1}{\sigma^2}\mbf G_i \tilde{\mbf Q} \mbf G_i^{H})-\frac{1}{\ln(2)\sigma^2}\left(\bs{Z}_i-\mbf G_i\bs{T}_i \right)^{H}\left(\bs{Z}_i-\mbf G_i \bs{T}_i\right)+\frac{1}{\ln(2)\sigma^2}\bs{Z}_i^{H} \left(\mbf I_{N_{R}}+\frac{1}{\sigma^2}\mbf G_i \tilde{\mbf Q} \mbf G_i^{H}  \right)^{-1}\bs{Z}_i, \nonumber
    \end{align}
    where $\bs{T}_i\sim \mc N_{\mbb C}\left(\bs{0}_{N_T},\tilde{\mbf Q}\right), i=1\hdots n.$
\end{lemma}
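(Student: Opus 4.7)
The plan is to reduce the information density to a ratio of two conditional Gaussian densities and then compute both in closed form.

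First, I will exploit the independence of $\bs{T}_i$ and $\mbf G_i$ (which was assumed as part of the channel model). Writing out the definition of the information density,
\begin{align}
i(\bs{T}_i;\bs{Z}_i,\mbf G_i)
&= \log\frac{p_{\bs{Z}_i,\mbf G_i,\bs{T}_i}(\bs{Z}_i,\mbf G_i,\bs{T}_i)}{p_{\bs{T}_i}(\bs{T}_i)\,p_{\bs{Z}_i,\mbf G_i}(\bs{Z}_i,\mbf G_i)} \nonumber \\
&= \log\frac{p_{\bs{Z}_i\mid\bs{T}_i,\mbf G_i}(\bs{Z}_i\mid\bs{T}_i,\mbf G_i)\,p_{\bs{T}_i}(\bs{T}_i)\,p_{\mbf G_i}(\mbf G_i)}{p_{\bs{T}_i}(\bs{T}_i)\,p_{\bs{Z}_i\mid\mbf G_i}(\bs{Z}_i\mid\mbf G_i)\,p_{\mbf G_i}(\mbf G_i)} \nonumber \\
&= \log\frac{p_{\bs{Z}_i\mid\bs{T}_i,\mbf G_i}(\bs{Z}_i\mid\bs{T}_i,\mbf G_i)}{p_{\bs{Z}_i\mid\mbf G_i}(\bs{Z}_i\mid\mbf G_i)}, \nonumber
\end{align}
so that the problem is reduced to writing down the numerator and denominator explicitly.

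Next I will evaluate each conditional density. Conditioned on $(\bs{T}_i,\mbf G_i)$, the output $\bs{Z}_i = \mbf G_i\bs{T}_i + \bs{\xi}_i$ is simply $\bs{\xi}_i$ shifted by $\mbf G_i\bs{T}_i$, so
\[
p_{\bs{Z}_i\mid\bs{T}_i,\mbf G_i}(\bs{Z}_i\mid\bs{T}_i,\mbf G_i) = \frac{1}{\pi^{N_R}\sigma^{2N_R}}\exp\!\left(-\frac{1}{\sigma^2}(\bs{Z}_i-\mbf G_i\bs{T}_i)^H(\bs{Z}_i-\mbf G_i\bs{T}_i)\right).
\]
Conditioned only on $\mbf G_i$, the vector $\bs{Z}_i$ is a linear combination of two independent circularly symmetric complex Gaussians, hence circularly symmetric complex Gaussian with mean $\bs{0}_{N_R}$ and covariance $\mbf G_i\tilde{\mbf Q}\mbf G_i^H + \sigma^2\mbf I_{N_R}$, giving
\[
p_{\bs{Z}_i\mid\mbf G_i}(\bs{Z}_i\mid\mbf G_i) = \frac{1}{\pi^{N_R}\det(\mbf G_i\tilde{\mbf Q}\mbf G_i^H+\sigma^2\mbf I_{N_R})}\exp\!\left(-\bs{Z}_i^H(\mbf G_i\tilde{\mbf Q}\mbf G_i^H+\sigma^2\mbf I_{N_R})^{-1}\bs{Z}_i\right).
\]

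Finally, I will take logarithms base $2$ (so that a factor $1/\ln(2)$ appears on the natural exponents) and use the two elementary identities
\[
\frac{\det(\mbf G_i\tilde{\mbf Q}\mbf G_i^H+\sigma^2\mbf I_{N_R})}{\det(\sigma^2\mbf I_{N_R})} = \det\!\left(\mbf I_{N_R}+\tfrac{1}{\sigma^2}\mbf G_i\tilde{\mbf Q}\mbf G_i^H\right)
\]
and
\[
(\mbf G_i\tilde{\mbf Q}\mbf G_i^H+\sigma^2\mbf I_{N_R})^{-1} = \tfrac{1}{\sigma^2}\bigl(\mbf I_{N_R}+\tfrac{1}{\sigma^2}\mbf G_i\tilde{\mbf Q}\mbf G_i^H\bigr)^{-1},
\]
to put the expression in the exact form claimed by the lemma. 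There is no real obstacle here; the only care needed is to track the base-$2$ versus natural logarithm conversion and to confirm that the conditional density $p_{\bs{Z}_i\mid\mbf G_i}$ is well defined, which holds because $\mbf G_i\tilde{\mbf Q}\mbf G_i^H+\sigma^2\mbf I_{N_R}$ is strictly positive definite for $\sigma^2>0$ and hence invertible almost surely.
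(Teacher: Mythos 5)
Your proposal is correct and follows essentially the same route as the paper: reduce the information density to $\log\bigl(p_{\bs{Z}_i\mid\mbf G_i,\bs{T}_i}/p_{\bs{Z}_i\mid\mbf G_i}\bigr)$ using the independence of $\bs{T}_i$ and $\mbf G_i$, write out the two conditional complex Gaussian densities, and simplify via the determinant ratio, the inverse-scaling identity, and the base-2 logarithm conversion. The only cosmetic difference is that you present the denominator density with $(\mbf G_i\tilde{\mbf Q}\mbf G_i^H+\sigma^2\mbf I_{N_R})^{-1}$ and then invoke the scaling identity, whereas the paper writes the exponent already in the form $\frac{1}{\sigma^2}(\mbf I_{N_R}+\frac{1}{\sigma^2}\mbf G_i\tilde{\mbf Q}\mbf G_i^H)^{-1}$; these are identical.
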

\color{black}
\begin{proof}
Notice that
\begin{align}
 i(\bs{T}_i;\bs{Z}_i,\mbf G_i) &=\log\left(  \frac{ p_{\bs{Z}_i,\mbf G_i,\bs{T}_i}\left(\bs{Z}_i,\mbf G_i ,\bs{T}_i\right)   }{ p_{\bs{Z}_i,\mbf G_i}\left( \bs{Z}_i,\mbf G_i\right)p_{\bs{T}_i}(\bs{T}_i) }   \right) \nonumber \\  \nonumber \\
 &=\log\left( \frac{ p_{\bs{Z}_i|\mbf G_i,\bs{T}_i}\left(\bs{Z}_i|\mbf G_i ,\bs{T}_i\right)   }{ p_{\bs{Z}_i|\mbf G_i}\left( \bs{Z}_i|\mbf G_i\right)}   \right),\nonumber 
\end{align}
where we used that $\bs{T}_i$ and $\mbf G_i$ are independent.
 
It holds that
\begin{align}
    \bs{Z}_i|\mbf G_i,\bs T_i \sim \mc N_{\mbb C}\left(\mbf G_i \bs{T}_i,\sigma^2\mbf I_{N_R}\right)
\nonumber \end{align}
and that
\begin{align}
    \bs{Z}_i|\mbf G_i \sim \mc N_{\mbb C} \left( \bs{0}_{N_{R}},\mbf G_i \tilde{\mbf Q} \mbf G_i^{H} + \sigma^2 \mbf I_{N_{R}}        \right).
\nonumber \end{align}
It follows that
\begin{align}
&\log \frac {p_{\bs{Z}_i|\mbf G_i,\bs{T}_i}(\bs{Z}_i|\mbf G_i,\bs{T}_i)}{p_{\bs{Z}_i|\mbf G_i}(\bs{Z}_i|\mbf G_i)} \nonumber\\  \nonumber \\
&=\log\left[\frac{\frac{1}{\pi^{N_R}\det(\sigma^2\mbf I_{N_{R}})}\exp\left( \frac{-1}{\sigma^2}\left(\bs{Z}_i-\mbf G_i\bs{T}_i \right)^{H}\left(\bs{Z}_i-\mbf G_i \bs{T}_i\right)  \right)}{\frac{1}{\pi^{N_R}\det(\mbf G_i \tilde{\mbf Q} \mbf G_i^{H} + \sigma^2 \mbf I_{N_{R}})} \exp\left(-\frac{1}{\sigma^2}\bs{Z}_i^{H} \left(\mbf I_{N_{R}}+\frac{1}{\sigma^2}\mbf G_i \tilde{\mbf Q} \mbf G_i^{H}  \right)^{-1}\bs{Z}_i  \right)}\right] \nonumber \\  \nonumber \\
&=\log \left[\frac{\det(\mbf G_i \tilde{\mbf Q} \mbf G_i^{H} + \sigma^2 \mbf I_{N_{R}})}{\det(\sigma^2\mbf I_{N_{R}})}2^{\left( \frac{-1}{\ln(2)\sigma^2}\left(\bs{Z}_i-\mbf G_i\bs{T}_i \right)^{H}\left(\bs{Z}_i-\mbf G_i \bs{T}_i\right) +\frac{1}{\ln(2)\sigma^2}\bs{Z}_i^{H} \left(\mbf I_{N_{R}}+\frac{1}{\sigma^2}\mbf G_i \tilde{\mbf Q} \mbf G_i^{H}  \right)^{-1}\bs{Z}_i  \right)}\right] \nonumber \\  \nonumber \\
&=\log\det(\mbf I_{N_{R}}+\frac{1}{\sigma^2}\mbf G_i \tilde{\mbf Q} \mbf G_i^{H})-\frac{1}{\ln(2)\sigma^2}\left(\bs{Z}_i-\mbf G_i\bs{T}_i \right)^{H}\left(\bs{Z}_i-\mbf G_i \bs{T}_i\right)+\frac{1}{\ln(2)\sigma^2}\bs{Z}_i^{H} \left(\mbf I_{N_{R}}+\frac{1}{\sigma^2}\mbf G_i \tilde{\mbf Q} \mbf G_i^{H}  \right)^{-1}\bs{Z}_i.
\nonumber \end{align}
\end{proof}
\color{black}
\begin{comment}
\begin{lemma}
\label{norminverse}
For any positive semi-definite Hermitian matrix $\mbf A \in \mbb C^{n\times n}$
\begin{align}
    \lVert \left(\mbf I_{n}+\mbf A\right)^{-1} \rVert \leq 1.
\nonumber \end{align}
\end{lemma}
\color{black}
\begin{proof}
Let $\lambda_1,\hdots,\lambda_n$ be the eigenvalues of the positive-definite matrix $\mbf I_{n}+\mbf A.$ We know that for $i=1\hdots n,$ 
$$ \lambda_i \geq 1.$$
Let us now introduce the well-known properties:
\subsubsection{Property a} For any Hermitian positive semi-definite matrix $\mbf A,$ $\lVert \mbf A \rVert$ is equal to the maximum eigenvalue of $\mbf A.$
\subsubsection{Property b} For any invertible $n\times n$ matrix $\mbf A$ with eigenvalues $\lambda_1,\hdots,\lambda_n,$ the eigenvalues of $\mbf A^{-1}$ are equal to $\frac{1}{\lambda_1},\hdots,\frac{1}{\lambda_n},$ respectively.

Now, from Property a and Property b, it follows that
\begin{align}
     \lVert \left(\mbf I_{n}+\mbf A\right)^{-1} \rVert \leq 1.
\nonumber \end{align}
\end{proof}
\color{black}
\end{comment}
\begin{lemma}
\label{boundedvectornorm}
For any random vector $\bs{X}=(X_1,\hdots,X_N)^{T}\sim \mc N_{\mbb C}\left(\bs{0}_{N},\mbf O\right)$ with $\mathrm{tr}(\mbf O)\leq \nu, \nu>0,$  $ \mbb E\left[ \lVert \bs{X}\rVert^{4} \right]$ is bounded from above.
\end{lemma}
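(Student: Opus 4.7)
The plan is to reduce to a canonical form by diagonalizing the covariance matrix and then compute the fourth moment explicitly, bounding it in terms of $\mathrm{tr}(\mbf O)$.

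First I would apply the spectral decomposition to write $\mbf O = \mbf S \Lambda \mbf S^H$, where $\mbf S$ is unitary and $\Lambda = \mathrm{diag}(\mu_1,\hdots,\mu_N)$ with nonnegative eigenvalues satisfying $\sum_{j=1}^N \mu_j = \mathrm{tr}(\mbf O) \leq \nu$. Setting $\mbf V = \mbf S \Lambda^{1/2}$, we have $\mbf O = \mbf V \mbf V^H$, so $\bs{X} \stackrel{d}{=} \mbf V \bs{U}$ for $\bs{U} \sim \mc N_{\mbb C}(\bs{0}_N, \mbf I_N)$. A short computation, analogous to the one already carried out in the proof of Lemma \ref{probnotinconstraint}, gives
\begin{align}
\lVert \bs{X} \rVert^2 = \bs{X}^H \bs{X} = \bs{U}^H \mbf V^H \mbf V \bs{U} = \bs{U}^H \Lambda \bs{U} = \sum_{j=1}^N \mu_j |U_j|^2, \nonumber
\end{align}
since $\mbf V^H \mbf V = \Lambda^{1/2} \mbf S^H \mbf S \Lambda^{1/2} = \Lambda$.

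Next I would square this to obtain
\begin{align}
\lVert \bs{X} \rVert^4 = \sum_{j=1}^N \mu_j^2 |U_j|^4 + 2\sum_{1\leq j<k\leq N} \mu_j \mu_k |U_j|^2 |U_k|^2. \nonumber
\end{align}
Since the $U_j$'s are independent with $U_j \sim \mc N_{\mbb C}(0,1)$, the random variable $2|U_j|^2$ is chi-square distributed with $k=2$ degrees of freedom, giving $\mbb E[|U_j|^2]=1$ and $\mbb E[|U_j|^4]=2$. Taking expectations yields
\begin{align}
\mbb E\left[\lVert \bs{X} \rVert^4\right] = 2\sum_{j=1}^N \mu_j^2 + 2\sum_{j<k} \mu_j \mu_k = \sum_{j=1}^N \mu_j^2 + \Big(\sum_{j=1}^N \mu_j\Big)^2. \nonumber
\end{align}

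Finally, I would bound $\sum_j \mu_j^2 \leq (\sum_j \mu_j)^2$ (valid since all $\mu_j \geq 0$), which yields
\begin{align}
\mbb E\left[\lVert \bs{X} \rVert^4\right] \leq 2\,(\mathrm{tr}(\mbf O))^2 \leq 2\nu^2, \nonumber
\end{align}
completing the proof. There is no real obstacle here; the only minor subtlety is handling the case when $\mbf O$ is singular, but this is dealt with transparently by the spectral decomposition since the construction $\bs{X} \stackrel{d}{=} \mbf V \bs{U}$ works whether or not $\mbf O$ has full rank (the same issue was already addressed carefully in Lemma \ref{probnotinconstraint}).
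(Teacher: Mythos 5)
Your proof is correct, but it takes a genuinely different route from the paper's. You diagonalize $\mbf O = \mbf S \Lambda \mbf S^H$, pass to the independent standard complex Gaussian coordinates $\bs{U}$ via $\bs{X}\stackrel{d}{=}\mbf V\bs{U}$, and then compute the fourth moment exactly, obtaining the explicit bound $\mbb E\left[\lVert \bs{X}\rVert^4\right] = \sum_j \mu_j^2 + \left(\sum_j\mu_j\right)^2 \leq 2\left(\mathrm{tr}(\mbf O)\right)^2 \leq 2\nu^2$. The paper instead stays in the original (possibly correlated) coordinates: it expands $\lVert\bs{X}\rVert^4 = \sum_\ell\sum_{s\neq\ell}|X_\ell|^2|X_s|^2 + \sum_\ell |X_\ell|^4$, applies the Cauchy--Schwarz inequality to the cross terms to get $\mbb E[|X_\ell|^2|X_s|^2] \leq \sqrt{\mbb E[|X_\ell|^4]\mbb E[|X_s|^4]}$, and then notes that each marginal variance $v_\ell$ satisfies $v_\ell \leq \mathrm{tr}(\mbf O) \leq \nu$, so each $\mbb E[|X_\ell|^4]$ is bounded, giving boundedness of the whole sum without an explicit constant. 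Your approach costs a use of the spectral theorem but buys an exact identity and a clean constant, and it handles the singular case transparently (zero eigenvalues simply drop out of the sum); the paper's version is shorter and avoids diagonalization at the cost of Cauchy--Schwarz slack and an unquantified bound. Both are sound; they are just different arguments.
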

\color{black}
\begin{proof}
It holds that
\begin{align}
 \lVert \bs{X} \rVert^{4}
 &=\left(\sum_{\ell=1}^{N} \lvert X_\ell \rvert^2 \right)\left(\sum_{\ell=1}^{N} \lvert X_\ell \rvert^2 \right) \nonumber \\
 &=\sum_{\ell=1}^{N}\sum_{s=1,s\neq \ell}^{N}\lvert X_\ell\rvert^2\lvert X_s\rvert^2+\sum_{\ell=1}^{N} \lvert X_\ell \rvert^4.
\nonumber \end{align}
This yields
\begin{align}
\mbb E \left[ \lVert \bs{X} \rVert^{4}\right]&=\sum_{\ell=1}^{N}\sum_{s=1,s\neq \ell}^{N} \mbb E \left[\lvert X_\ell\rvert^2\lvert X_s\rvert^2\right]+\sum_{\ell=1}^{N} \mbb E \left[\lvert X_\ell \rvert^4\right] \nonumber \\
&\leq \sum_{\ell=1}^{N}\sum_{s=1,s\neq \ell}^{N} \sqrt{\mbb E \left[\lvert X_\ell\rvert^4\right]\mbb E\left[\lvert X_s\rvert^4\right]}+\sum_{\ell=1}^{N} \mbb E \left[\lvert X_\ell \rvert^4\right], \nonumber 
\end{align}
where we used Cauchy Schwarz's inequality.
Since $\mathrm{tr}(\mbf O)\leq \nu,$ it follows that for all $\ell=1,\hdots,N$ 
\begin{align}
    X_\ell \sim \mc N_{\mbb C}(0,v_\ell),
\nonumber \end{align}
where $v_{\ell} \leq \nu.$
Therefore, $\mbb E \left[\lvert X_\ell \rvert^4\right],\ell=1\hdots N,$ is bounded from above and so is $\mbb E \left[ \lVert \bs{X} \rVert^{4}\right].$
\end{proof}
 \color{black}

\vspace{12pt}
\end{document}